\DeclareMathOperator*{\argmax}{argmax}
\DeclareMathOperator*{\argmin}{argmin}
\newtheorem{prop}{Proposition}
\begin{document}
%
\title{Generative Models for Periodicity Detection in Noisy Signals}
%
%
%
%

\author{~Ezekiel~Barnett,
        Olga~Kaiser,
		Jonathan~Masci,
		Ernst~Wit,
		Stephany~Fulda
\IEEEcompsocitemizethanks{
\IEEEcompsocthanksitem E. Barnett , Jataware Corporation.\protect\\
E-mail:{ezekbarnett}@gmail.com
\IEEEcompsocthanksitem O. Kaiser (corresponding author) and J. Masci, NNAISENSE, Lugano, Switzerland.\protect\\
E-mail:{olga, jonathan}@nnaisense.com
\IEEEcompsocthanksitem E. Wit, Institute of Computational Science of Università della svizzera italiana, Lugano, Switzerland.\protect\\
E-mail:{ewit}@usi.ch
\IEEEcompsocthanksitem S. Fulda,Sleep Medicine Unit, Neurocenter of Southern Switzerland, Civic Hospital, EOC, Lugano, Switzerland.\protect\\
E-mail:{stephany.fulda}@gmail.com

}
\thanks{XXX}}

\IEEEtitleabstractindextext{%
\begin{abstract}
We introduce a new periodicity detection algorithm for binary time series of event onsets, the Gaussian Mixture Periodicity Detection Algorithm (GMPDA). The algorithm approaches the periodicity detection problem to infer the parameters of a generative model. We specified two models - the Clock and Random Walk - which describe two different periodic phenomena and provide a generative framework. The algorithm achieved strong results on test cases for single and multiple periodicity detection and varying noise levels. The performance of GMPDA was also evaluated on real data, recorded leg movements during sleep, where GMPDA was able to identify the expected periodicities despite high noise levels. The paper's key contributions are two new models for generating periodic event behavior and the GMPDA algorithm for multiple periodicity detection, which is highly accurate under noise.%
\end{abstract}

\begin{IEEEkeywords}
Multiple Periodicity Detection, noisy data, event time series 
\end{IEEEkeywords}}

\maketitle


%

\section{Introduction}\label{sec:introduction}
From heartbeats to commutes, global climatic oscillations to Facebook log-ons, periodicity - the phenomena that events happen with regular intervals - is omnipresent. Detecting periodicity in time series is often referred to as "the periodicity detection problem." In the case of \textit{event} time series - binary time series, which indicate only the occurrence of some event - the periodicity detection problem has been approached using algorithms such as Fast Fourier Transform (FFT) and Auto-correlation.
And typically has been formulated in the context of a single, stationary periodicity~\cite{welch1967use, priestley1981spectral, madsen2007time, mitsa2010temporal, box2015time}. Several issues, however, have not been sufficiently addressed in the literature such as (i) the development of generative models which appropriately describe "noise" in periodic behavior - in addition to false positives and negatives - as variance in interval length and the challenges of (ii) multiple overlapping periods, and (iii) non-stationary periodic signals.

Existing periodicity detection algorithms are often based on the FFT or the Auto-Correlation Function (ACF) and focus on single period detection. FFT maps a time series to the frequency domain, and one would typically use the inverse of the frequency with the strongest power as the predicted period. FFT is sensitive to sparse data~\cite{junier}, to noise~\cite{Glynn}, and even in the absence of noise, suffers from "spectral leakage" for low-frequencies/large periods~\cite{Vlachos}. Other approaches include the Lomb-Scargle periodogram~\cite{Glynn, Ahdesmaki} - a least-squares-based method for fitting sinusoids (to deal with noise) and unevenly sampled data - that shares the same problems with the FFT. In real applications, however, the hierarchy implied by the FFT may not be appropriate to describe the signal, especially when the periodic signals are random walks with Markov properties and the signal is non-stationary.

ACF-based methods estimate the similarity between sub-sequences of event intervals that have been extracted with a set of lags and periods are selected as the lags, which maximize the ACF. ACF-based methods have been employed for multiple periodicity detection in character series such as texts, for instance, in~\cite{Berberidis, Elfeky}. Auto-correlation detects a large number of candidate "periods" (especially integer multiples), many of which hardly differ from each other, which thereby necessitates a self-selected significance threshold for selecting "true periodicities." ACF also suffers from smaller data. In addition, these methods are typically not designed for finding multiple periodicities in event time series.

Outside of the FFT/ACF framework, E-periodicity \cite{li} is a method for single period detection based on the modulus operation, with a primary focus on periodicity detection in unevenly/under-sampled time series. E-periodicity finds the interval around which the modulus operator of the event time-stamps is minimal. Essentially, the algorithm segments the time series into all possible periodicities within some a-priori specified range. It then overlays the segments and selects the true periodicity as the periodicity that "covers" the most events.

Other methods that have been developed for single period detection include partial periodic patterns and a chi-squared test~\cite{ma}, max sub-pattern tree~\cite{sheng}, and projection~\cite{yang}. Like FFT and ACF-based methods, these methods struggle in the presence of low-frequency periodicities and low sampling rates~\cite{quan}. They are only designed to recognize a single periodic pattern in stationary signals.

Little work has been done in the area of multiple period detection in time-series event data. Most multiple-periodicity methods use a hierarchical extraction method where the frequency with the highest power (in the case of FFT) or most probable periodicity (in the case of ACF) is selected and removed iteratively. FFT is a natural choice to disentangle multiple elements of a complex function and has been used by~\cite{vandongen} in the Lomb-Scargle framework to detect multiple periods with a hierarchical extraction method, but is not designed for event data. Another approach for multiple period detection uses ACF to identify periods, and then subtracts them from the original signal using a comb filter~\cite{parthasarathy}.

An alternative approach in \cite{xu} focuses on computing a set of possible periodicities using intervals between events and selecting the set of periods as intervals above some threshold. However, this threshold is not well defined, and there is no method for dealing with noise. It also struggles with smaller data\cite{xu}.

The authors in~\cite{ghosh2017finding} present a generative model for discrete signals with a Gaussian probability density function (PDF) of the period and a Poisson process for describing the false positive noise events. The presented approach is adaptive to noise and a changing periodicity but cannot detect multiple, simultaneously overlapping periods.

A different approach in the research field of "periodic pattern-finding" finds "time slots" when events of a particular periodicity occur~\cite{quan}, i.e., locations, spaced periodically on the time series, where an event may happen. This approach is suited for data where certain events are expected to happen at certain sub-sequences of time (for instance, a student logging onto a computer every Wednesday and Thursday between 14:00 and 17:00). The algorithm aims to detect multiple-interlaced periodicities and relies on a scoring function and a heuristic algorithm to maximize the objective function to solve the NP-hard problem. The method does not account for variance in the event location. It is particularly optimized for anomaly detection tasks (when a periodic behavior is broken) and time series with a low sampling rate.

To address some of the challenges of multiple periodicity detection for noisy event time series, we propose the Gaussian Mixture Periodicity Algorithm (GMPDA). The algorithm is based on a novel generative model scheme of periodic event time series, which implicates variability in interval length through Gaussian distributed noise. Here, we compared the GMPDA to other algorithms on a large set of test cases and reported superior performance of GMPD the accuracy, sensitivity, and computational performance with outperforming results in most cases.

The rest of the paper is organized as follows. In Section~\ref{sec:Generative_Models} we introduce the generative models and discuss their inference. Section~\ref{sec:GMPDA_ALGO} presents the GMPDA algorithm. The performance of the GMPDA framework was tested in Section~\ref{sec:test_cases}. An application of GMPDA to real data can be found in Section~\ref{sec:real_application}. Section~\ref{sec:conclusion} concludes this paper.

\section{Generative Models}
\label{sec:Generative_Models}
Consider a uni-variate event time series $\left(X_t\right)_{t=1,\dots,N_T}$ where $x_t = 1$ if the event starts at time $t$ and else $x_t = 0$. In this work, we ignore the case of un-sampled/missing data. Then, the information in $X_t$ can be compressed to the set of non-zero/positive time stamps $S:=\lbrace s_i| x_{s_i} = 1\rbrace_ {i=1,\dots,N_S}$.

If the positive time stamps occur (at least partially) at regular intervals, the time series exhibits a periodic behavior, and the regular intervals correspond to periodicities or periods. We formulate the periodicity detection problem to search for the set of periodicities that explain the intervals between timestamps in $S$.

We are particularly interested in the set of \textit{prime} periodicities, that is, the minimum integer frequency that describes the intervals. For instance, for a timestamp set $S=\lbrace 12, 23, 34, 45, 56, 67\rbrace$, many intervals could be explained by a periodicity of $22$ or $33$, but $11$ would be the prime, which explains the most data and $22$ and $33$ are integer multiples of this prime period. In the following, the set of underlying prime periods in $X_t$ is denoted by $\mu^{*}=\{\mu_p^{*},p= 1,\dots,P\}$. In addition, we assume that for most real applications, the interval between two consecutive time stamps associated with a periodicity $\mu_p^{*}$ in $S$ will most probably vary from $\mu_p^{*}$  with a variance denoted by $\sigma_p^{*2}$, for instance, $s^p_{i+1}-s^p_i \in \lbrack \mu_p^{*}-\sigma_p^{*}, \mu_p^{*}-\sigma_p^{*}\rbrack$.

If the time series $X_t$ is generated by a single, stationary periodicity $\mu_1^{*}$, we can compute $\mu_1^{*}$ and thus the the prime periodicity $\mu^*$ directly from the data as:
\begin{align}
    \label{eq:mu_prime1}
    \mu^{*} &= \frac{N_T}{|S|} = \frac{\sum_{i=1}^{N_T} s_{i+1} - s_i}{|S|}.
\end{align}
Please note, the first equality in (\ref{eq:mu_prime1}) is the ratio between the length of the time series and total number of events. The second equality in (\ref{eq:mu_prime1}) describes the "average interval" between two adjacent time stamps and holds for a time series with a single, stationary periodicity without noise. The associated variance $\sigma^{*2}$ can be estimated as the square of the standard deviation.

However, estimation of $\mu^{*}$ and $\sigma^{*2}$ according to equation (\ref{eq:mu_prime1}) will not be sufficient when (i) the time series $X_t$ is generated by multiple, overlapping periodicities, (ii) the time series $X_t$ is noisy, i.e., there are false positives, (iii) when there are missing values (false negatives), or (iv) when there are different patterns of periodic behavior over time, i.e., non-stationarity.


The generative model that we present here accounts for challenges (i) and (ii): specifically, we formulate a generative model of the positive time stamps $S$ with multiple periodicities, an explicit term that incorporates noise, and a loss that enables inference of the model parameters.

Let us assume, that the set of positive time-stamps $S$ can be generated by a function $f$, as:
\begin{align}
\label{eq:general_generative}
    S = f(\mu^*, \sigma^*, \beta, \alpha, M),
\end{align}
where:
\begin{itemize}
    \item $\mu^*$ is the set of $P$ prime periodicities in the time series,
    \item $\sigma^*$ is the set $P$ variances of the periodic intervals,
    \item $\beta$ is the rate of false positive noise in a Bernoulli sense,
    \item $\alpha_{p}$ is the starting point of periodicity $p$,
    \item $M$ is the generative model scheme.
\end{itemize}
The generative model scheme $M$ is characterized by the priors for the distribution of the intervals, its mean values $\mu^*$, and the variances $\sigma^{*2}$. We follow the generative approach in equation (\ref{eq:general_generative}) and assume that a single event $s_i$ is generated according to one periodicity (except in the case of overlaps) or false positive noise $\beta$. Then, the set $S$ is the union of subsets $S^{p}$ of positive time stamps $s_i$ associated with periodicity $\mu_p$, or random noise $\beta$:
\begin{equation}
    \label{eq:general_generative_mm}
    S = S^{\mu_1^*}\cap S^{\mu_2^*}\dots \cap S^{\mu_P^*} \cap S^{\beta}.
\end{equation}
Further, without loss of generality, we parameterize the distribution of the intervals by the Gaussian distribution; any other distribution, for instance, a member of the exponential family would also be appropriate. In Section~\ref{subsec:CM} and Section~\ref{subsec:RWM} we formulate two different model schemes denoted in the following as the \textit{Clock Model} ($M=C$) and the \textit{Random Walk Model} ($M = RW$).
%
\subsection{Clock Model}
\label{subsec:CM}
The \textit{"Clock Model"} describes a periodic behavior governed by a fixed period $\mu_p^{*}$ with Gaussian noise, which does not incorporate information from previous positive time-stamps when computing the occurrence of the next event, i.e., for $p=1,\dots,P$ the events in $S^p$ are generated by:
\begin{align}
    \label{eq:cl_model}
    s_{i}^p = \alpha_p + (i \cdot \mu_p^{*}) + \epsilon, \,\epsilon \sim N(0, \sigma_p^{*2}).
\end{align}
The number of events associated with uniformly distributed false positive noise is given as $\beta * |S^{\mu^*}|$ in the interval $[0, N_T]$.

Note that for the Clock Model, the location of any event only depends on the location in the time series and Gaussian noise around some regular location, but does not depend on previous time steps. Accordingly, one can predict with equal accuracy any time step $s_{i + m}^p$ for $m>0$. This formulation is a generalization of generative models in much of the previous work on periodicity detection, e.g., \cite{quan} and \cite{li}.
In their formulation, one needs to find a time-slot $s_i$ as a pair of a period ($l$) and an offset $i$, denoted by $[l : i]$. This formulation is equivalent to finding a period $\mu_p^*$ and a starting point $\alpha_p$, with $\sigma^* = 0$, which might be a limitation in real applications, as this formulation does not allow for potential variability in the realization of event locations in the time series. In order to account for this, Gaussian noise $\sigma^*$ is added (the formulation with $\sigma^* = 0$ would be a special case of the Clock Model).%

However, the notion of an external pacemaker is a realistic expectation only for some systems, thus motivating the development of the Random Walk Model.
%
\subsection{Random Walk Model}
\label{subsec:RWM}
%
The Random Walk Model, exhibits the Markov property, i.e., a system where the next event's temporal location depends on the current event's temporal location and Gaussian noise. For $p=1,\dots,P$ the events in $S^p$ are:%
\begin{align}
    \label{eq:rw_model}
    s_{i+1}^p = s_{i}^p + \mu_p^{*} + \epsilon,\,\epsilon \sim N(0, (i\sigma_p)^{*2}).
\end{align}
Again, the number of events associated with uniformly distributed false positive noise is given as $\beta * |S^{\mu^*}|$ in the interval $[0, N_T]$.%

As the noise is Gaussian (and thus is identically distributed), the series of event-time stamps in $S^{\mu_p}$, for $p=1,\dots,P$, describes a random walk. And thus, the formulation in equation (\ref{eq:rw_model}) is hereafter referred to as the Random Walk Model (RWM).%

RWM has the property that, concerning some event, $s_i$, the $\sigma^*$'s add up for each subsequent time step. Therefore, the distribution's variance of the expected location of an event further in the future increases linearly with the distance from the current event. This assumption is an essential and a realistic expectation for many real-life systems, which have no pacemaker and are therefore predictable with decreasing accuracy as steps increase, i.e., given $s_i$, we can predict $s_{i+1}$ more accurately than $s_{i+10}$.%
%
\subsection{Inference}
\label{subsec:Inference}
%
Given an event time series $X_t$, a straightforward approach to extract the possible periodicities is to study the empirical histogram of all pairwise, forward order inter-event intervals. For every event, we consider not only the interval to the next event (onset to onset) but also to all subsequent events.%

The possible range of the intervals is defined by $(0,N_T)$\footnote{In real applications, the actual range is smaller as an interval needs to be observed a minimal amount of times to be significant.}. On the other hand, we can estimate a histogram of expected all forward order inter-event intervals with respect to one of the generative models defined by equations (\ref{eq:cl_model}) and (\ref{eq:rw_model}). %
This histogram is obtained by (i) estimating analytically the expected number of intervals for each $\mu_p^*\in\mu^*$, (ii) incorporating all the intervals between any pair of events associated with any two different prime periodicities $\mu_p^*$ and $\mu_q^*$, and (iii) by incorporating all the intervals due to noise (in the case the source of the noise is known this could be done analytically, otherwise an estimate is required). The comparison of the empirical histogram and the parametric expectation will define the loss function used to identify the optimal underlying periodicities. %

In the following, we define for every $\mu\in (0,N_T)$ the function of interval counts $D(\mu)$ by:%
\begin{align}
\label{eq:dmu_prob}
    D(\mu) = \sum_{i, m > 0 } \mathds{1}_{s_{i + m}^p - s_i^p = \mu}.
\end{align}
The evaluation of $D(\mu)$ for a given $X_t$ results in a histogram of all pairwise inter-event intervals.

The generative models provide a statistical model for the intervals. Thus, we can estimate the expected number of intervals for $\mu\in (0,N_T)$ in reference to a fixed periodicity $\mu_p^*$ and variance $\sigma_p^*$ as:
\begin{align}
    \mathbb E [D(\mu)]_{\mu_p^*}
    &= \sum_{i, m > 0 } \mathbb E[\mathds{1}_{\mu}]\label{eq:e_dmu_1}\\
    &= \sum_{i, m > 0 } \mathbb P[s_{i+ m}^p - s_{i}^p = \mu],\label{eq:e_dmu_2}
\end{align}
equality in equation (\ref{eq:e_dmu_1}) is due to linearity of expectation and equality in equation (\ref{eq:e_dmu_2}) is due to the fact that for a random variable $A$, $\mathbb E[\mathds{1}_{A}] = \mathbb P[A]$.
The distribution of all m-th order inter-events intervals depends on the specific generative model and can be written as
\begin{align}
    \label{eq:prob_interval_cl}
    \mathbb P[s_{i+ m}^p - s_{i}^p = \mu] = \frac{1}{\sqrt{2 \pi\sigma_p^{*2}}} \exp{[-\frac{(\mu - m \mu_p^{*})^2}{2 \sigma_p^{*2} }}],
\end{align}
for the Clock Model, and as
\begin{align}
    \label{eq:prob_interval_rw}
    \mathbb P[s_{i+ m}^p - s_{i}^p = \mu] = \frac{1}{\sqrt{2 \pi \left(m\sigma_p^*\right)^2}} \exp{[-\frac{(\mu - m \mu_p^{*})^2}{2 \left(m \sigma_p^*\right)^2} ]},
\end{align}
for the Random Walk Model. For the latter, the variance grows linearly with the number of steps between events.

Further, we assume that the starting point is zero, i.e., $\alpha_p = 0$. For a time series of length $N_T$ equation (\ref{eq:e_dmu_2}) can be rewritten into a more explicit form by writing the indicator function as a definite quantity. Assuming no missing values on $\left(0, N_T\right)$ for the generative models, we should observe $\frac{N_T}{\mu_p^*}$ first order intervals ($m = 1$) in the time series, distributed according to the Gaussian probability density function (PDF) parametrized by $\sigma^*$ and $\mu_p^*$. For the second order intervals ($m = 2$) the scaling factor would be $(\frac{N_T}{\mu_p^*} - 1)$, for $m = 3$,  $(\frac{N_T}{\mu_p^*} - 2)$ and so on. Thus, for a single periodicity $\mu_p^*$ the expected value of $D(\mu)$ can be written for the Clock Model as:
\begin{align}
    \label{eq:e_dmu_cm}
   \mathbb E [ D(\mu) ]_{\mu_p^*} =  \sum_{m = 1}^{\frac{N_T}{\mu_p^*}} \frac{const}{\sqrt{2 \pi \sigma_p^*{}^2}} \exp{ [-\frac{(\mu - m \mu_p^{*})^2}{2 \sigma_p^*{}^2} ]},
\end{align}
and for the Random Walk Model as
\begin{align}
    \label{eq:e_dmu_rw}
   \mathbb E [D(\mu)]_{\mu_p^*} = \sum_{m = 1}^{\frac{N_T}{\hat{\mu}_p}} \frac{const}{\sqrt{2 \pi (m\sigma_p^*)^2}} \exp{ [-\frac{(\mu - m\mu_p^{*})^2}{2 (m\sigma_p^*)^2} }],
\end{align}
with $const = \frac{N_T}{\mu_p^*}-(m-1)$. Equations (\ref{eq:e_dmu_rw}) and (\ref{eq:e_dmu_cm}) are therefore the expected values of the function $D(\mu)$ counting all order intervals that might be observed for a single periodicity $\mu_p^*$ for the Random Walk Model and for the Clock Model, respectively.

In the case of multiple, overlapping periods, and/or false positive noise, the set of positive time stamps $S$ would consist of multiple sets: $S^{\mu_1^*}\cap S^{\mu_2^*}\dots\cap S^{\mu_p^*}\cap S^{\beta}$. A-priori, the affiliation of events to periodicities is unknown and therefore we slightly adapt our definition of $D(\mu)$ in equation (\ref{eq:dmu_prob}) by removing the superscript $p$:
\begin{align}
\label{eq:Dhatmup}
    D(\mu) = \sum_{\forall i, m > 0} \mathds{1}_{s_{i+m} - s_i = \mu} .
\end{align}
Thus, the operator $D(\mu)$ counts now not only the intervals between events from the same periodicity set, but also between events in different sets and/or between noise events. We call the latter two "interaction intervals" and denote their contribution to $D(\mu)$ by:
\begin{align}
\label{eq:zeta1}
    \zeta\left(\mu\right) = \sum_{\forall i, m > 0} \mathds{1}_{s_{i+m} - s_i = \mu},
\end{align}
with three possible scenarios (or their combination): (i) intervals between events from different periodicity sets, .i e.,  $s_i \in S^{\mu_p^*}$ and $s_{i+m}\in S^{\mu_q^*}$, (ii) intervals between events from any periodicity set and noise, i.e., $s_i \in S^{\mu_p^*}$ and $s_{i+m} \in S^{\beta}$, (iii) intervals between events due to noise, i.e., $s_{i} \in S^{\beta}$ and $s_{i+m} \in S^{\beta}$.

The estimates in equations (\ref{eq:e_dmu_cm}) and (\ref{eq:e_dmu_rw}) do not include these interaction intervals. In the next step we discuss how to estimate $\zeta\left(\mu\right)$ and to account for the three cases explicitly.
The distribution of the interaction intervals for all the three cases can be obtained in a closed form by applying the convolution formula, which provides the distribution of the sum/difference of two interdependent discrete or continuous random variables~\cite{grinstead2012introduction}. For the Clock and the Random Walk Models we would obtain the following: For case (i) the interaction intervals between two periods, denoted as $\zeta_{pq}(\mu)$, are again Gaussian distributed with a mean $\mu_{pq} = \mu_p - \mu_q$ and a variance $\sigma^2_{pq} = \sigma^2_{p} + \sigma^2_{q}$, where $\mu_p > \mu_q$ without loss of generality. For case (ii) the interaction intervals, denoted as $\zeta_{p\beta}(\mu)$, follow a Gaussian-like distribution, adjusted for the corresponding uniform support. For case (iii) the forward interaction intervals, denoted as $\zeta_{\beta}(\mu)$, follow the right sight of a triangle distribution. In this context, we can write down $\zeta\left(\mu\right)$ as
\begin{align}
\label{eq:zeta_all}
    \zeta\left(\mu\right) = \zeta_{pq}\left(\mu\right) + \zeta_{p\beta}\left(\mu\right) + \zeta_{\beta}\left(\mu\right),
\end{align}
Please note that in real applications we do know neither the amount of noise in real data nor which events are associated to which periodicities, and therefore do not have an exact formulation of (\ref{eq:zeta_all}) and need an approximation of $\zeta_{\beta}\left(\mu\right)$. 
For this, we will assume that the events that contribute to the interaction intervals are uniformly distributed. We will see in Section~\ref{subsub:approx_zeta} that this assumption is not too restrictive.
\begin{prop}
\label{prop:zeta}
For uniformly distributed events on $[1,N_T]$, the expected number of interaction intervals on the interval $[1, N_T]$ is given by:
\begin{equation}
    \label{eq:zeta2}
    \mathbb E[ \zeta(\mu)] = z \cdot (1 - \frac{\mu}{N_T}),
\end{equation}
with a constant $z$, for every $\mu\in[1,N_T]$.
\end{prop}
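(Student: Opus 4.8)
The plan is to reduce the expected count of interaction intervals to a single per-pair probability via linearity of expectation, and then to identify the law of a single forward interval with the positive half of the triangular distribution obtained by convolving two uniform densities. Concretely, I would model the events contributing to the interaction intervals as $n$ points drawn independently and uniformly on $[1,N_T]$, where $n$ is fixed (in the noise setting $n=\beta\,|S^{\mu^*}|$). Writing $\zeta(\mu)=\sum_{i,\,m>0}\mathds{1}_{s_{i+m}-s_i=\mu}$ and applying the same linearity-of-expectation step used in (\ref{eq:e_dmu_1})--(\ref{eq:e_dmu_2}), one obtains $\mathbb E[\zeta(\mu)]=\sum_{i,\,m>0}\mathbb P[s_{i+m}-s_i=\mu]$.

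The first key observation is that this sum ranges over exactly $\binom{n}{2}$ forward-order pairs, a quantity that does not depend on $\mu$. Moreover, the multiset of forward differences $\{s_{i+m}-s_i\}$ coincides with the multiset of absolute differences $\{|X_a-X_b|\}$ of the unsorted points, so every term shares the common value $\mathbb P[|X_1-X_2|=\mu]$. Hence $\mathbb E[\zeta(\mu)]=\binom{n}{2}\,\mathbb P[|X_1-X_2|=\mu]$, and the whole statement reduces to the distribution of a single pairwise difference.

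The second step is to compute this difference distribution by the convolution formula for the difference of two independent variables. For $X,Y$ uniform, the density of $Y-X$ at a point $d$ is $\int f_X(x)f_Y(x+d)\,dx$, which evaluates to the triangular density $\tfrac{1}{N_T}\bigl(1-\tfrac{|d|}{N_T}\bigr)$; equivalently, in the discrete picture, the number of integer pairs in $\{1,\dots,N_T\}$ with gap $\mu$ is $N_T-\mu$, giving $(N_T-\mu)/N_T^2$. Folding to the forward (positive) branch, or equivalently taking the right side of this triangle as already anticipated in the text for $\zeta_\beta$, yields $\mathbb P[|X_1-X_2|=\mu]\propto \bigl(1-\tfrac{\mu}{N_T}\bigr)$ for $\mu\in[1,N_T]$. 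Absorbing the combinatorial prefactor $\binom{n}{2}$ together with the $1/N_T$ normalization into a single $\mu$-independent constant $z$ delivers $\mathbb E[\zeta(\mu)]=z\,(1-\mu/N_T)$, as claimed.

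I expect the main obstacle to be bookkeeping rather than depth: specifically, justifying that the forward pairwise differences of the sorted order statistics carry the same law as the absolute differences of the unsorted uniforms (so that the per-pair probability is genuinely common to all terms and its prefactor is $\mu$-independent), and choosing cleanly between the continuous density $\tfrac{1}{N_T}(1-\mu/N_T)$ and the discrete count $N_T-\mu$ of integer pairs with a given gap. Both routes produce the identical linear shape $(1-\mu/N_T)$; the only care needed is to confirm that the $\mu$-dependence enters solely through this triangular factor and that all remaining constants collapse into $z$.
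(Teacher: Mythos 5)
Your proposal is correct and follows essentially the same route as the paper's proof: linearity of expectation reduces $\mathbb E[\zeta(\mu)]$ to a $\mu$-independent count of forward pairs multiplied by a single per-pair probability, which the convolution of two uniform distributions shows to be proportional to $(1-\mu/N_T)$. Your extra bookkeeping about identifying the forward differences of the sorted points with the absolute differences of the unsorted uniforms (and the attendant factor of $2$) is a point the paper glosses over, but since every $\mu$-independent prefactor is absorbed into $z$, this does not affect the claim.
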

\begin{proof}
Consider two noise events $s_i^{\beta}$, $s_j^{\beta}$ each with a uniform probability mass function $\mathbb P_{S}$ on the support $[1,\dots,N_T]$. The difference between the events $s_j^{\beta}-s_i^{\beta}=\mu \in [-N_T,N_T]$ is a random variable and its probability mass function can be derived by using the convolution formula for distributions~\cite{grinstead2012introduction}:
%
\begin{align}
    \label{eq:pmf_conv}
    \mathbb P[s_j^{\beta}-s_i^{\beta}=\mu] 
                  = & \sum_{s_j}\mathbb P_S[s_j-\mu] \mathbb P_S[s_j].
\end{align}
Next, as $\mathbb P_{S}$ is defined on $[1,\dots,N_T]$, the probability of $\mathbb P_{S}[s_j-\mu < 1]$ and $\mathbb P_{S}[s_j-\mu > N_T]$ is equal to zero, taking this into account we get:
\begin{align}
    \label{eq:pmf_conv_mid}
    \sum_{s_j}\mathbb P_S[s_j-\mu] \mathbb P_S[s_j]
                  = & (N_T-\mu)\frac{1}{N_T} \frac{1}{N_T}
\end{align}
%
%
%
%
Finally we get the probability mass function as a decaying function of the difference:
\begin{align}
    \label{eq:pmf_conv_fin}
    \mathbb P[s_j^{\beta}-s_i^{\beta}=\mu] = \frac{1}{N_T}(1-\frac{\mu}{N_T}).
\end{align}
In case we would be focusing on $|s_j^{\beta}-s_i^{\beta}|=\mu$ the right hand side of equation (\ref{eq:pmf_conv_fin}) has to be multiplied by $2$, due to symmetry. In the next step we want to estimate the expectation of $\zeta\left(\mu\right)$, that is:
\begin{align}
    \mathbb E[\zeta\left(\mu\right)]
    = &\mathbb E [\sum_{\forall i, m > 0} \mathds{1}_{s_{i+m} - s_i = \mu}]\label{eq:e_zeta_1}\\
    = &\sum_{\forall i, m > 0} \mathbb E [ \mathds{1}_{s_{i+m} - s_i = \mu}]\label{eq:e_zeta_2}\\
    = &\sum_{\forall i, m > 0} \mathbb P [s_j^{\beta}-s_i^{\beta}=\mu]\label{eq:e_zeta_3}.
\end{align}
The equality in equation (\ref{eq:e_zeta_2}) is due to linearity of expectation, equality in equation (\ref{eq:e_zeta_3}) is due to the fact that for a random variable $A$ the following equality holds: $\mathbb E[\mathds{1}_{A}] = \mathbb P[A]$.
%
%
Inserting equation (\ref{eq:pmf_conv_fin}) into equation (\ref{eq:e_zeta_3}) results in
\begin{align}
    \mathbb E[\zeta\left(\mu\right)] = &\sum_{\forall i, m > 0} \frac{1}{N_T}(1-\frac{\mu}{N_T})\label{eq:e_zeta_4}.
\end{align}
The number of all pairwise, forward order differences for the noise events, with $n=N_T\beta=|S^{\beta}|$ is given as:
\begin{align}
     \sum_i(n-i) = n^2-\frac{(n^2+n)}{2},
\end{align}
thus we obtain:
\begin{align}
    \mathbb E[\zeta\left(\mu\right)] = &\frac{2n^2-(n^2+n)}{2N_T}(1-\frac{\mu}{N_T})\label{eq:e_zeta_5}.
\end{align}
By setting $z = \frac{2n^2-(n^2+n)}{2N_T}$ we obtain equation (\ref{eq:zeta2}).
\end{proof}
In real applications the constant $z$ cannot be estimated as the amount of false positives, $\beta$, in unknown a-priori. An approximation for $z$, $\hat{z}$  will be inferred from the data in Section~\ref{subsub:approx_zeta}. For now, the expected number of interaction intervals is approximated via:
\begin{align}
    \label{eq:Ezeta_hat}
    \mathbb E[\hat\zeta(\mu)] = \hat{z} \cdot (1 - \frac{\mu}{N_T}).
\end{align}
In case of multiple periodicities, due to the linearity of the expectation, the expected number of intervals over multiple periods is the sum over the expectation for $D(\mu)$ for each periodicity $\mu_p^*$ present in the data, plus the expected number of the interaction intervals approximated by (\ref{eq:zeta1}):
\begin{equation}
    \label{eq:e_dmu_basic}
    \mathbb E \left[D\left(\mu\right)\right] =  \sum_{p = 1}^P\mathbb E \lbrack D(\mu)\rbrack_{\mu_p^*}+
    \mathbb E\lbrack \zeta\left(\mu\right)\rbrack.
\end{equation}
Hereinafter, the first addend on the right hand side is denoted as the deterministic parametric function $G_{M}\left(\mu;\hat{\mu}, \hat{\sigma}\right)$ for the Clock Model:
\begin{equation}
    \label{eq:P_Mfull_Clock}
    G_{C}(\mu; \hat{\mu}_p, \hat{\sigma_p}) = \sum_{p = 1}^P \sum_{m = 1}^{\frac{N_T}{\hat{\mu}_p}} \frac{c_p}{\sqrt{2 \pi \hat{\sigma_p}^2}} \exp{ [\frac{(\mu - m \hat{\mu}_p)^2}{2 \hat{\sigma_p}^2} ]},
\end{equation}
and the Random Walk Model:
\begin{equation}
    \label{eq:P_Mfull}
    G_{RW}(\mu; \hat{\mu}_p, \hat{\sigma_p}) = \sum_{p = 1}^P \sum_{m = 1}^{\frac{N_T}{\hat{\mu}_p}} \frac{c_p}{\sqrt{2 \pi (m \hat{\sigma_p})^2}} \exp{ [\frac{(\mu - m \hat{\mu}_p)^2}{2 (m \hat{\sigma_p})^2} ]},
\end{equation}
with $c_p = (\frac{N_T}{\hat{\mu}_p} - (m - 1))$.

Once we obtain estimates $\hat{\mu}_p$ and $\hat{\sigma}_p$ for the true periodicities $\mu_p^{*}$ and variances $\sigma_p^*$, and given a prior on the generating function (in our case, Random Walk or Clock), we can write a loss function for our estimates as the difference between the empirical $D(\mu)$ and the parametric $G_{M}\left(\mu;\hat{\mu}, \hat{\sigma}\right)$ functions. The loss function can be either the absolute error or a quadratic loss; since we have deterministic expectations, we focus on the  absolute error as follows:
\begin{align}
    \mathcal{L} &= \sum_{\mu = 1}^{N_T} | D(\mu) - \mathbb E [D(\mu)] |,\\
                &=  \sum_{\mu = 1}^{N_T} | D(\mu)  - ([\sum_{p = 1}^P\mathbb E [D(\mu)]_{\mu_p^*}]  + \mathbb E[ \zeta(\mu) ]),\\
                &\approx  \sum_{\mu = 1}^{N_T} |D(\mu) - G_{M}\left(\mu;\hat{\mu}, \hat{\sigma}\right) -  \mathbb E[\hat\zeta(\mu)]|\label{eq:loss_master}.
\end{align}
Finally, for either the clock model or the random walk model, the aim is to find a set of periodicities and variances that minimize the corresponding loss. A straightforward approach would consider all possible combinations of acceptable periodicities and variances where the optimal combination minimizes the loss.%

However, such an approach is computationally not feasible, and therefore the following section outlines the Gaussian Mixture Periodicity Detection Algorithm (GMPDA). %
%
%
\section{GMPDA Algorithm}
\label{sec:GMPDA_ALGO}
Given an event time series $X_t\in\mathbb R^{(1\times N_T)}$, the aim is (i) to extract an estimate $\hat{\mu}$ of the true generating periodicities $\mathbf{\mu^{*}}$, (ii) to infer $\sigma^*$, and (iv) to test the fit of the chosen generative model $M$. GMPDA provides a method to learn the parameters of the generative function of $X_t$ in an accurate and computationally efficient manner by minimizing the loss $\mathcal{L}$ defined in equation (\ref{eq:loss_master}). The GMPDA algorithm is open-source and available on \url{https://github.com/nnaisense/gmpda}. %
GMPDA is based on comparing $D(\mu)$, the empirical distribution of the intervals observed in the time series $X_t$ with parametrized estimates of its generative function $G_M(\hat{\mu}, \hat{\sigma})$ plus the contribution coming from the interaction intervals, using the loss function (\ref{eq:loss_master}). The main steps of the GMPDA algorithm for the estimation of the optimal parameters $\hat{\mu}, \hat{\sigma}$ are outlined Algorithm~\ref{algo:GMPDA_short}. 
%
%
\begin{algorithm}[h]
\SetAlgoLined
Extract event time-stamps: $S$ $\gets$ where $X_t = 1$\\
Compute intervals $D(\mu)$ from $S$, wrt. equation (\ref{eq:algo_dmu}) and subtracts $\zeta(\mu)$, estimated wrt. equation (\ref{eq:zetaz})\\
Identify candidate periods, deploying integral convolution\\
Initialize and optimize variance for candidate periods\\
Find optimal combination of periodicities, which minimize the loss defined in (\ref{eq:loss_master})\\
Update loss and variance wrt. optimal periodicities 
\caption{Main Steps of GMPDA}
\label{algo:GMPDA_short}
\end{algorithm}
%
%
In the \textit{first step}, GMPDA computes $D(\mu)$ wrt. equation (\ref{eq:algo_dmu}) and subtracts the approximated contribution from the interaction events. The approximation of the length of interaction intervals is either limited by the minimal expected periodicity or by a user-defined parameter, denoted as $noise\_range$. The estimation of the approximation is outlined in Appendix~\ref{subsub:approx_zeta}. Further, for estimation of $D(\mu)$ and the loss, the range for $\mu$ is limited by the parameter $loss\_length$, mainly due to flattening of the Gaussian distribution with increasing variance for the Random Walk Model. A detailed discussion can be found in Appendix~\ref{subsub:NacdidatePer}. %

In the \textit{second step}, GMPDA estimates a set of candidate periodicities using a heuristic approach, since computing $G_{M}(\hat{\mu}_p, \hat{\sigma})$ for all possible $\hat{\mu}_p$ is computationally expensive. The heuristic approach searches iteratively for periodicities $\hat{\mu}$ by performing in each iteration "integral convolutions" on $D(\mu)$. The convolution smoothes the function for extracting periods that explain the time series. The maximal number of candidates is controlled by parameter $max\_candidates$, and the maximal number of iterations by the parameter $max\_iterations$. The heuristic approach is described in detail in Appendix~\ref{subsub:NacdidatePer}. %

In the \textit{third step}, GMPDA performs least-squares curve-fitting to improve the initial guess for $\hat{\sigma}$, please see Section~\ref{sub:LS_sigma} for more details. Please note, this is optional and can be controlled by the parameter $curve\_fit$. The curve fitting procedure is described in Appendix~\ref{sub:LS_sigma}.%

In the \textit{final step}, GMPDA computes the function $G_M(\hat{\mu}_p, \hat{\sigma})$ for all combinations of candidate periodicities and corresponding variances and selects the set of "prime periodicities" $\hat{\mu}^*$ as those that minimizes the loss, defined and explained in Appendix~\ref{sec:loss_function}.%
%
%
\section{Performance Evaluation on Test Cases}
\label{sec:test_cases}
This section describes the evaluation of GMPDA's capacity to detect periodicities $\mu^*$ and variances $\sigma^*$ on synthetic time series generated according to Clock and Random Walk Models. The performance of GMPDA detection of periodicity $\mu^*$ was compared to those of other periodicity detection algorithms including FFT, Autocorrelation with FFT, Histogram with FFT, and Eperiodicity\footnote{The alternative algorithms were implemented in MATLAB. For all algorithms, the minimal, maximal considered period length was set to $10$ and $350$, respectively. The corresponding code is available on \url{https://github.com/nnaisense/gmpda}.}. Their specific algorithms are described below.

\textbf{GMPDA}: We used the baseline Algorithm~\ref{app:GMPDA_ALGO} with $\hat{\sigma}$ set equal to $\sigma^*$ (i.e., $\sigma^*=\log(\mu)$) and no non-linear curve fitting. Please note, in real applications sigma is unknown and if no non-linear curve fitting is deployed, we suggest to run the algorithm multiple times for a range of possible $\hat{\sigma}$, and the optimal $\hat{\sigma}$ can then be chosen with respect to the lowest loss.

\textbf{GMPDA $\sigma^*$ unknown}: The algorithm is initialized with $\hat{\sigma}=int(\log(10))$, which is the minimal possible value for $\mu^*$  in our test cases. For both GMPDA configurations, the algorithm searches for maximal $|\mu^*|+2$ periodicities.

\textbf{FFT}: This is a power spectral density estimates approach~\cite{welch1967use}. In the case of a single periodicity, the frequency with the highest spectral power is selected as the prime periodicity. In the case of multiple periodicities, the frequencies $|\mu^*|$  with the highest spectral power are selected as the true periodicities.

\textbf{Autocorrelation with FFT}: The Autocorrelation Function (ACF) estimates how similar a sequence is to its previous sequence for different lags and then uses the lag that maximizes ACF as the predicted period~\cite{box2015time}. Since all integer multiples of true periods will have the same function value, an FFT is applied to the ACF to select the frequencies with the highest spectral power as the true periodicities. In the case of multiple periodicities, the frequencies with the highest spectral power are selected as the true periodicities.

\textbf{Histogram with FFT}: An FFT applied to the histogram of all forward differences in the time series $D(\mu)$~\cite{382394}. In the case of multiple periodicities, the number of frequencies with the highest spectral power are selected as the true periodicities.

\textbf{Eperiodicity}: We implemented the method presented in \cite{li}, which computes a "discrepancy score" for each possible periodicity, i.e. the number of intervals between events which are equal to the candidate periodicity. To detect multiple periods, we select the top $|\mu^*|$ candidate periods from the discrepancy function.

There are several conceptual differences and similarities between GMPDA and the alternative algorithms: GMPDA is, like all the methods listed above, based on computing frequencies/periodicities based on the observed intervals between positive observations in the time series. For data that follows a clock model, variance in intervals can be handled in the regression framework for ACF and with spectral methods for FFT. However, for very small/large variation in intervals, parametrized by $\sigma^*$ in the Random Walk Model, these methods may struggle - particularly for multiple periodicities - because of the linear variance increase. Eperiodicity/Histogram methods will likely have performance decrease for variable intervals since they have no specific capacity to deal with variance in intervals. This will be particularly true for time series following the Random Walk Model.%

GMPDA is designed for multiple-periodicity detection, and its loss function is explicitly oriented toward finding all the periodicities present in the data: Once the set of candidate periodicities is identified, GMPDA checks all possible combinations of periodicities and selects the one with the smallest loss.%

In this context, ACF and FFT both accepted methods for hierarchical frequency detection, but do not present a "stopping criteria" for deciding how many periodicities are significant in the considered time series. it is, therefore,  possible to over or underestimate their number. In our test cases, we always selected the top $|\mu*|$ frequencies as the true periodicities and likely overestimated the accuracy of these methods since there would be no way to know this number without a priori knowledge of the generative mechanism. Also, Eperiodicity/Histogram is not explicitly modeled as a method for multiple periodicity detection and has no capacity for dealing with noise in intervals.%

In addition, we also want to discuss, conceptually, the main difference between GMPDA and the classical Gaussian Mixture/Hidden Markov approaches. All three methods aim to fit the shape of a distribution, which is empirically described by the corresponding histogram of the data. However, the main difference of GMPDA is the generative models that account for the peaks in the histogram at prime periods and their integer multiples. In this context, GMPDA combines these peaks to get a better estimate. Classical MM/HMM models do not deploy this information. Instead, if the number of mixture models $K$ is large, it will try to fit all peaks individually or average them if case $K$ is small and thus get biased results. %

In the following, we compare the performance of the above-described algorithms for a large set of generated test cases.%
%
%
\subsection{Test cases}
The performance of the GMPDA algorithm was evaluated on a wide range of test cases for the Clock Model and the Random Walk Model. The test cases cover systematic variations of the following model parameters: periodicity $\mu$, variance $\sigma$, noise $\beta$, and the number of events $n$. These generative model parameters influence the histogram of inter-event intervals, the input data for all applied algorithms.

To better understand how these model parameters influence the histogram, we show two illustrative test cases with different model parameters. Figure~\ref{fig:testcase_sf0_fb0_pc2_n30} shows a well posed test case, with two underlying periodicities and no noise, while Figure~\ref{fig:testcase_sf0_fb2_pc3_n100} displays an ill-posed test case where the signal to noise ratio is 1:2.
\begin{figure}[h]
    \centering
     \includegraphics[width=0.45\textwidth]{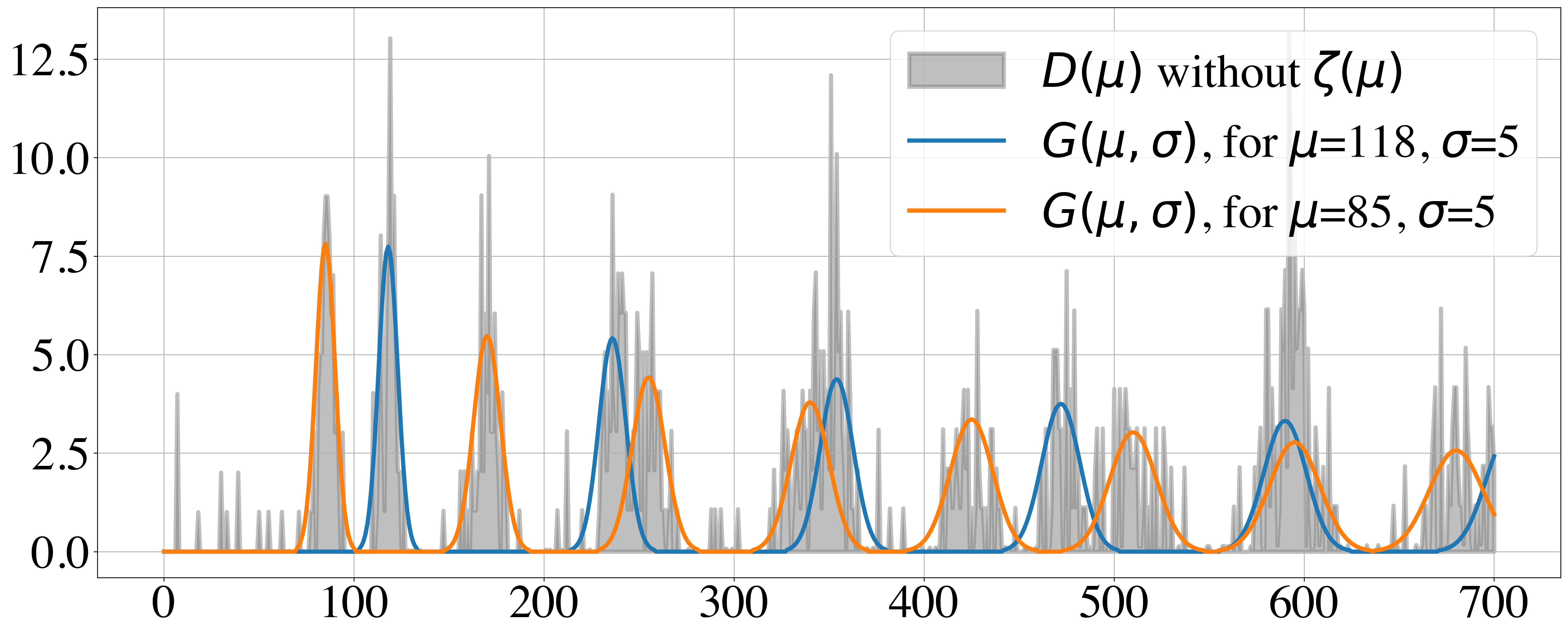}
    \caption{Histogram of the intervals $D(\mu)-\hat z$ and generative curves $G(\mu,\sigma)$ for the Random Walk Model with $n=100$ and $\beta=0$.}
    \label{fig:testcase_sf0_fb0_pc2_n30}
\end{figure}
\begin{figure}[h]
    \centering
     \includegraphics[width=0.45\textwidth]{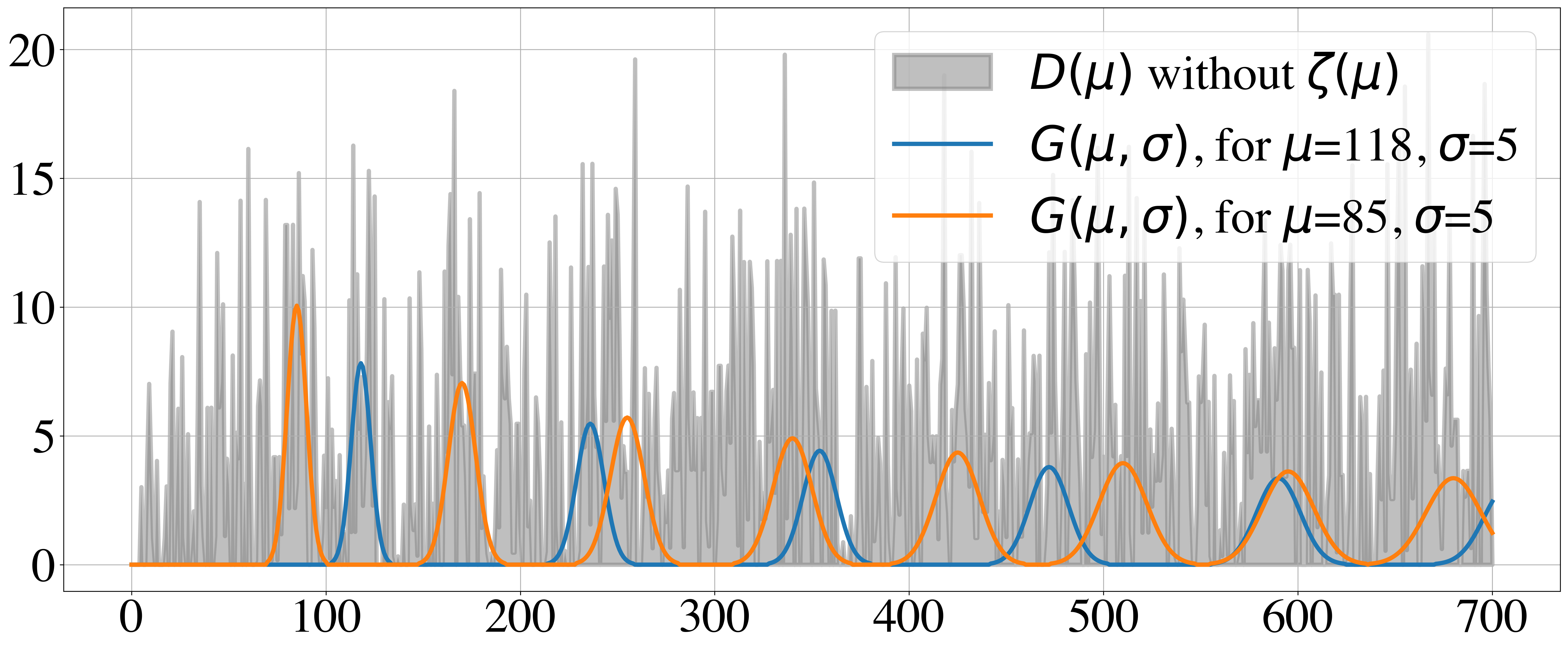}
    \caption{Histogram of the intervals $D(\mu)-\hat z$ and generative curves $G(\mu,\sigma)$ for the Random Walk Model with $n=100$ and $\beta=2$.}
    \label{fig:testcase_sf0_fb2_pc3_n100}
\end{figure}
%
The identification of underlying periodicities requires advanced analysis of the histogram. 

The following analyses examined an extensive range of test cases in order to study the limitations of the presented GMPDA algorithm and the alternatives described in Section~\ref{sec:test_cases}.
%
%
%
\subsubsection{Configurations}
The considered range for the model parameters was set as follows: 
\begin{itemize}
	\item $\sigma^*\in\lbrace 1, \log(\mu),\frac{\mu}{p}\rbrace$, with $p=16, 8, 4, 3$, 
	\item $n\in\lbrace 10, 30, 50, 100, 300, 500\rbrace$,
	\item $\beta\in\lbrace 0, 0.1, 0.5, 0.7, 1, 2, 4, 8\rbrace$.
\end{itemize}
Please note, test cases with $\sigma=\log(\mu)$ represent scenarios where no $\sigma$ optimization is required as $\sigma=\log(\mu)$ is default initialization in GMPDA. Small values for $n$ and large values for $\sigma,\beta$ were chosen to investigate the limits the periodicity detection algorithms.

For every combination of $\sigma^*$, $\beta$, and $n$ we generated $100$ event time series with randomly drawn $\mu^*\in[10,350]$. For test cases with multiple periodicities we enforced the difference between the involved periodicities to be bigger then $\log(\mu)$. Otherwise, the generative curves become indistinguishable too fast and multiple periodicity detection is getting too ill-posed.

The combination of the above settings resulted in 28800 test cases for each generative model. All algorithms were applied to identify the underlying periodicities for every generated test case.

An identified periodicity was considered to be correct if it was within $\mu^* \pm 0.5 \cdot\sigma$, where $\mu^*$ is the true periodicity and $\sigma$ the corresponding variance. For instance, for the cases $\mu^*=15,\,\sigma=2$ and $\mu^*=350,\,\sigma=44$ a guess of $\mu$ within $15\pm 1$ and $350\pm 22$, respectively, would be considered an accurate detection.

Thus for a fixed configuration of the parameters $\sigma$, $\beta$, and $n$, the performance of the algorithms is measured by accuracy, which is the averaged (across the 100 generated test cases) number of correctly identified periodicities with a value between zero and one.

In the following, we first present the results for $|\mu^*|=1$ and identify valid ranges for $n$, $\beta$ and $\sigma$. Second, within the valid range we compare the performance of GMPDA to that of the other algorithms for $|\mu^*|=1,2,3$.
%
\subsection{Performance w.r.t. $|\mu^*|=1$}
%
\subsubsection{GMPDA Performance}
In this section we focus on the performance of GMPDA with respect to $|\mu^*|=1$ in order to select the realistic limits for $\sigma,\,\beta$ and the number of events.

Figures~\ref{fig:perf_mu1_sigma_ncf} and ~\ref{fig:perf_mu1_sigma_cf} display the performance of GMPDA for fixed $\beta=0$ and $|\mu|=1$ for different values of $\sigma$ and for different number of events $n$, without and with curve-fitting, respectively.
The confidence intervals (CI) in all the following figures (if present) are estimated as $\bar x \pm 1.96$ SEM, where $\bar x$ is the mean and SEM is the Standard Error of the Mean.
\begin{figure}[h]
    \centering
    \subfloat[Random Walk Model]{{\includegraphics[width=0.23\textwidth]{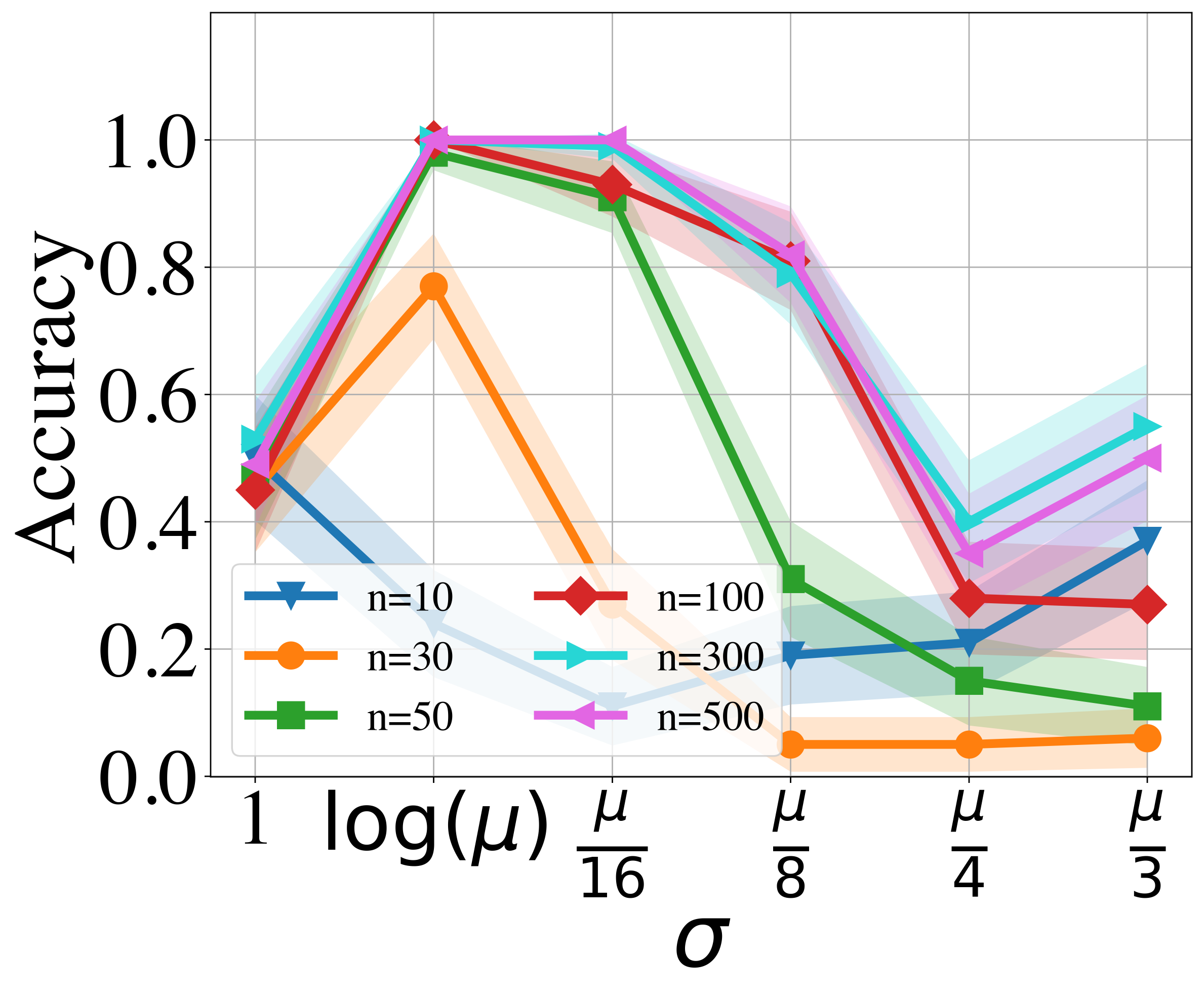}}}
    \subfloat[Clock Model]{{\includegraphics[width=0.23\textwidth]{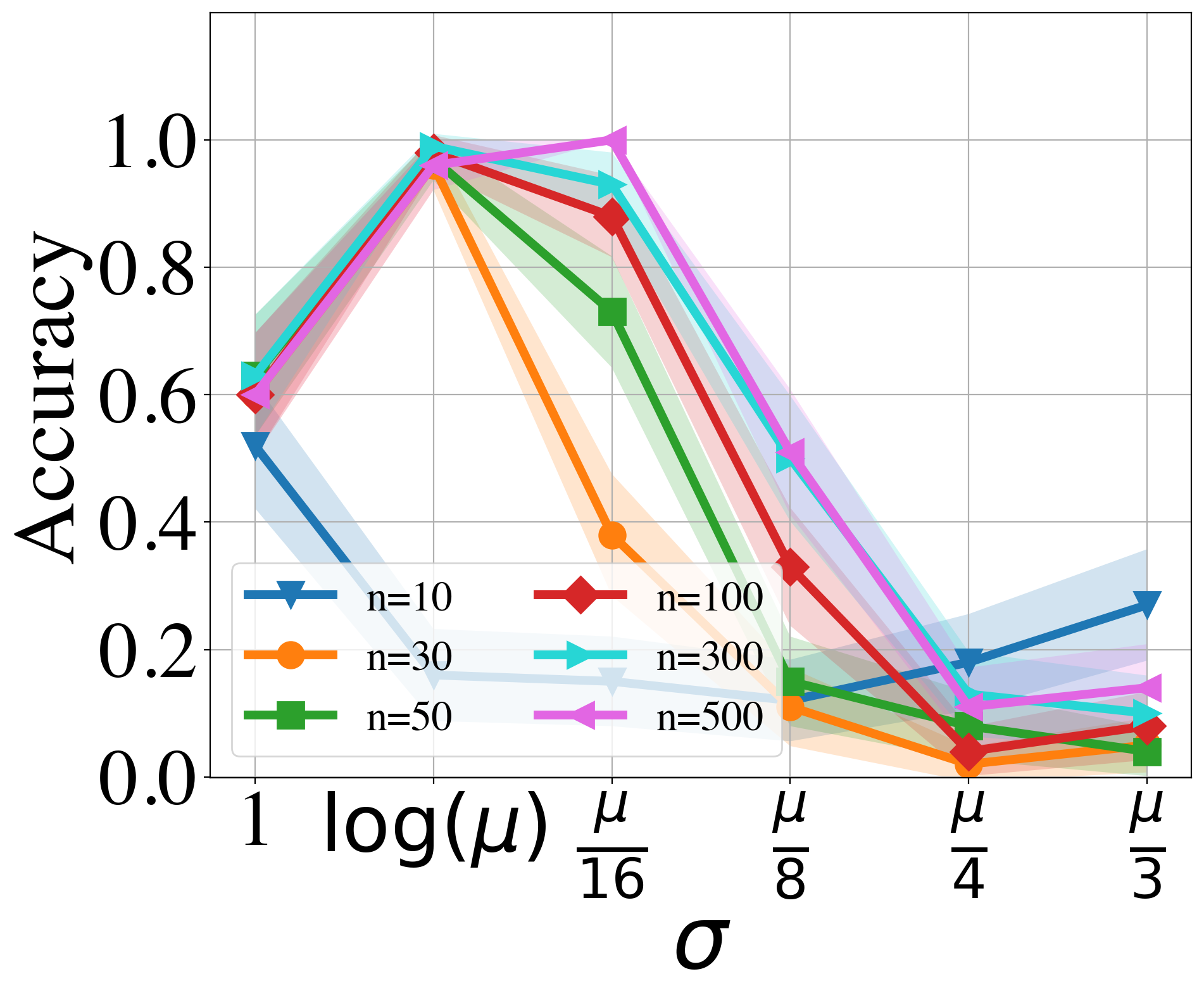}}}
    \caption{Performance of GMPDA without curve-fitting for the Random Walk Model (panel a) and for the Clock model (panel b), with $\beta=0$ and $|\mu|=1$ and varying number of events (n).}
    \label{fig:perf_mu1_sigma_ncf}
\end{figure}
\begin{figure}[h]
    \centering
    \subfloat[Random Walk Model]{{\includegraphics[width=0.23\textwidth]{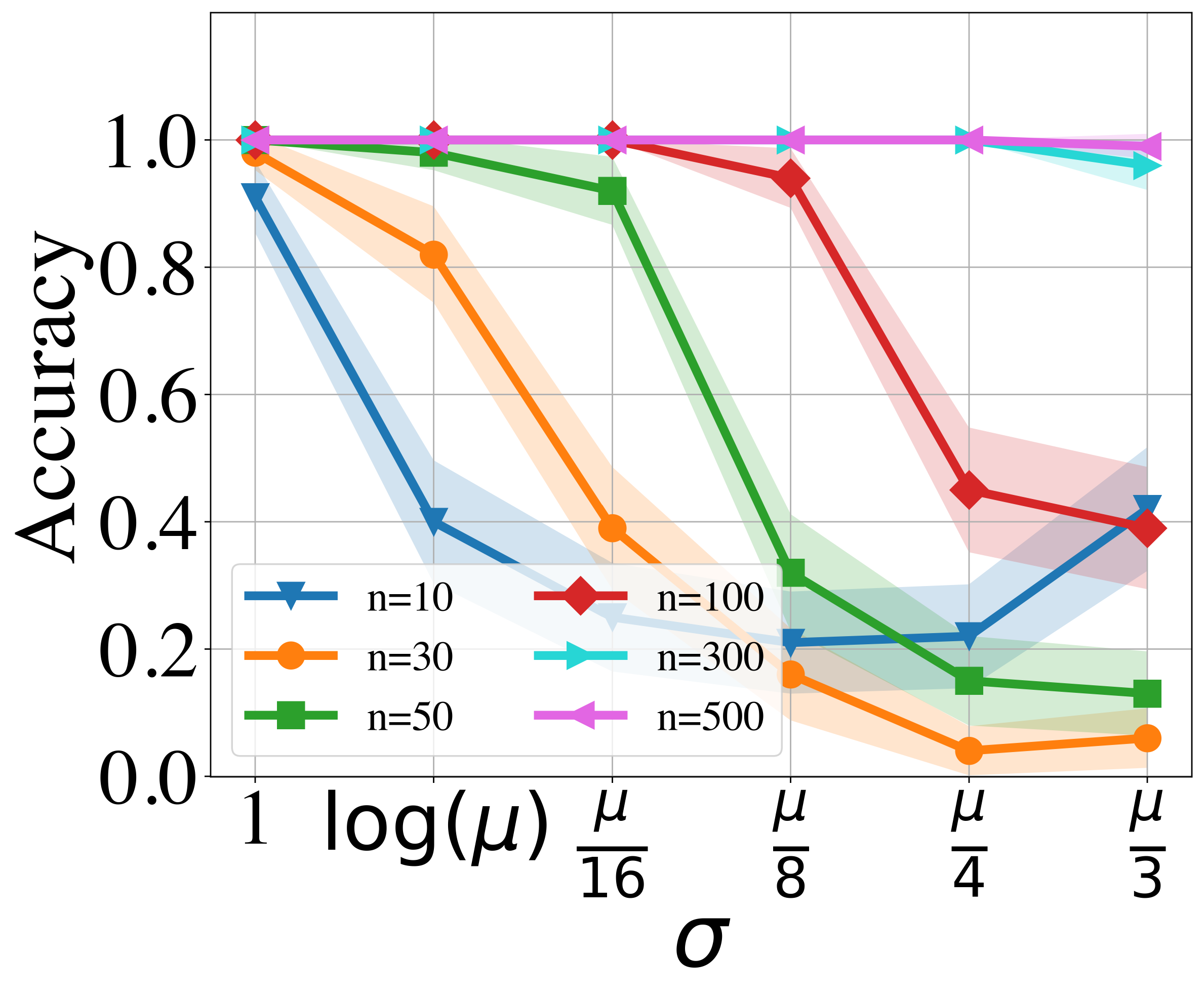}}}
    \subfloat[Clock Model]{{\includegraphics[width=0.23\textwidth]{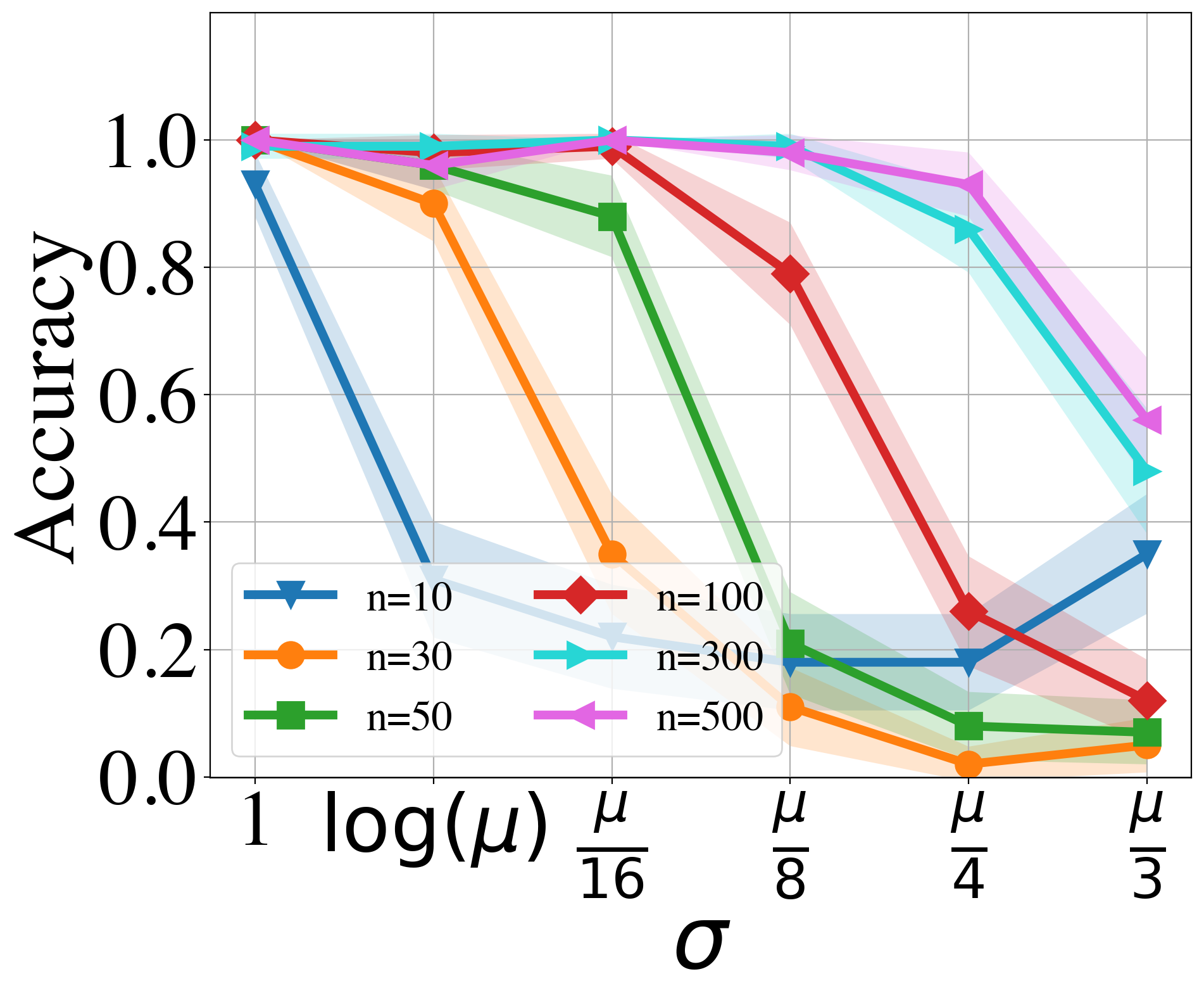}}}
    \caption{Performance of GMPDA with curve-fitting for the Random Walk Model (panel a) and for the Clock model (panel b), with $\beta=0$ and $|\mu|=1$ and varying number of events (n).}
    \label{fig:perf_mu1_sigma_cf}
\end{figure}
The results in Figures~\ref{fig:perf_mu1_sigma_ncf} and \ref{fig:perf_mu1_sigma_cf} show, as expected, that accuracy decreased with increasing $\sigma$ and decreasing number of events. Or, stated differently, with increasing variance more events were required for an accurate detection. The figures can also be used to compare the performance of GMPDA with and without curve-fitting. GMPDA without curve-fitting performed worse except in the case of $\sigma=log(\mu)$. The explanation for this behavior is as follows: In the algorithm, the default initialization value of $\sigma$ is $log(\mu)$, and therefore for this configuration, GMPDA without curve-fitting worked with a known sigma. In all the other cases, GMPDA with curve-fitting provided better results. %

Next, to compare the effect of noise, we restricted our evaluation from here on to GMPDA with curve-fitting due its better performance. Please note, the comparison between results with curve fitting and without curve fitting can be found in~\ref{app:|mu|=1}. %
Further, we focus on the case of $|\mu^*|=1$ and $\sigma$ known, which can be viewed as an \textit{ideal} scenario, as only $\mu$ needs to be estimated. For this ideal case, we compared the effect of varying noise levels across a varying number of events on detection accuracy.%
\begin{figure}[h]
    \centering
    \subfloat[Random Walk Model]{{\includegraphics[width=0.23\textwidth]{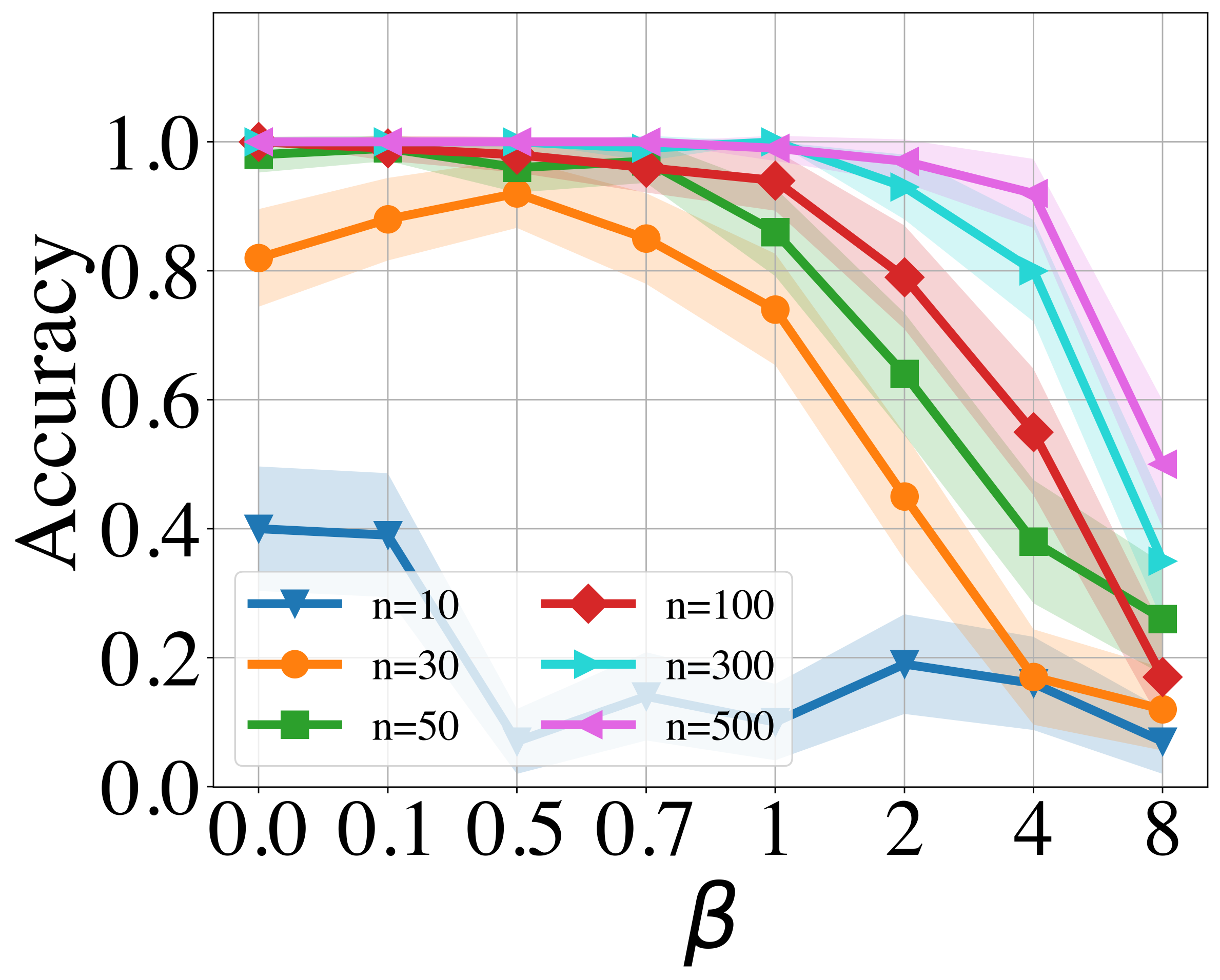}}}
    \subfloat[Clock Model]{{\includegraphics[width=0.23\textwidth]{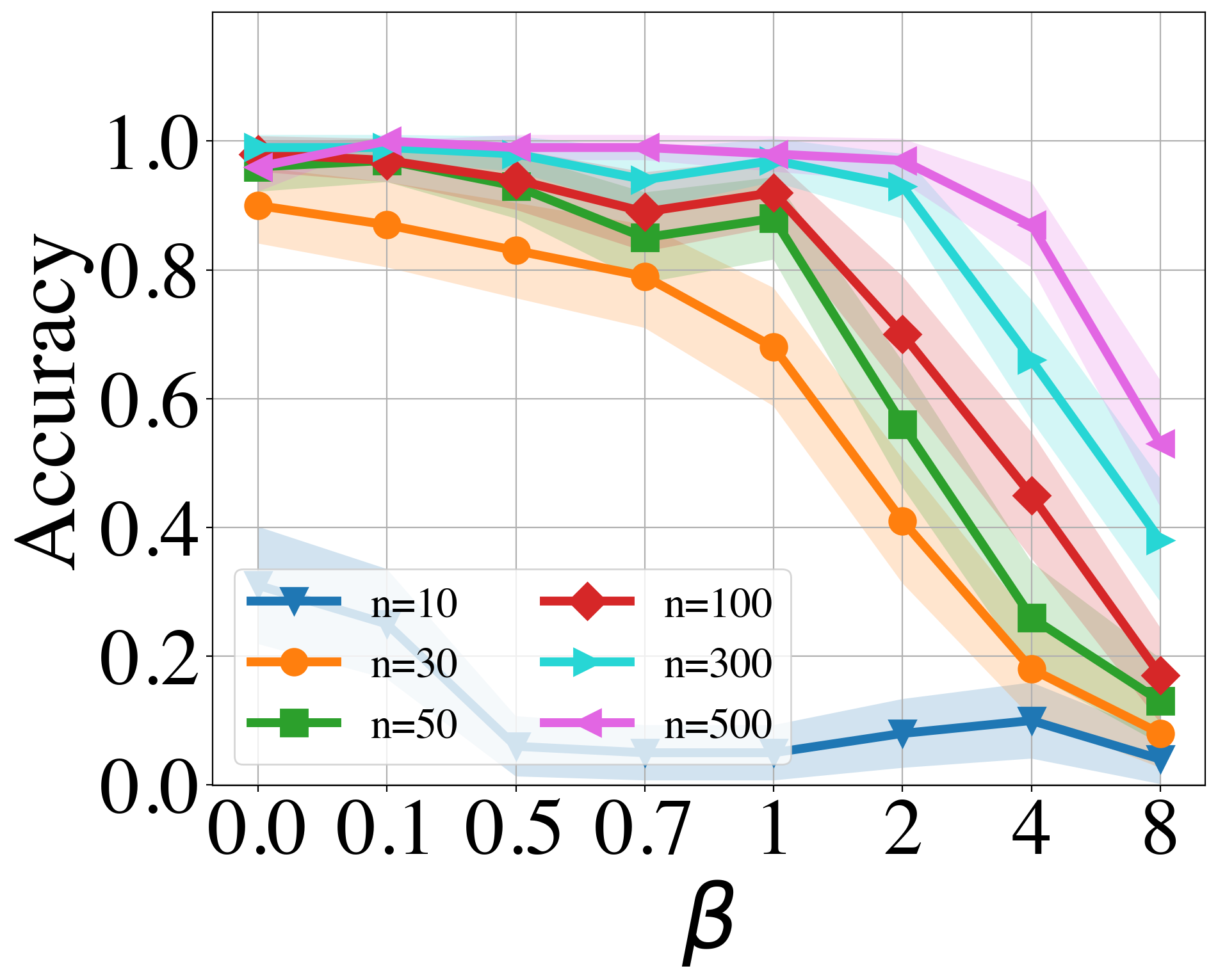}}}
    \caption{Performance of GMPDA with curve-fitting for the Random Walk Model (panel a), and for the Clock model (panel b), with $|\mu|=1$ and $\sigma=log(\mu)$ across varying levels of uniform noise beta and number of events.%
	}
    \label{fig:perf_mu1_beta}
\end{figure}
Figure~\ref{fig:perf_mu1_beta} shows the performance of the different algorithms with respect to increasing amounts of noise in the time series, for the case with $|\mu|=1$ and $\sigma=log(\mu)$, and separately for Random Walk and the Clock Models.

For the Random Walk Model, Figure~\ref{fig:perf_mu1_beta} panel (a), performance was acceptable for signals with $n \geq 300$ and noise up to $\beta = 4$, for $n\leq 300$ performance dropped below $0.75$ already for $\beta \geq 2$. In comparison, the Clock Model was substantially more sensitive to noise (Figure~\ref{fig:perf_mu1_beta}, panel b) with acceptable results only for $\beta \leq 1$.

In summary, in cases where the actual variance is unknown, GMPDA with curve-fitting outperformed GMPDA without curve-fitting. GMPDA was not suited for cases with less than 50 events. GMPDA's performance increased with the increasing number of events. GMPDA could also handle moderate to high amounts of noise, and we show in the next section how this compares to other periodicity detection algorithms.
%
\subsubsection{Comparison with alternative periodicity detection algorithms.}
\label{subsub:com_alt_method_mu1}
%
Next, we compared the GMPDA (with curve-fitting) algorithm to other periodicity detection algorithms regarding their performance in conditions with varying noise and variance. With increases in noise and variance, the histograms of the inter-event intervals analyzed by all algorithms become less informative, i.e., the peaks that indicate periodicities become less identifiable. Therefore, we investigated the sensitivity to noise and different variances used for generating the periodicities.
We first investigated the effect of varying levels of variance $\sigma$ for cases where no noise was present, i.e., $\beta=0$. The results for all algorithms and $n=100$ are shown in Figure~\ref{fig:perf_mu1_sigma}. The results for different numbers of $n$ averaged overall levels of $\beta$ can be found in Appendix~\ref{app:variance_sigma}.
For the Random Walk Model, GMPDA was very accurate up to $\sigma=\frac{\mu}{8}$. Interestingly, all other algorithms performed worse when variance was very small ($\sigma = 1$ and $\sigma = \log(\mu)$), a case where GMPDA excelled. FFT and AutoCor converged to the accuracy bound given by GMPDA for $\sigma>1$. While the accuracy of EPeriodicity and Hist had its maximum at $~0.8$. For all methods, the performance dropped for $\sigma\geq\frac{\mu}{8}$. This behavior is distinctive for the random walk model, where the variance increases with every step. Therefore the generative distributions will start to overlap, which happens faster when the variance is larger.
Performance was generally lower for the Clock Model, which was also more sensitive to increases in the variance. GMPDA was sufficiently accurate only for $\sigma=1$ and $\sigma=\log(\mu)$, with a distinct drop in performance with increased variance. For the other algorithms, except the histogram methods, performance initially increased with increasing variance up to $\sigma=\frac{\mu}{8}$ and strongly declined afterward.
\begin{figure}[h]
    \centering
    \subfloat[Random Walk Model]{{\includegraphics[width=0.23\textwidth]{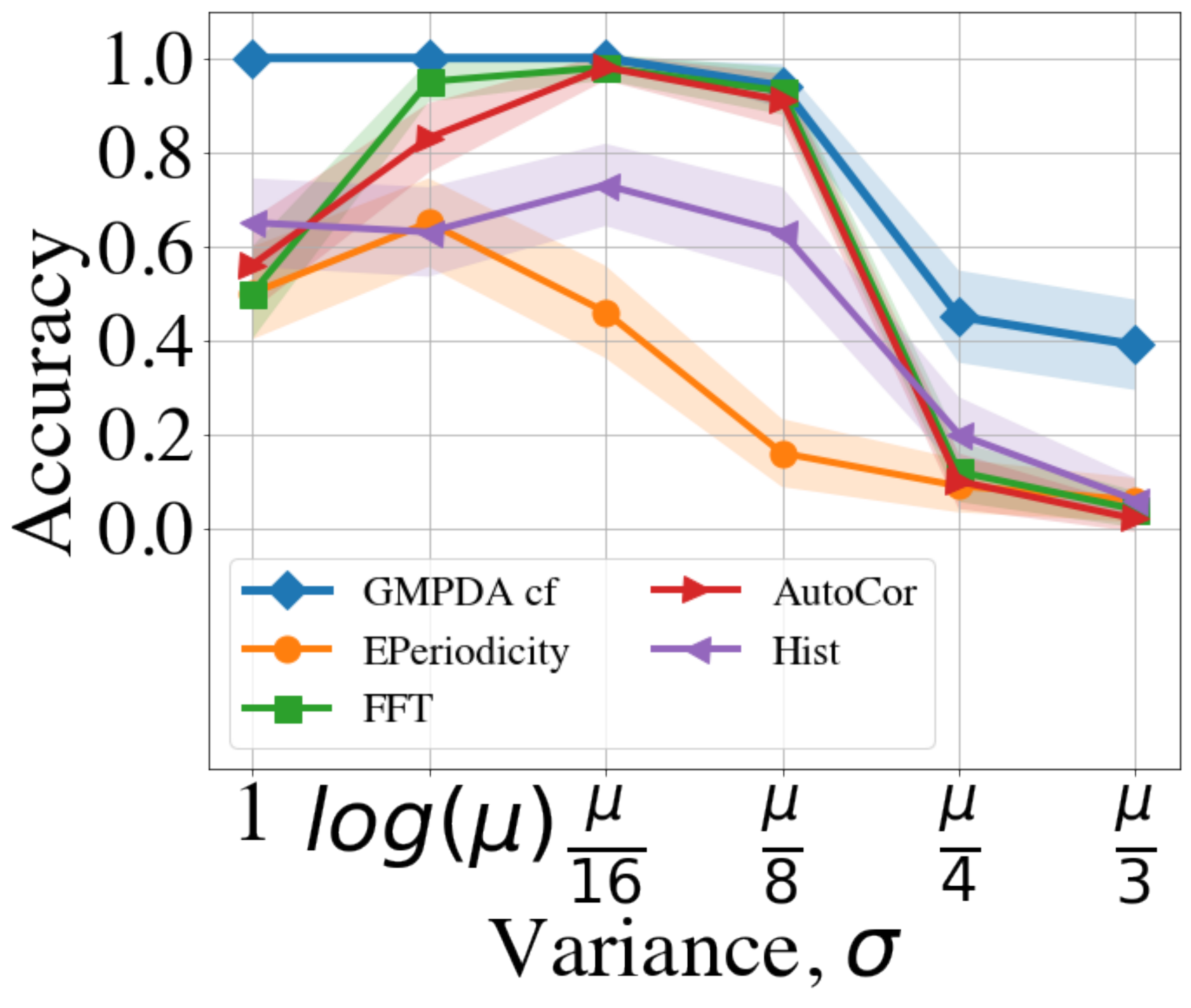}}}
    \subfloat[Clock Model]{{\includegraphics[width=0.23\textwidth]{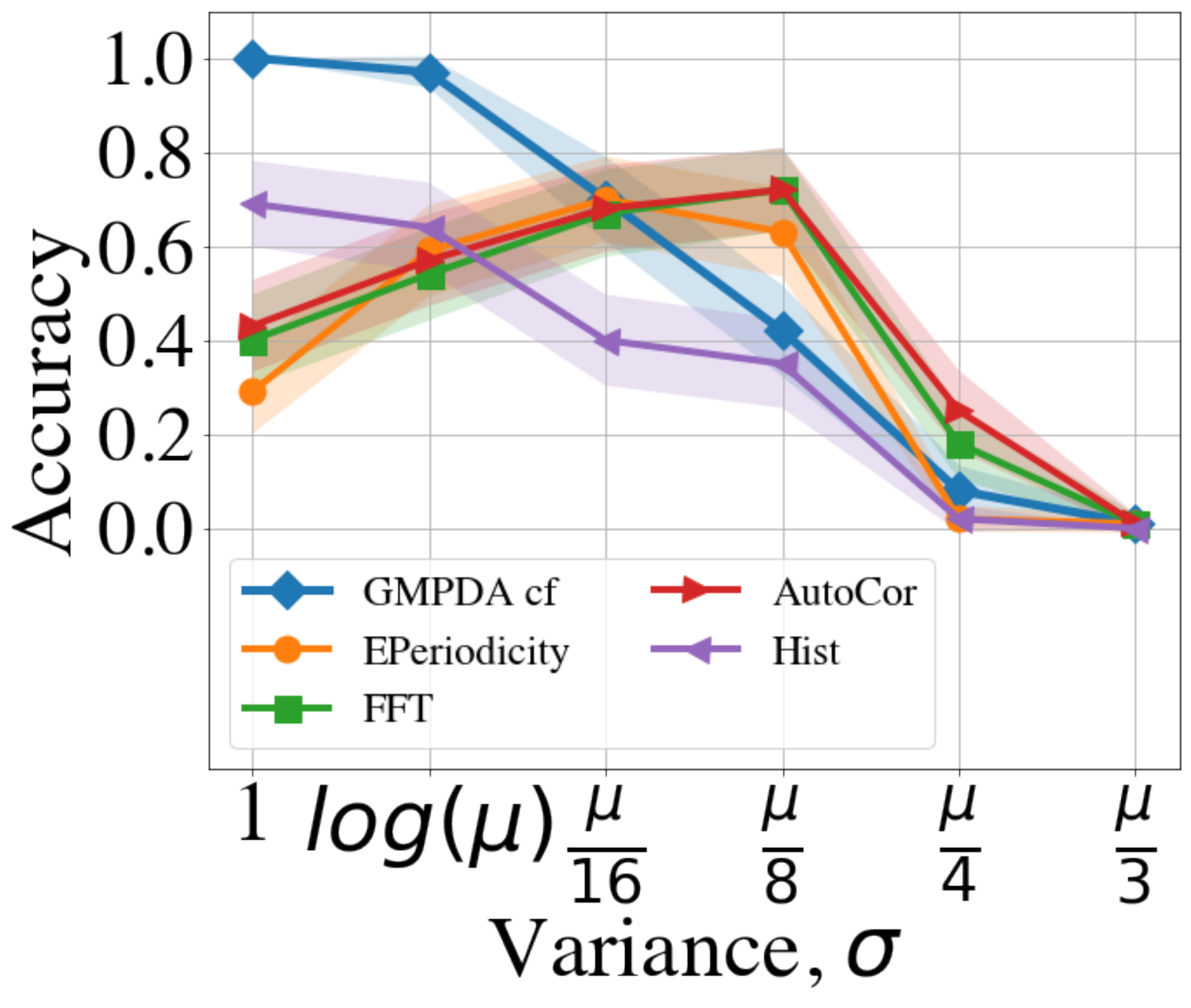}}}
    \caption{Comparison of GMPDA to alternative algorithms for the Random Walk Model (panel a), and for the Clock model (panel b). Accuracy is plotted for different levels of variance $\sigma$ for cases with one period ($|\mu|=1$), no noise ($\beta=0$) and number of events, $n=100$.}
    \label{fig:perf_mu1_sigma}
\end{figure}

Next we evaluated the performance of all methods with respect to increasing levels noise and with the results shown in Figure~\ref{fig:perf_mu1_beta_all}. For these analyses, the variance was fixed to $\sigma=log(\mu)$ and number of events to $n=100$. The plots for all numbers of $n$ can be found in Appendix~\ref{app:noise_beta}.
For the Random Walk Model, GMPDA was insensitive to noise up to $\beta=1$ with performance decreasing linearly after that. Performance of FFT and AutoCor mirrored that of GMPDA with slightly lower levels of accuracy. Of note was EPeriodicity's performance, which increased up to $\beta=1$ and declined after that, while Hist was very sensitive to all levels of noise and performed worse than all other algorithms.%

%
For the Clock Model GMPDA behaved similar, while the performance of the other methods was more sensitive to noise and accuracy was generally lower than for the Random Walk Model.

\begin{figure}[h]
    \centering
    \subfloat[Random Walk Model]{{\includegraphics[width=0.23\textwidth]{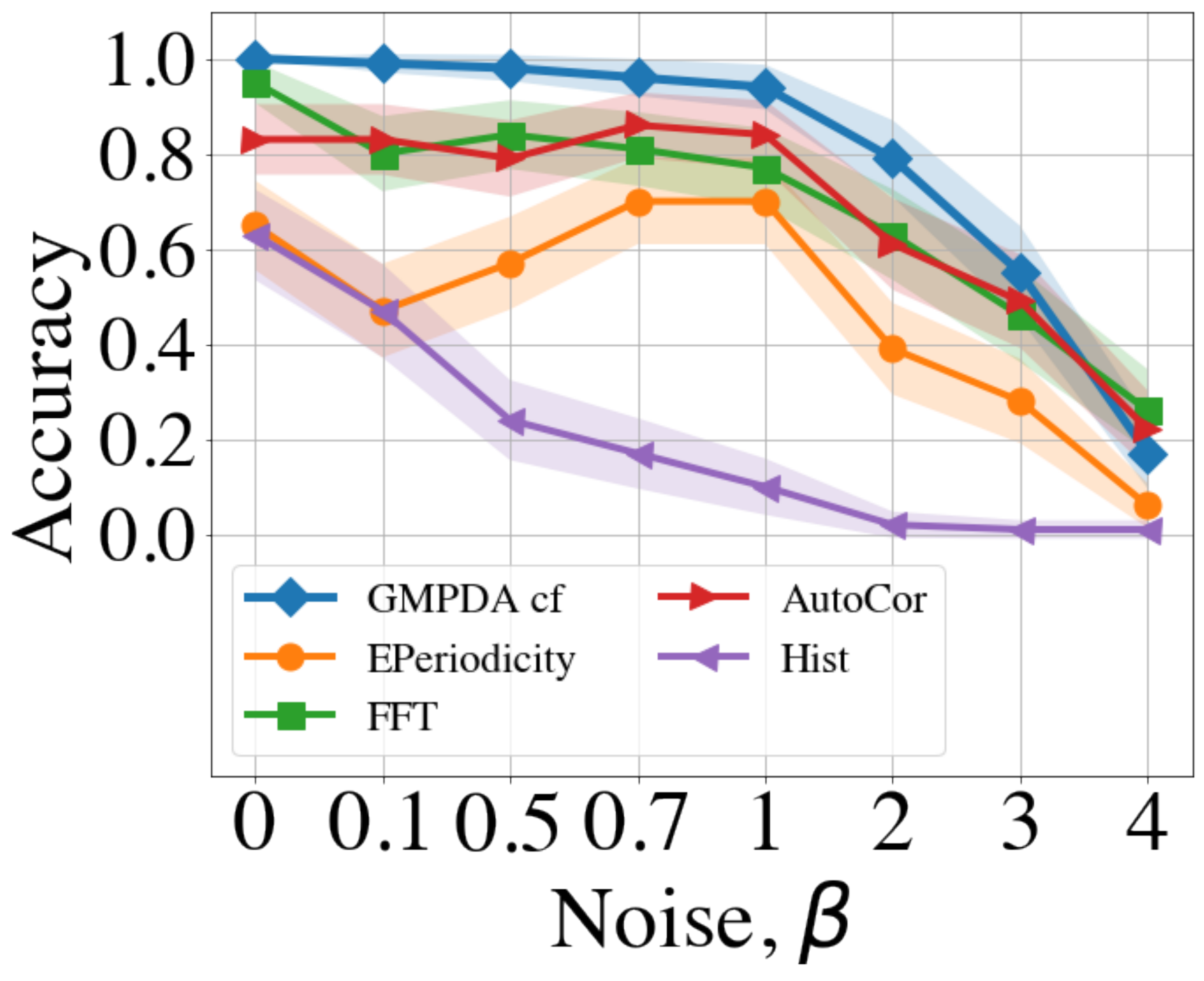}}}
    \subfloat[Clock Model]{{\includegraphics[width=0.23\textwidth]{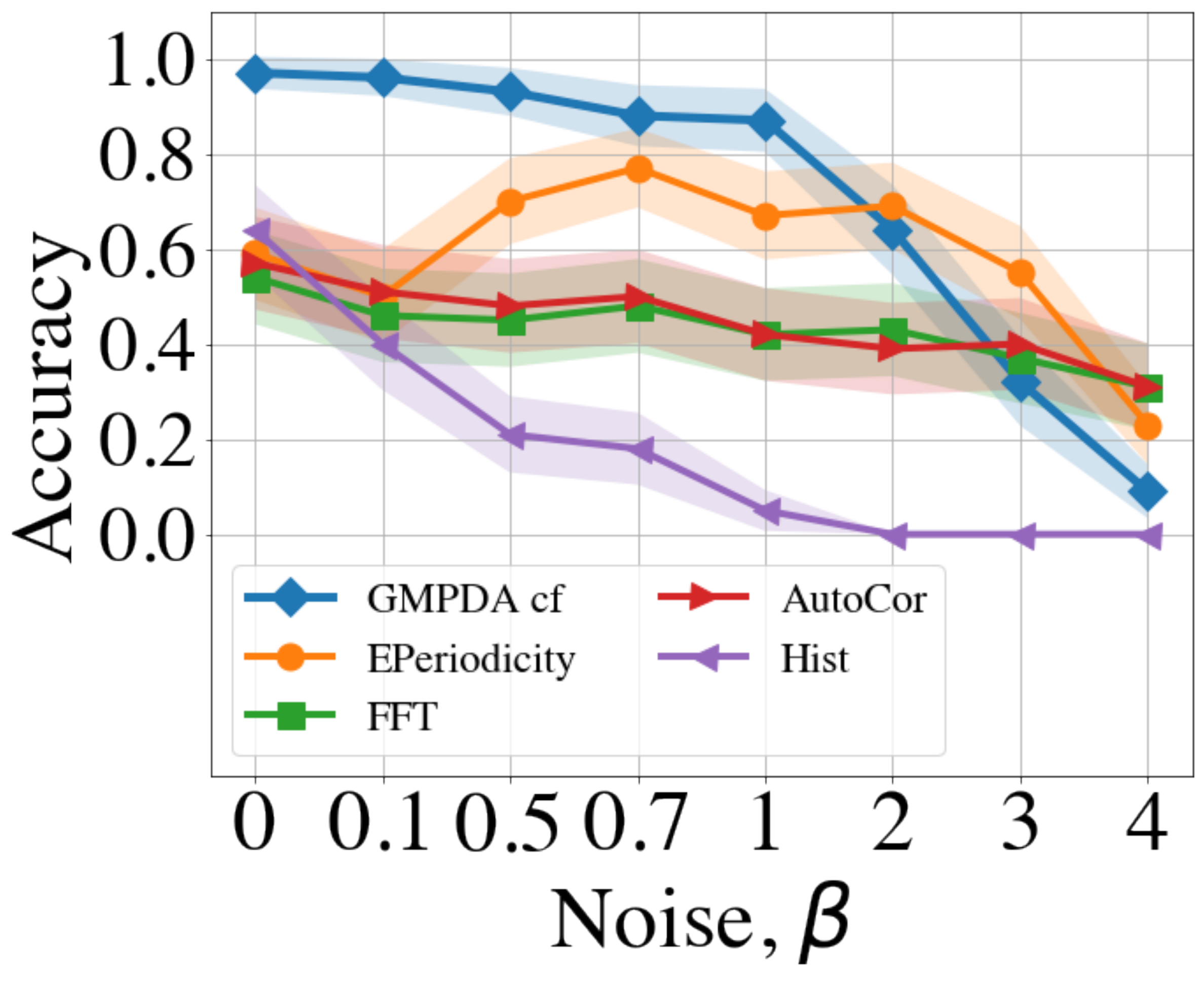}}}
    \caption{Comparison of GMPDA to alternative methods for the Random Walk Model (panel a), and for the Clock model (panel b). Accuracy is plotted against increasing levels of noise $\beta$ for cases with one period ($|\mu|=1$), known variance, i.e., $\sigma=\log(\mu)$) and number of events, $n=100$.}
    \label{fig:perf_mu1_beta_all}
\end{figure}
The presence of moderate noise (i.e., with $\beta\in[0.1, 0.7]$) did not affect performance, except for EPeriodicity, where performance increased for noise levels up to $\beta=2$.

Further, the maximal noise levels that the algorithms could handle were not higher than two $\beta\leq 2$, i.e., a signal to noise ratio of 1:2, one periodic event to two noise events.


Concluding the comparison, we averaged performance over all acceptable values of noise and variance, i.e., $\sigma=\lbrace 1, log(\mu), \frac{\mu}{16}, \frac{\mu}{8} \rbrace$ and $\beta \leq 2$). The results are shown in Figure~\ref{fig:perf_mu1_all_ave}.
\begin{figure}[h]
    \centering
    \subfloat[Random Walk Model]{{\includegraphics[width=0.23\textwidth]{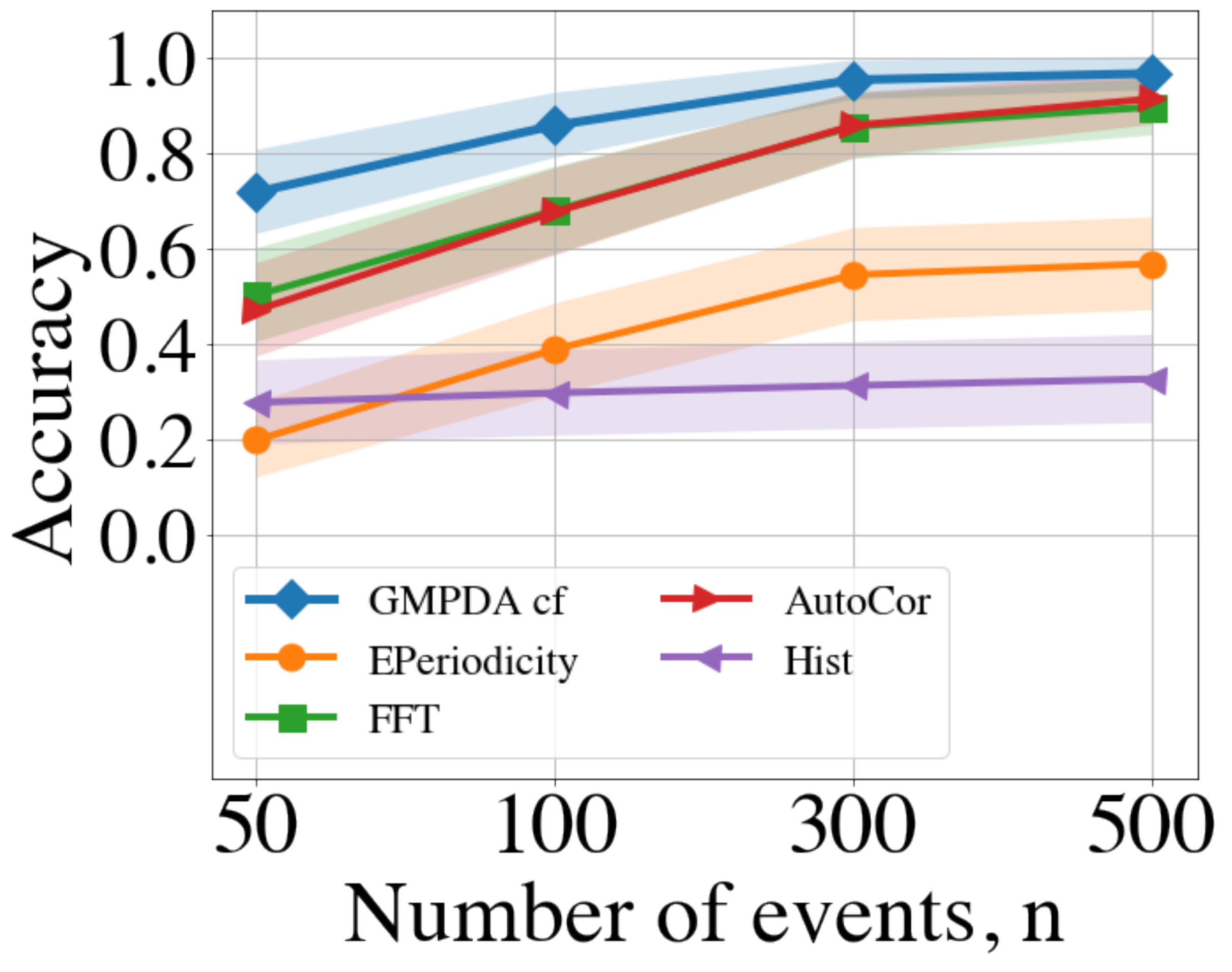}}}
    \subfloat[Clock Model]{{\includegraphics[width=0.23\textwidth]{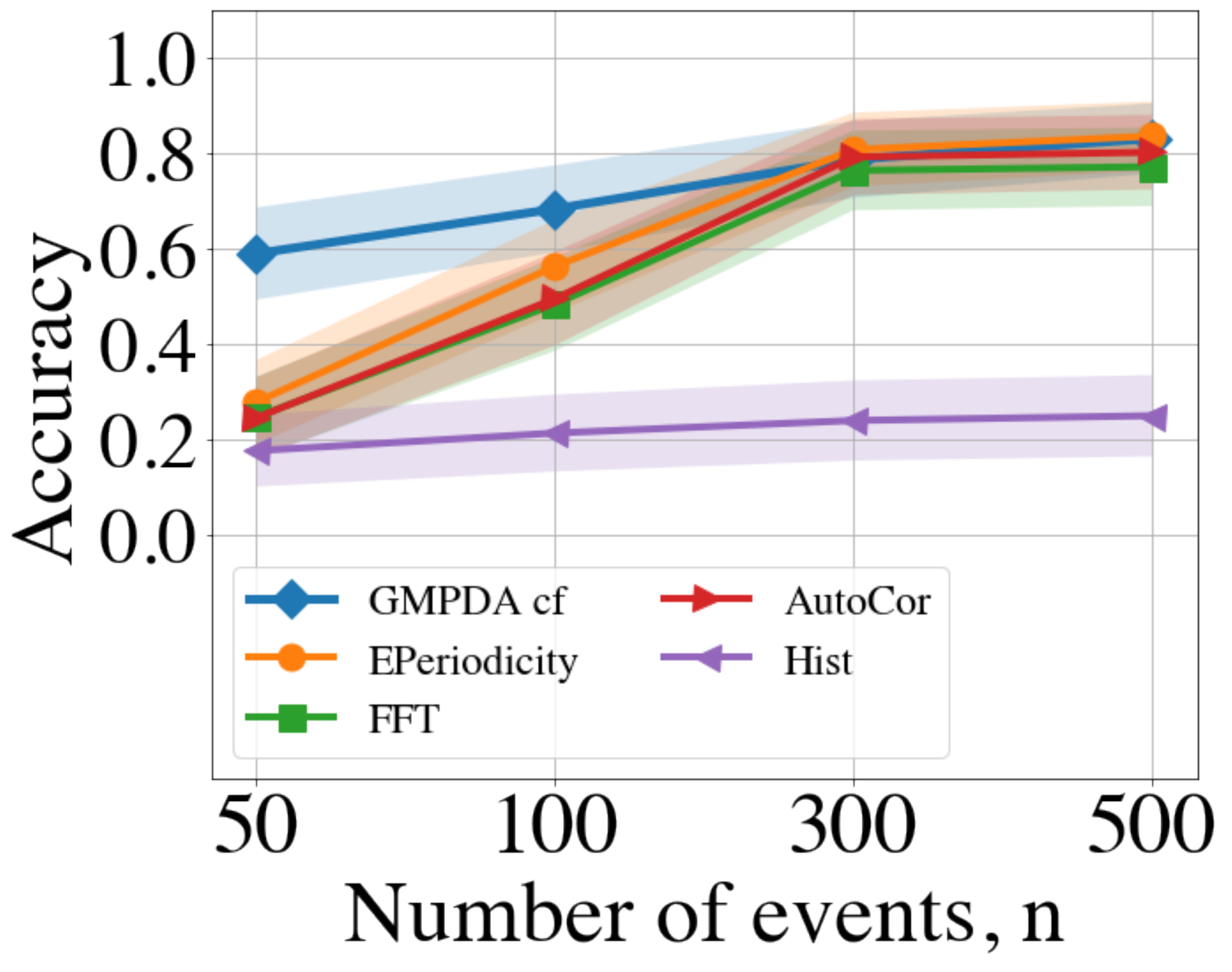}}}
    \caption{Comparison of GMPDA to alternative methods for the Random Walk Model (panel a), and for the Clock model (panel b). Accuracy is plotted against the number of events averaged over $\sigma=\lbrace 1, log(\mu), \frac{\mu}{16}, \frac{\mu}{8} \rbrace$ and $\beta \leq 2$.}
    \label{fig:perf_mu1_all_ave}
\end{figure}
Overall, detection of a single periodicity was increasingly accurate with an increasing number of events for all methods and both the Random Walk and Clock Models (see Figure 8). For both models, the periodicity detection with the Hist algorithm had very low accuracy with the maximal performance of less than $0.4$.

For the Random Walk Model, GMPDA outperformed alternative approaches with the accuracy converging to one as the number of events increased, and even for $n=30$, its performance was larger than $0.75$. FFT/Autocor achieved similar performance when the number of events was larger than $300$. In contrast, EPeriodicty's performance for the Random Walk Model was relatively poor, with a maximum at $0.6$ for $500$ number of events.

For the Clock Model, GMPDA outperformed alternatives when the number of events was smaller than $300$. For the number of events larger than $300$, the performance of all the approaches, except Hist, became equally good.
%
\subsection{Performance w.r.t. $|\mu^*|>1$}
This section compares the performance of GMPDA (with and without curve-fitting) to that of the alternative methods for multiple periodicity detection, focusing on the set of sensible simulation parameters identified in Section~\ref{subsub:com_alt_method_mu1}, i.e., $n={50, 100, 300, 500}$, $\sigma=\lbrace 1, log(\mu), \frac{\mu}{16}, \frac{\mu}{8} \rbrace$, and $\beta\leq 1$, resulting in 8000 test cases for each setting of $|\mu|=2$ and $|\mu|=3$ and for every generative model. For comparison, the performance was summarized over $n,\, \mu,\, \sigma, $ and $\beta$ and is visualized as a histogram, where the $x$-axis displays the number of correctly detected periodicities and the $y$-axis the number of test cases.

Figures~\ref{fig:mu2_hist_rw_acc} and \ref{fig:mu3_hist_rw_acc} show the results for the Random Walk Model for $|\mu|=2$ and $|\mu|=3$, respectively. Figures~\ref{fig:mu2_hist_clock_acc} and \ref{fig:mu3_hist_clock_acc} show the results for for the Clock Model for $|\mu|=2$ and $|\mu|=3$, respectively.
\begin{figure}[h]
    \centering
    \includegraphics[width=0.4\textwidth]{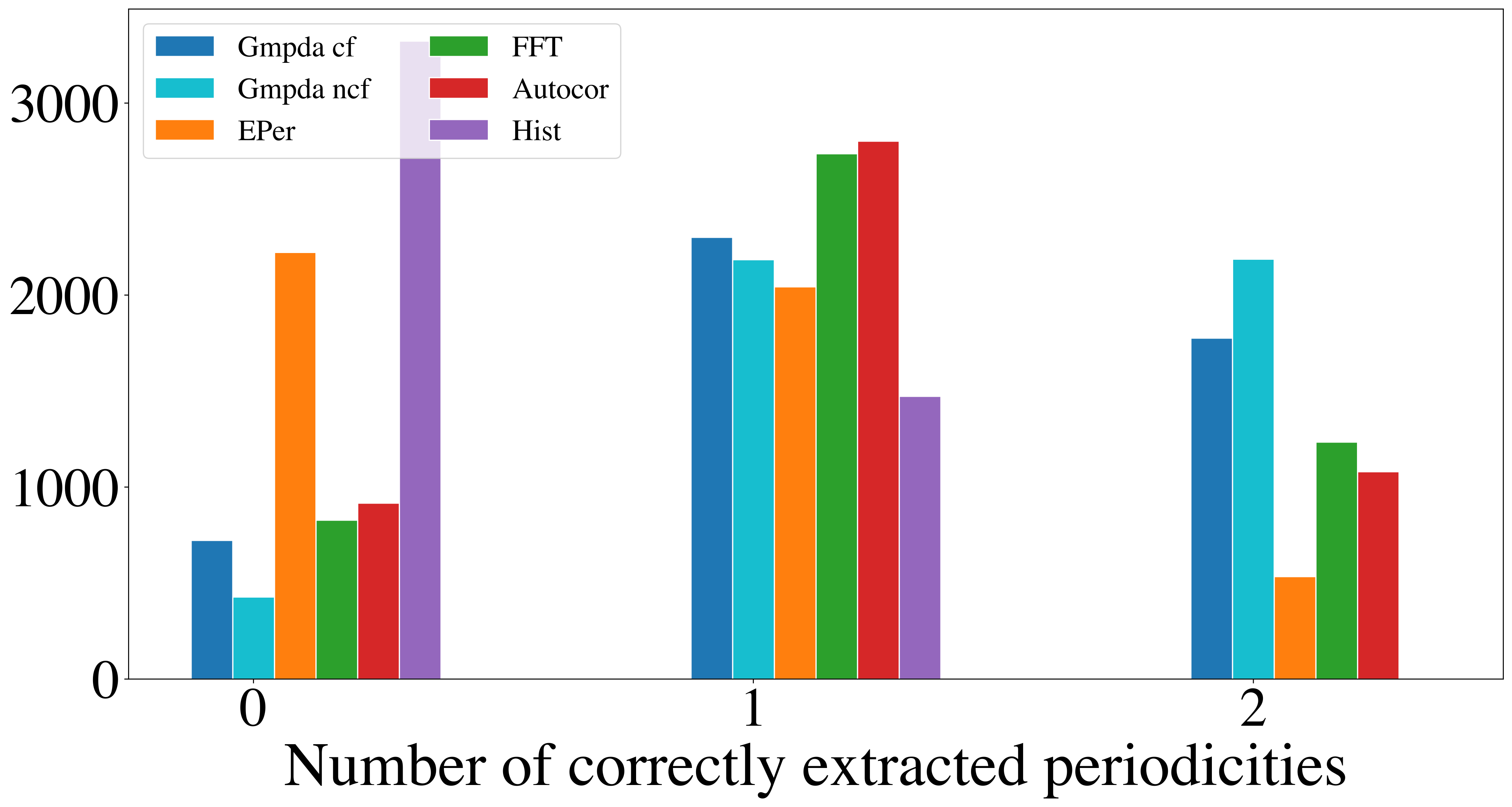}
    \caption{Summarized performance of GMPDA and alternative methods for the Random Walk Model and $|\mu|=2$.}
    \label{fig:mu2_hist_rw_acc}
\end{figure}
\begin{figure}[h]
    \centering
    \includegraphics[width=0.4\textwidth]{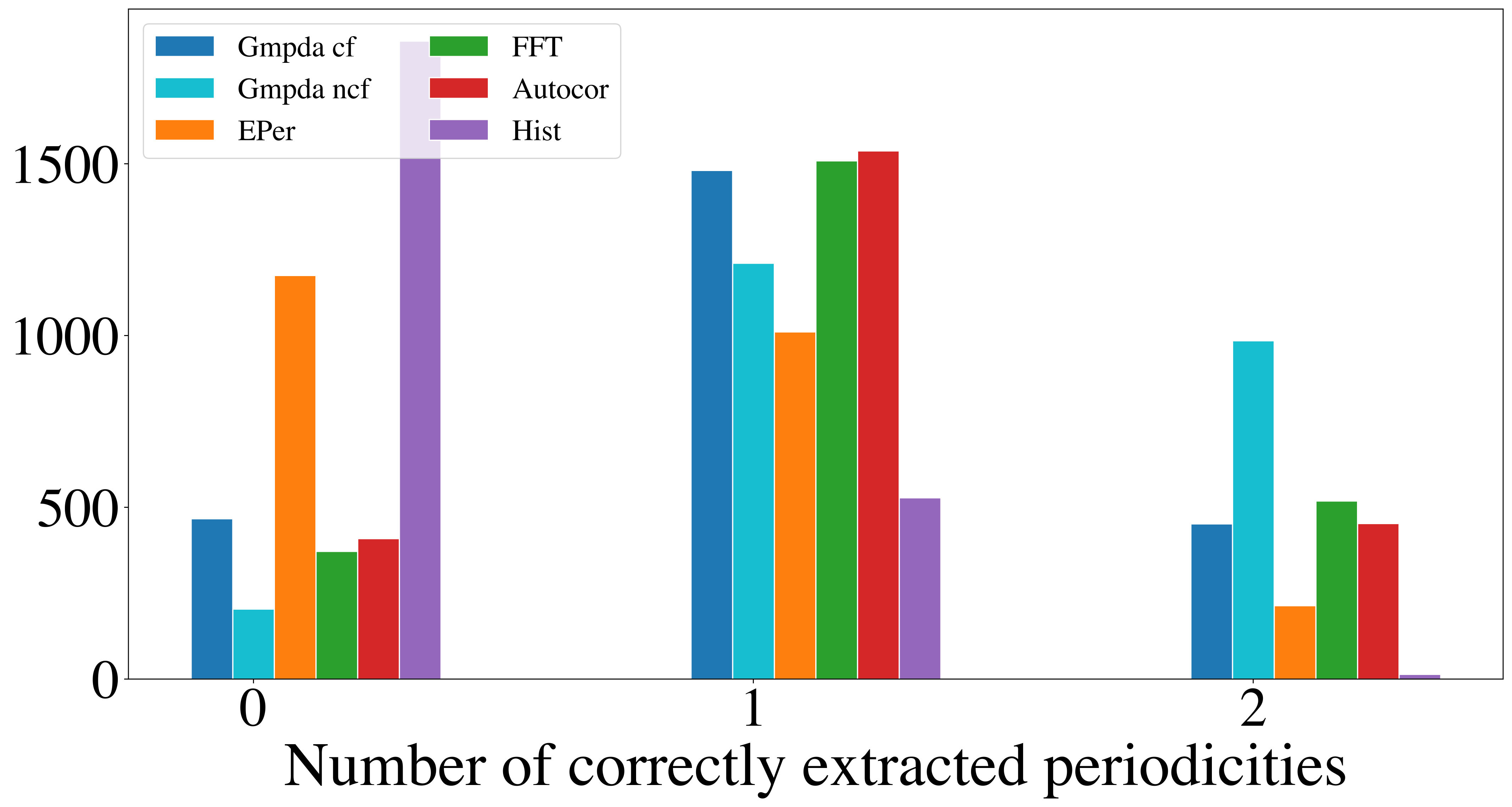}%
    \caption{Multiple periodicity detection for $|\mu|=2$; Summarized performance of GMPDA and alternative methods for the Clock Model.}
    \label{fig:mu2_hist_clock_acc}
\end{figure}
For the case with two periodicities, $|\mu|=2$, GMPDA outperformed the alternative methods, both with and without curve fitting. GMPDA without curve-fitting performed slightly better, suggesting that the currently deployed sigma optimization might require further development.
\begin{figure}[h]
    \centering
    \includegraphics[width=0.4\textwidth]{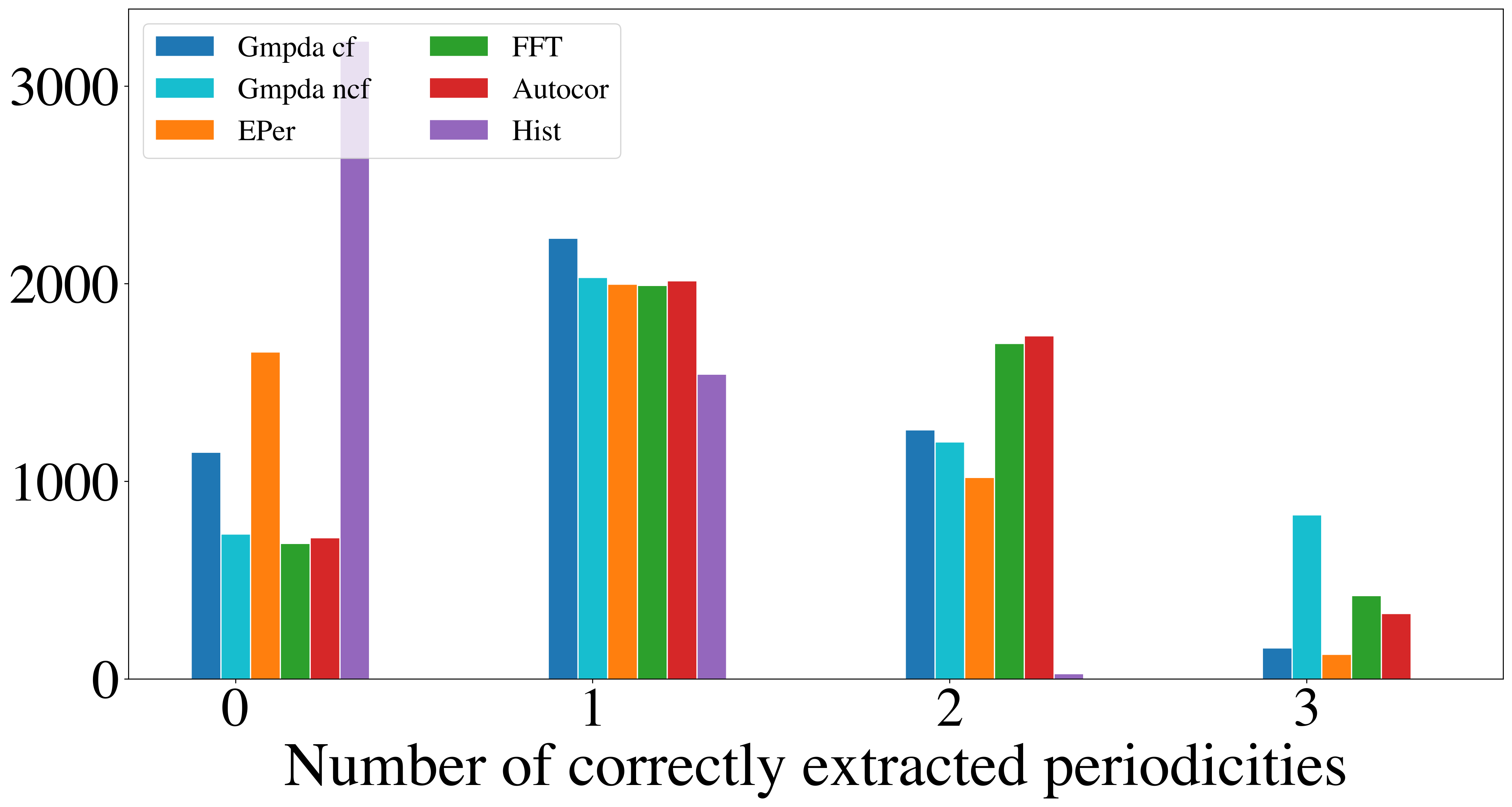}
    \caption{Multiple periodicity detection for $|\mu|=3$; Summarized performance of GMPDA and alternative methods for the Random Walk Model.}
    \label{fig:mu3_hist_rw_acc}
\end{figure}
\begin{figure}[h]
    \centering
    \includegraphics[width=0.4\textwidth]{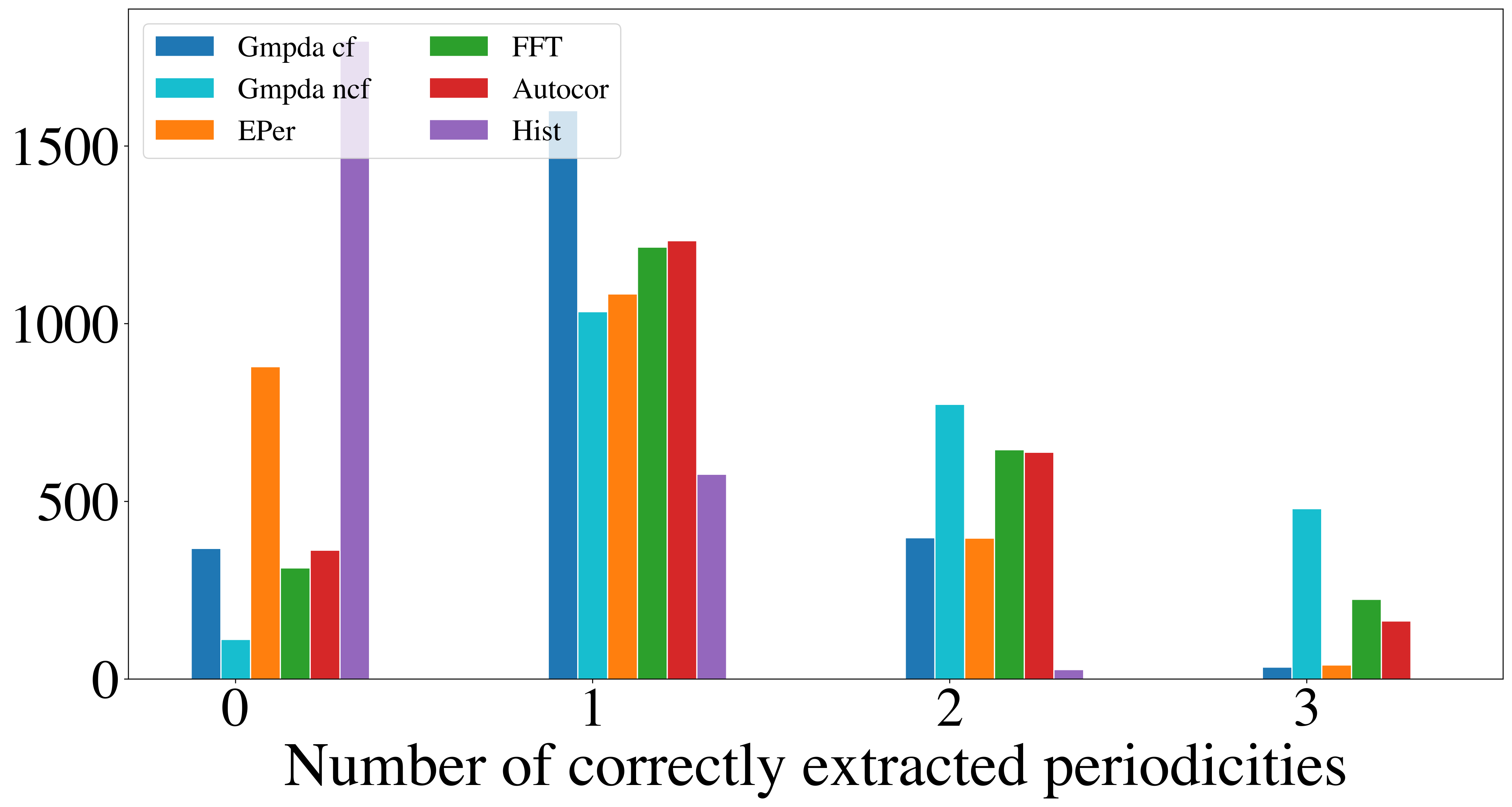}%
    \caption{Summarized performance of GMPDA and alternative methods for the Clock Model and $|\mu|=3$.}
    \label{fig:mu3_hist_clock_acc}
\end{figure}
The detection of three periodicities, $|\mu|=3$, was challenging for all methods, as shown in Figure 11. One possible explanation is that with more periodicities, there are more interaction intervals, i.e., intervals between the periodic events from different periodicities. Furthermore, at least for the Random Walk Model, the histogram is becoming less and less identifiable, as $\sigma$ grows for every subsequent step, which flat out the distribution responsible for the events and this effect is amplified when more than one periodicity is present. We conclude that GMPDA in the current version is not well suited for detecting more than two periodicities. %
%
\subsection{Computational Performance}
%
The computational performance (CPU) of the GMPDA algorithm was estimated for different experiments. For this purpose, we considered time series that were generated for every combination of the following model parameters: $|\mu^*| = [1,2,3]$, events per periodicity = $[50,100,300,500]$, $\sigma^* = [\log(\mu)]$, $\beta=[1]$. GMPDA was executed for each time series, the computational/execution time was determined via Python module \verb|timeit| with $100$ executions. For the generated test cases we tested with following GMPDA configurations (described in Appendix~\ref{app:GMPDA_ALGO}) $loss\_length=[400, 800, 1200]$ and $max\_periods=[|\mu|+2]$, while the remaining parameters were fixed at $L_{min}=5$, $max\_iterations=5$, $max\_candidates=15$, $noise\_range=5$, $loss\_tol\_change=0.01$.%

Our analysis showed that the computational performance had a strong dependence on maximal number of allowed periodicities, $max\_periods$. The CPU for both models (averaged over the number of executions, number of events $n$, and $loss\_length$) is shown in Figure~\ref{fig:cpurwm}. All other parameters had a comparatively minor influence on the performance (data not shown here). %
\begin{figure}[h]
    \centering
    \subfloat[Random Walk Model]{{\includegraphics[width=0.22\textwidth]{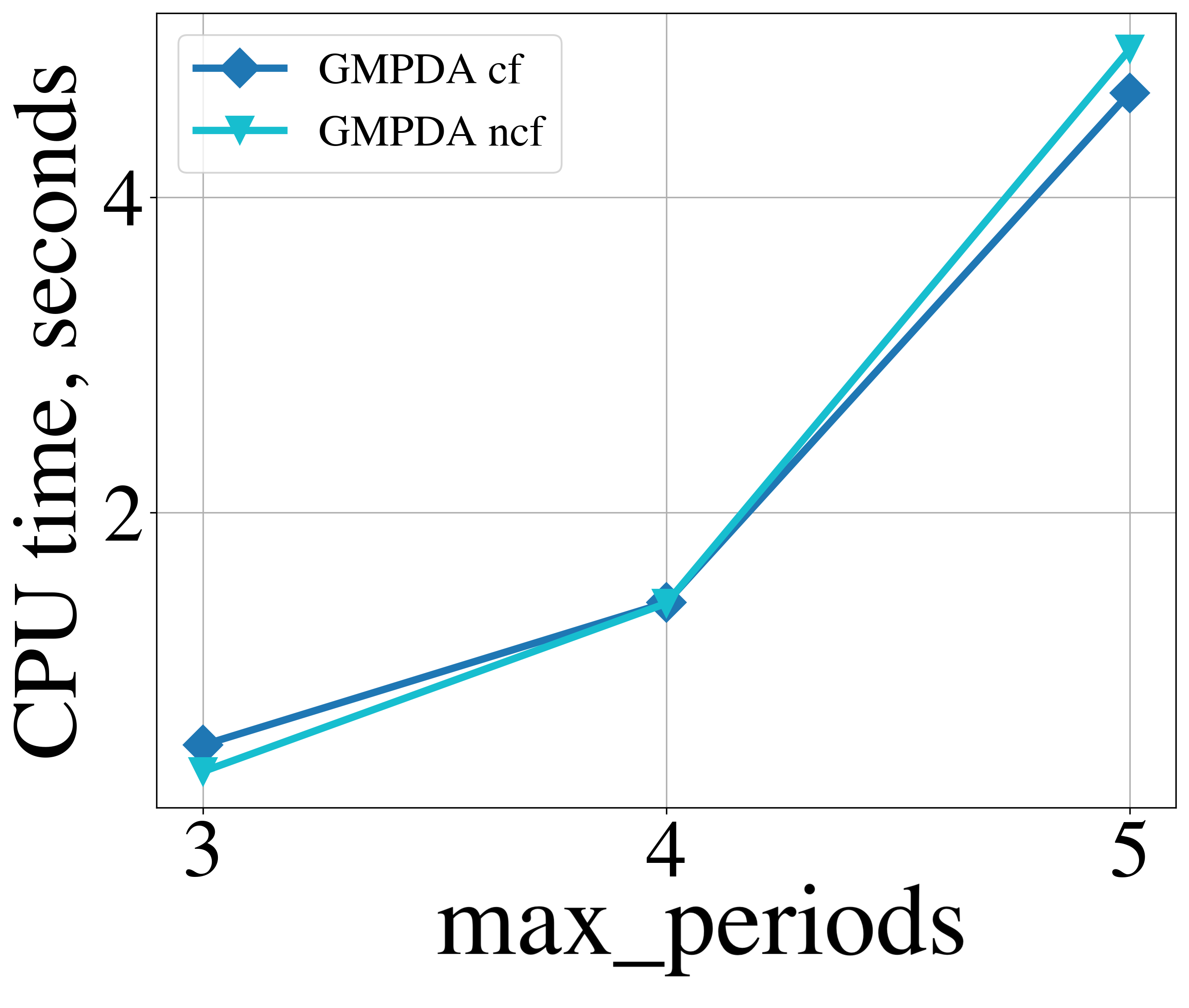}}}
    \subfloat[Clock Model]{{\includegraphics[width=0.22\textwidth]{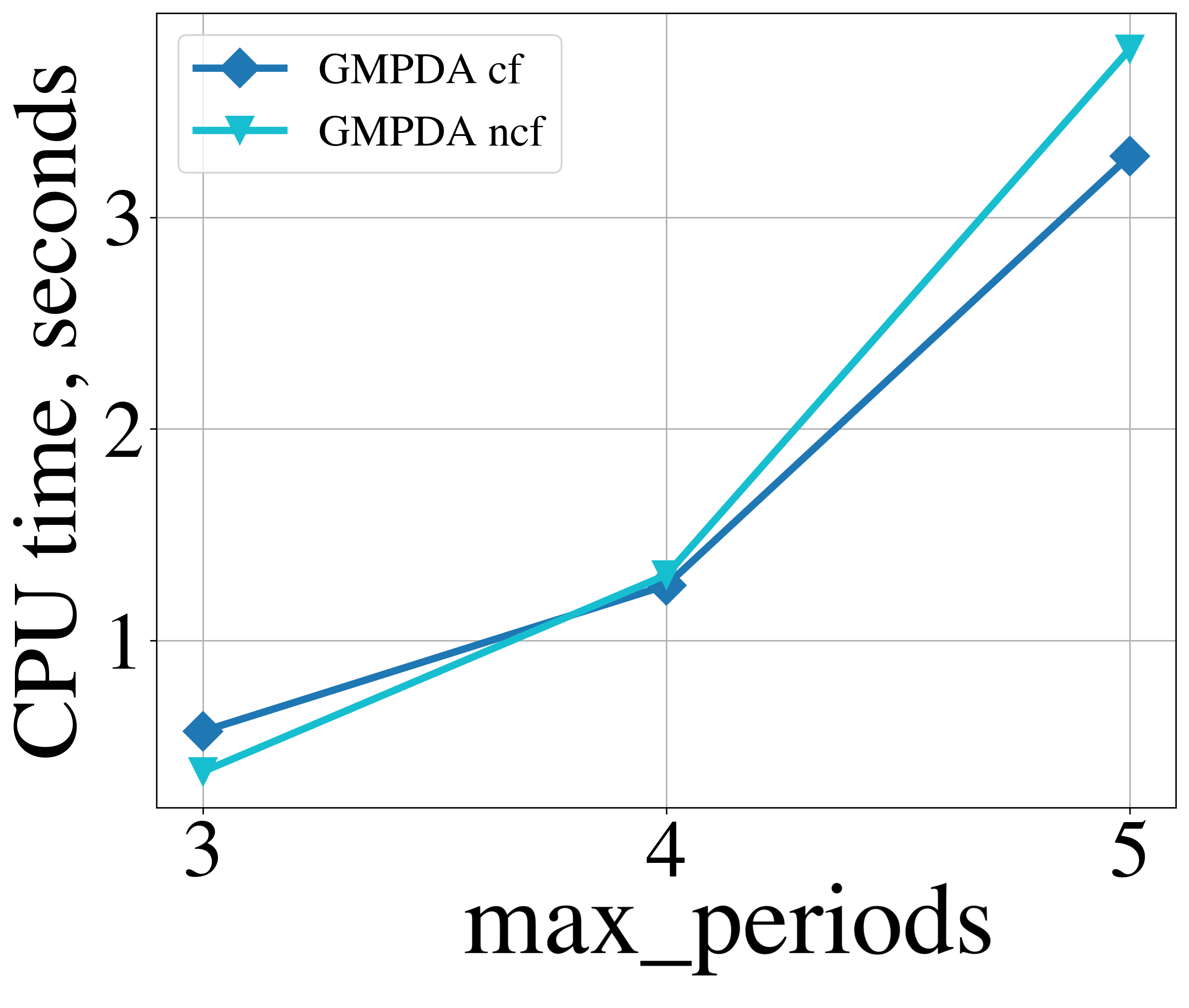}}}%
    \caption{Computational Performance averaged over 1200 executions.}
    \label{fig:cpurwm}
\end{figure}
In additional experiments not shown here, we also investigated the influence of noise, $\beta$, on the computational performance of the algorithm. The results indicated that although, on average, the CPU time increased slightly with increasing noise, $\beta$, the influence of minimal when compared to the maximal number of allowed periodicities, $max\_periods$. Finally, the maximal number of candidates periods, $max\_candidates$, will affect the CPU: a lower $max\_candidates$ resulted in faster execution time but with decreasing algorithm accuracy. %
\subsection{Summary}
We have evaluated the performance of the GMPDA algorithm for a large set of test cases, covering different configurations of the Random Walk and the Clock Models. Our main findings indicate that, first, for time series following the Random Walk Model, GMPDA outperformed alternative algorithms. Second, for time series following the Clock Model, GMPDA outperformed alternative methods in cases with a low variance of the inter-event intervals. All algorithms struggled to identify more than two periodicities.%

In addition, we analyzed the sensitivity to critical simulation parameters across the different algorithms and found that both sigma and the the number events emerged as the strongest determinants of periodicity detection accuracy. The details of the analysis can be found in Appendix~\ref{app:sensitivity_ana}.%
%
\section{Real Application}
\label{sec:real_application}
%
Finally, we also applied the GMPDA algorithm to real data and specifically to the recording of leg movements during sleep from the publicly available MrOS data set~\cite{zhang2018national, dean2016scaling},\cite{blackwell2011associations, blank2005overview, orwoll2005design}.

From 2905 available sleep recordings in community-dwelling men 67 years or older (median age 76 years), we considered all recordings with at least 4 hours of sleep, a minimum of scored events (10 leg movements, 10 arousals), and adequate signal quality based on various parameters in the MrOS database. This resulted in 2650 recordings satisfying our inclusion criteria, from which we randomly selected 100 recordings for this real application case. We have chosen to look at leg movements during sleep because it is known that in a relatively large proportion of the population (up to 23 percent~\cite{haba2016prevalence}), these leg movements tend to occur in a periodic pattern, the so-called periodic leg movements during sleep (PLMS)~\cite{ferri2016world}, with a typical intermovement interval around 20 to 40 seconds~\cite{ferri2017periodic}. We, therefore, expected to find some amount of periodicity in this data set, which - it could be argued - makes this analysis a real-life positive control.%
We applied GMPDA to raw data and preprocessed data. In the preprocessing step, the time series of leg movements of every subject was segmented into \textit{sleeping} bouts according to the following criteria: Each bout (i) contained only sleep interrupted by not more than 2 minutes of wake, (ii) lasted at least 5 minutes, and (iii) contained at least four leg movements. This resulted in 579 sleep bouts from the 100 recordings where GMPDA was applied independently to each bout. The number of events was less than 100 for $85\%$ of the bouts, and for those, the average bout length was 2572 seconds.%
\subsection{GMPDA Configurations}
The following GMPDA parameters were fixed for both data sets: $L_{min}=5$, $L_{max}=200$, $max\_iterations=5$, $max\_candidates=15$, $loss\_length=400$, $max\_periods=5$, $noise\_range=5$, $loss\_tol\_change=0.1$. We chose the tolerance value for a decrease in the loss to be 0.1. (i.e., additional periodicities are only considered if their inclusion results in a change of loss greater than this tolerance value). This value is substantially higher than in the simulated examples ($0.01$) because, in this first real-life application, we aimed to generate robust results with the expected noise in the data. In this sense, the results presented here and the periodicities identified can be seen as "low-hanging fruits."  Moreover, the detection of additional periodicities would be expected with different GMPDA parameters. For the MrOS data set, we assumed a Random Walk Model, which we applied both with and without the curve fitting of the variance parameter $\hat\sigma$. Consistent across all single records, the curve fitting approach identified periodicities with a lower loss, so that we will describe only the curve fitting results in the following. The GMPDA loss with and without curve fitting is compared in Appendix~\ref{app:MrOS_loss} Figure~\ref{fig:MROS_loss_cf_vs_ncf.pdf} and \ref{fig:MROS_loss_cf_vs_ncf_bouts.pdf} %
\subsection{Reference loss}
The GMPDA algorithm will identify the periodicity with minimal loss. However, even if minimal, this loss might still be numerically significant. In a real-life application where it can be assumed that some of the time series do not contain periodic events, there is a need to identify loss values that do not support the existence of periodicities in the data. We have chosen to address this issue by constructing a reference loss, which we derive from the minimal GMPDA loss returned for times series that only contain random noise. %

For the MrOS data set, the length of the included bouts and the number of events ranged from $300$ to $24000$ seconds and $5$ to $430$ events, respectively. In order to obtain an overall reference loss, we constructed 100 noisy bouts with uniformly distributed events for all different combinations of the number of events $[10, 30, 50, 100, 200, 400]$ and length of the bout $[500, 1000, 2000, 4000, 8000, 16000]$. Applying GMPDA to each combination, we obtained an empirical distribution of loss values for cases where the events were generated randomly and did not exhibit any clear periodic pattern. The global MrOS reference loss is set to the $0.01$ quantile of this distribution, corresponding to a value of $0.74468$, rounded to $0.75$ in the following.%

In addition, we also estimate a local reference loss for every single bout in the MrOS data set by generating 100 time series with the bout-specific length and the number of events and taking $0.01$ quantile of the resulting loss-distribution. A significant periodicity was identified when the GMPDA-loss for this bout was lower than the local reference loss. However, the significant periodicities obtained with local reference loss and global one did not differ significantly, and for simplicity, we focus on the results obtained for a global reference loss of $0.75$. %
%
%
\subsection{Results}
The distribution of the GMPDA model loss for all time series is shown in Figure~\ref{fig:mros_loss_len} and Figure~\ref{fig:mros_loss_len_bouts} for the whole night recording and the single sleep bouts, respectively. The figures suggest that GMPDA loss did not systematically change with the length of the times series.  However, the loss tended to decrease with the number of events in the time series, or more specifically - as already seen in the simulation experiments - for time series with a low number of events, the resulting loss was not distinguishable from the loss found for non-periodic time series. %
\begin{figure}[h]
    \centering
    \subfloat{\includegraphics[width=0.23\textwidth]{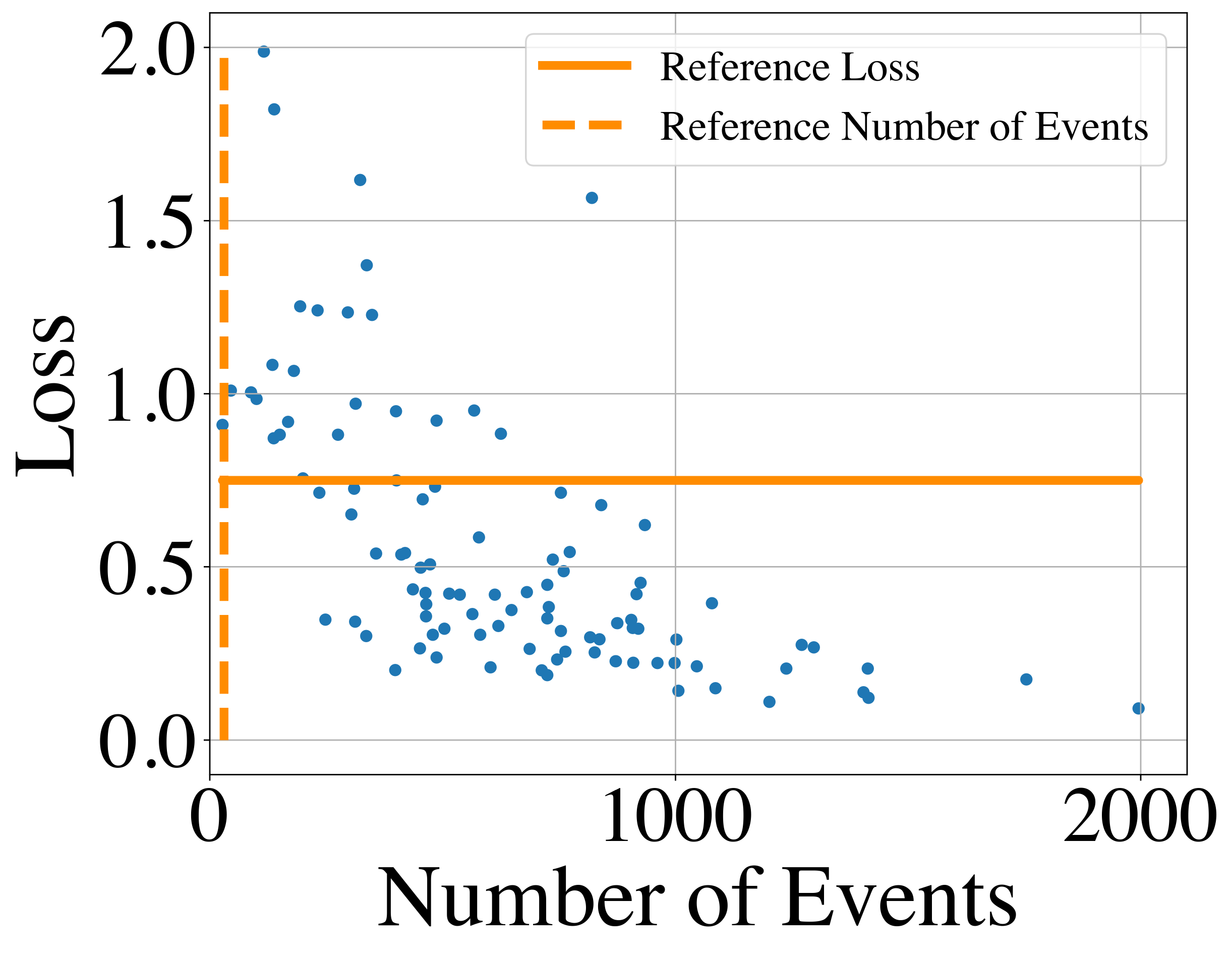}}
    \subfloat{\includegraphics[width=0.23\textwidth]{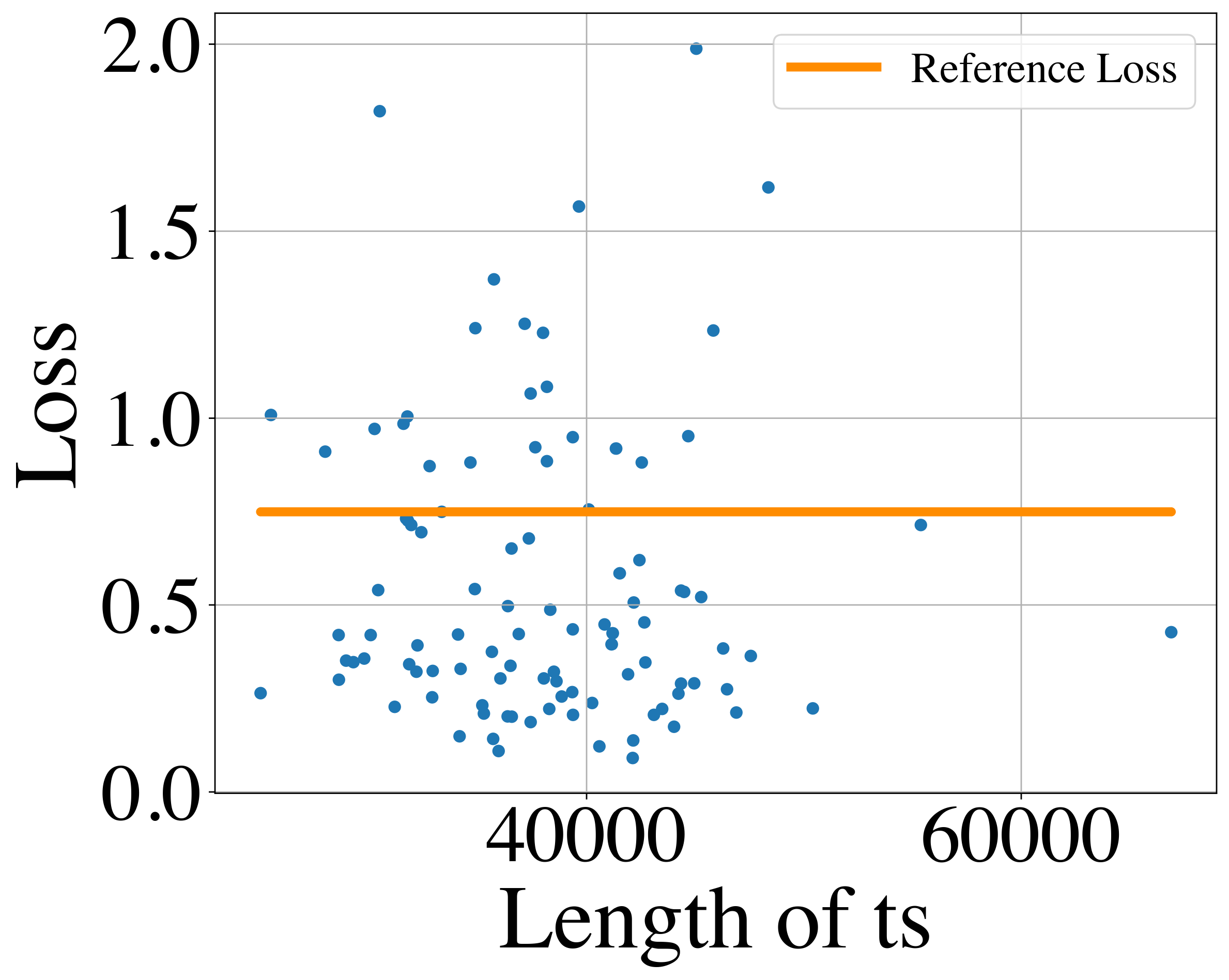}}%
    \caption{GMPDA loss for 100 whole night time series plotted against the number of events (left panel) and length of time series (in seconds, right panel)}
    \label{fig:mros_loss_len}
\end{figure}
\begin{figure}[h]
    \centering
    \subfloat{\includegraphics[width=0.23\textwidth]{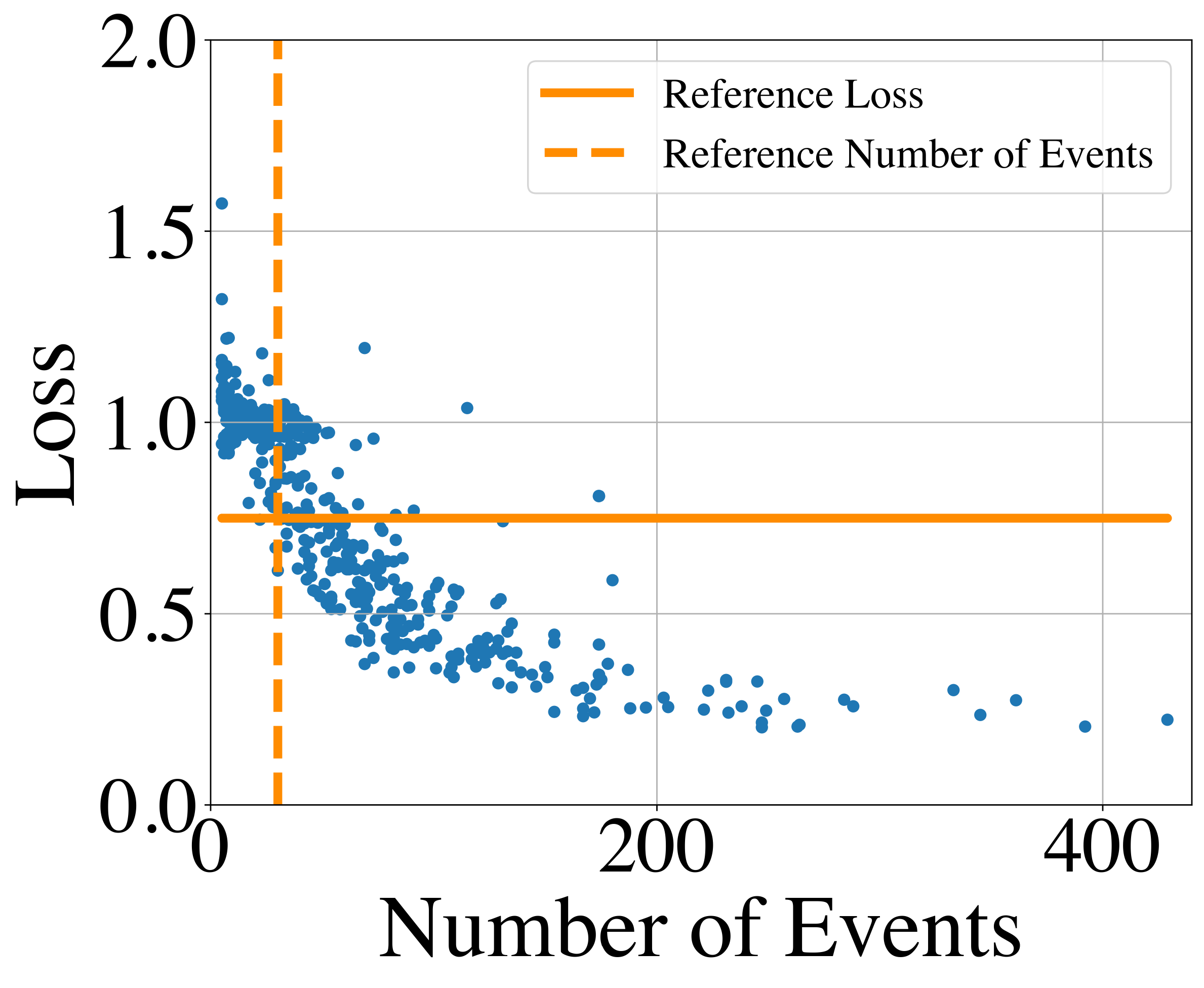}}
    \subfloat{\includegraphics[width=0.23\textwidth]{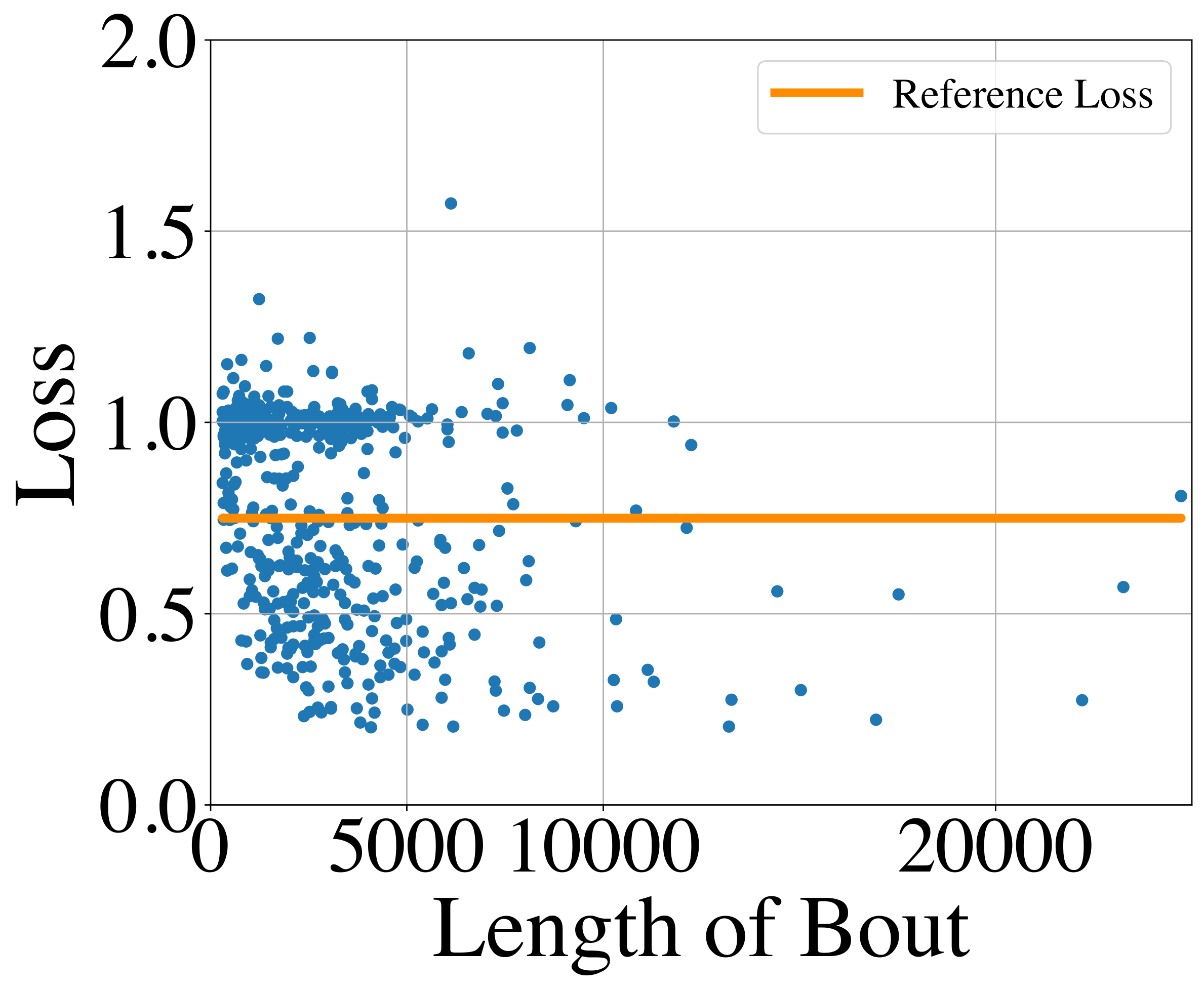}}%
    \caption{GMPDA loss for 579 sleep bouts of at least 5 minutes plotted against the number of events (left panel) and the length of the sleep bout (in seconds, right panel)}
    \label{fig:mros_loss_len_bouts}
\end{figure}
Please note that the left panel of Figure~\ref{fig:mros_loss_len_bouts}, which shows the distribution of the loss for the number of events in the MrOS data set, could also be used to suggest a minimum number of events needed for the GMPDA algorithm to detect a significant periodicity in this data set. For the records selected here, no significant periodicity was detected for any bout with less than 30 events (see reference number of events in the Figure). Other records from the same data set and new data sets are needed to determine whether this reference number constitutes an absolute threshold for bio-medical event data. %

From the $579$ sleep bouts $183$ (31.6\%) and from $100$ whole night time series $75$ had a loss below $0.75$. The corresponding histograms of the significant periodicities extracted from the signals by GMPDA are shown in Figure~\ref{fig:MrOs_hist_good_loss_bouts} and Figure~\ref{fig:MrOs_hist_good_loss}. %
\begin{figure}[h]
    \centering
    {\includegraphics[width=0.48\textwidth]{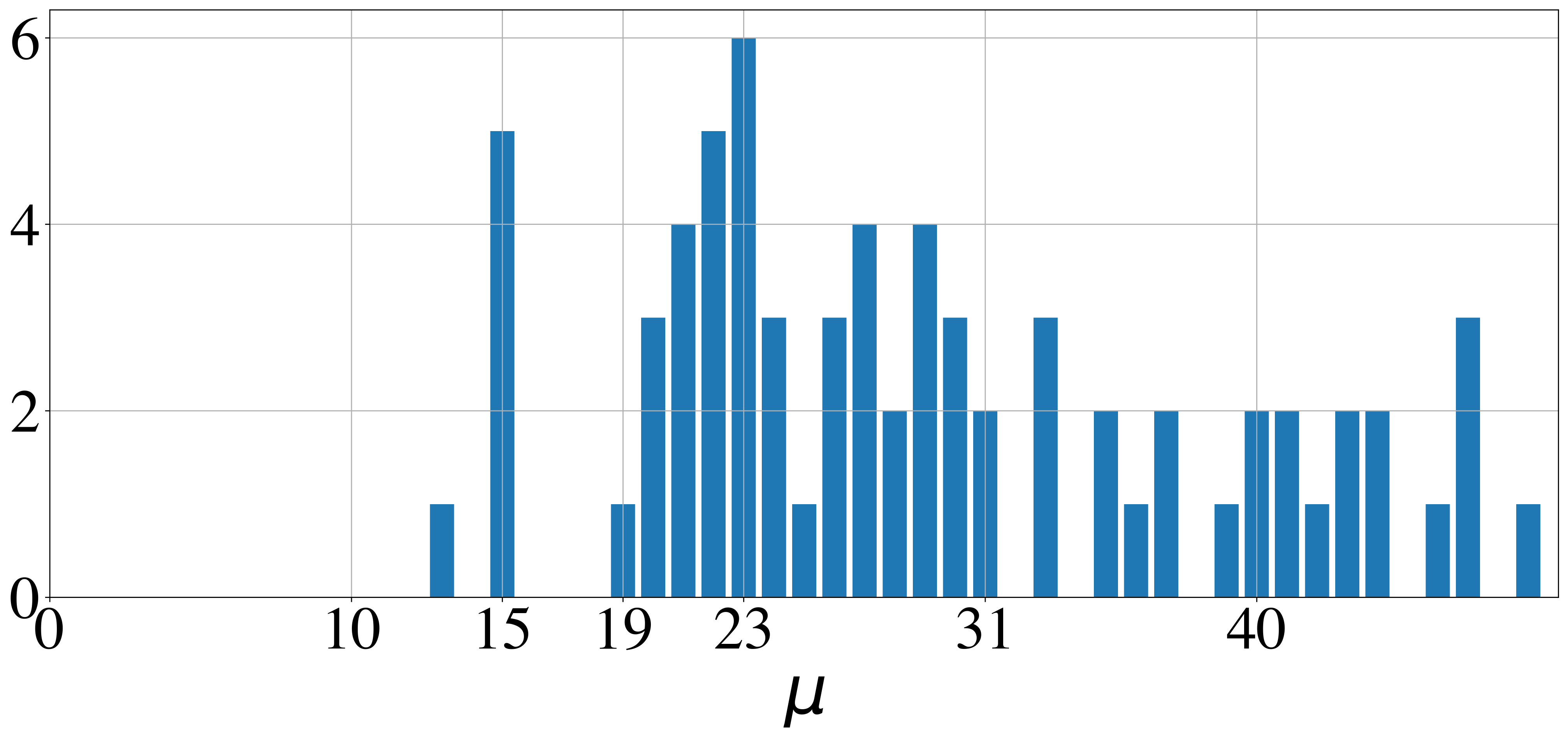}}%
    \caption{Histogram of significant periodicities identified in 100 whole night time series by GMPDA.}
    \label{fig:MrOs_hist_good_loss}
\end{figure}
\begin{figure}[h]
    \centering
    {\includegraphics[width=0.48\textwidth]{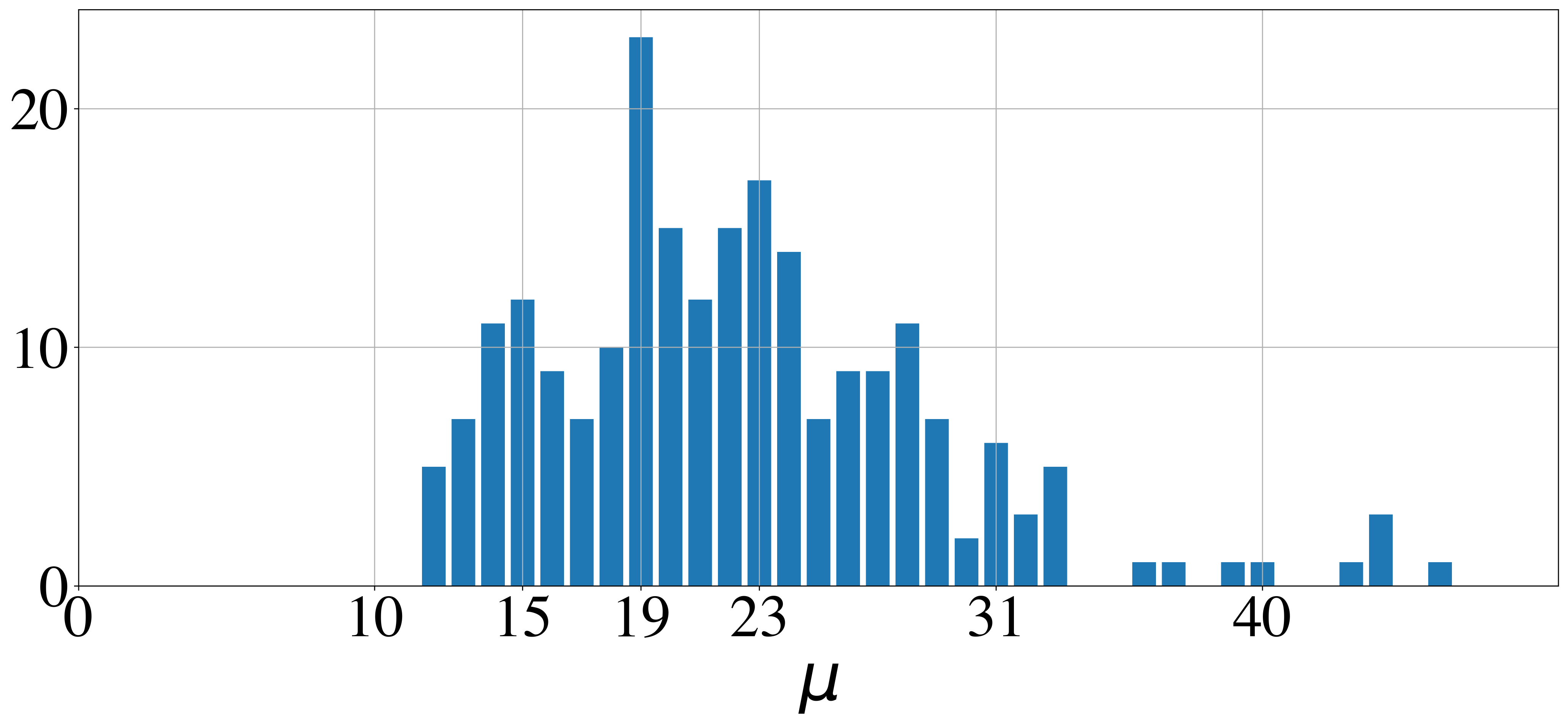}}%
    \caption{Histogram of significant periodicities identified in 579 sleep bouts, 5 minutes or longer, by GMPDA.}
    \label{fig:MrOs_hist_good_loss_bouts}
\end{figure}
In both Figures, the expected peak in periodicities is around $20$. Another minor peak (rather unexpected) is at $15$. Significant periodicities ranged from 10 to 33 seconds (except two bouts with a periodicity of 49 and 192 seconds). Periodicities around $20$, i.e., $\mu\in[17, 18, 19, 20]$, were present in 95 bouts (out of $183$), from $77$ subjects (out of $100$). Periodicities around $15$, i.e., $\mu\in[12, 13, 14]$, were present in $30$ bouts from $18$ subjects.%

Although the minimal periodicity and noise range were set to five, $\mu=12$ was the smallest periodicity identified by the algorithm for \textit{significant} bouts.%
%
\section{Conclusion}
\label{sec:conclusion}
%
%
In this paper, we developed the Gaussian Mixture Periodicity Algorithm (GMPDA) to address the overlapping periodicity detection problem for noisy data. The GMPDA algorithm is based on a new generative model scheme that accounts explicitly for a Clock Model and a Random Walk Model. The Clock Model describes periodic behavior in systems, in which variances do not change over time because a pacemaker governs the events in the system. Examples for a Clock Model are scheduled or seasonal behavior - like traffic patterns guided by working hours or migration patterns governed by seasons. In contrast, in Random Walk Models, the variances increase over time, making distant temporal predictions difficult or impossible. The Random Walk Model describes a biological behavior - like footsteps or gene expression - where events only depend on the interval to the last event. %

The main entry point for GMPDA is the empirical histogram of all forward order inter-event intervals, i.e., for every event, we consider not only the interval to the next event (onset to onset) but also to all subsequent events. This histogram contains information about the underlying prime periodicities but also about the interaction noise between events associated with different periodicities and about false positive noise. We approximate the overall noise by an explicit formulation under the assumption of the noise being uniformly distributed. The approximation accounts for all interaction intervals, which length is limited by a user-defined parameter in the GMPDA algorithm. After its subtraction, the GMPDA Algorithm hierarchically extracts the multiple overlapping periodicities by minimizing the loss, which is defined as the absolute difference between the parametrized histogram obtained by the generative scheme and the empirical histogram. %

GMPDA is implemented in a computationally efficient way and is available open-source on \url{https://github.com/nnaisense/gmpda}. %
We have demonstrated its performance on a set of test cases. Test cases included up to three overlapping periodicities with different values for the involved Gaussian noise and the different number of events. For the Random Walk Model, we can conclude that GMPDA outperformed the FFT and Autocorrelation-based approaches and the EPeriodicity algorithm in identifying true prime periodicities. For the Clock Model, GMPDA outperformed other algorithms in cases with a low variance of intervals.

GMPDA performed well in the presence of noise for a signal-to-noise ratio of 1:1, and it performed adequately up to a ratio of 1:2, with an appropriate number of events. This appropriate number of observed events depends on the signal-to-noise ratio. However, it needs to include more than 30 actual periodic events for GMPDA to identify any periodicity. %

Finally, we applied GMPDA to extract significant periods in real data, focusing on leg movements during sleep. The main results here were (i) that GMPDA was able to identify the expected periodicities around 20 seconds, (ii) that we have introduced a procedure to identify a data set dependent reference loss (of $0.75$) that could be used to distinguish significant from spurious periodicities, (c) and that our results suggest that a minimal number of events ($30$) required by GMPDA for performing periodicity detection successfully in biomedical data. %

The general nature of the generative framework and the formulation of GMPDA allow an alternative statistical parametrization for the event data. An extension, for example, would be to model events as a Poisson process, which for multiple periodic generative functions could similarly be modeled in terms of a sum of scaled probability density functions. Further, GMPDA can be extended towards periodicity extraction in non-stationary event time series. One approach could be to assume that the time series under investigation can be divided into locally stationary segments. Then, deploying a bottom-up-based segmentation strategy, we could estimate in an alternating manner the optimal switching points between the segments and the underlying prime periodicities for each stationary segment. Another approach could be to incorporate the Monte Carlo based particle approach for an adaptive periodicity detection presented in~\cite{ghosh2017finding}. This would allow GMPDA to adapt to non-stationary changes and remains for future work. %
%
%
%
%
\appendices

\section{GMPDA Algorithm}
\label{app:GMPDA_ALGO}
In the section the steps involved in the GMPDA algorithm are outlined in very detail. 
\subsection{Approximation of $|\zeta(\mu)|$}
\label{subsub:approx_zeta}
In equation (\ref{eq:zeta2}) we estimate the number of interaction intervals as a decreasing linear function $\mathbb E[\hat\zeta(\mu)] = z(1-\frac{\mu}{N_T})$. An approximation of $z$ is required, as the information about the amount of noise and the number of true periodicities is unknown a priori. Here, we propose the following approximation:
\begin{align}
    \label{eq:zetaz}
    \hat{z} = \frac{1}{z_{min}}\sum\limits_{i=1}^{z_{min}} D(\mu)(i),
\end{align}
This approximation follows the idea that first, $z$ should be close to the maximal value of $\mathbb E[\hat\zeta(\mu)]$ and that second, all non zero contributions in $D(\mu)$ for $\mu < \argmin\limits_{\mu}\mu^*$ are due to the interaction intervals, asymptotically for an increasing number of events and/or increasing number of involved prime periodicities. For the GMPDA algorithm we set default $z_{min}=L_{min}-1$. Thus, in the algorithm, the length of the interaction intervals is limited either by the minimal expected periodicity or can be adjusted by the user.

We verified this approximation empirically on a set of 50000 test cases with and without noise and a priori known two random chosen prime periodicities $\mu*_{1/2}\in[10,60]$: $\mathbb E[\hat{\zeta}(\mu)]$ with $\hat{z}$ provided on average a good linear fit, since the distribution of the mean errors between $\mathbb E[\hat{\zeta}(\mu)]$ and $\mathbb E[\zeta(\mu)]$ was centered at zero and was approximately normal (data not shown).

However, it must be stressed that the assumption of an uniform distribution of interaction intervals between the periodic and the noise events maybe unrealistic in real life data sets and there is currently no alternative available for estimating zeta from the observed data. This remains an area of improvement for the GMPDA algorithm.
%
\subsection{Candidate Period Identification}
\label{subsub:NacdidatePer}
%
The proposed algorithm hierarchically extracts a set of candidate periodicities which can explain $D(\mu)$ using an integral convolution approach. The method works by iteratively selecting periodicities which explain many of the observed intervals, and then subtracting the integer multiple intervals which can be explained by these periodicities.

The algorithm takes as input some guess $\hat{\sigma}$, a range in which to search for periodicities $\{L_{min}, L_{max}\}$, a number of hierarchical periodicity extraction steps $max\_iterations$, and a maximal number of periodicities to extract at each hierarchical iteration, $max\_candidates$.

Recall that $D(\mu)$ counts the number of times a given interval $\mu$ appears between any two events in the time series.
\begin{align}
\label{eq:algo_dmu}
    D(\mu) = \sum_{i,m >0} \mathds{1}_{s_{i+m} - s_i = \mu}.
\end{align}
One tempting method would be to select $\argmax\limits_{\mu}D(\mu)$ as the first prime period $\hat{\mu}_1$. However, for small or noisy real world data $\argmax$ may not be the prime period.

We introduce the notion of an \textit{integral convolution}, in which we integrate around a fixed $\mu$ to capture how much of the observed intervals in $D(\mu)$ are explainable by that particular "mean" periodicity, which also acts to smooth $D(\mu)$. We therefore define a function $\tau(\hat{\mu},\hat{\sigma}, D(\mu))$ which will act as a symmetric convolution kernel across $D(\mu)$, centered at the candidate means $\hat{\mu}$. This function provides a point-wise estimate of the explained data for a given $\mu$ in $D(\mu)$:
\begin{align}
    \label{eq:tau_mu}
    \tau(\hat{\mu},\hat{\sigma}, D(\mu)) = \int_{\hat{\mu} -(1.96*\hat{\sigma})}^{\hat{\mu} +(1.96*\hat{\sigma})} D(\mu) \partial \mu.
\end{align}
Because we don't want to calculate the full loss function (\ref{eq:loss_master}) at this stage due to computational expense, we approximate our loss function with the function $\tau(\hat{\mu}, \hat{\sigma}, D(\mu))$ for the Clock Model:
\begin{equation}
    \label{eq:expectationclock}
    \mathbb E(\hat{\mu}_p) = \frac{\hat{\mu}_p}{N_T} \cdot [\sum_{i=1}^{\frac{N_T}{\hat{\mu}_p}} \tau(i \cdot \hat{\mu}_p, \hat{\sigma},D(\mu))],
\end{equation}
and for the random walk:
\begin{equation}
    \label{eq:expectationrw}
    \mathbb E(\hat{\mu}_p) = \frac{\hat{\mu}_p}{N_T} \cdot [\sum_{i=1}^{\frac{N_T}{\hat{\mu}_p}} \tau(i \cdot \hat{\mu}_p, i \cdot \hat{\sigma},D(\mu))].
\end{equation}
Functions (\ref{eq:expectationclock}) and (\ref{eq:expectationrw}) approximate, for each candidate prime period, how much of the data can be explained by this periodicity in some confidence interval about $\mu_p$. If a periodicity is present and is persistent through the time series, integer multiples of the periodicity $\hat{\mu}_p$ will also frequently appear in (\ref{eq:expectationclock}) and (\ref{eq:expectationrw}), we can use this information to select the periodicity which explains the most data.

Once the first periodicity - $\hat{\mu}_1$ - has been identified, we remove intervals found in $D(\mu)$ which can be explained by $\hat{\mu}_1$, i.e. integer multiples of $\hat{\mu}_1$. We realize this by setting $D(\mu)$ to zero for $\mu \in [i\cdot \hat{\mu}_1+\hat{\sigma},i\cdot \hat{\mu}_1-\hat{\sigma}],\, i=1,\dots$, and recompute $\tau(\mu)$ from $D(\mu)$ missing these intervals.

The GMPDA algorithm performs reasonably well at identifying the true periods $\mu^*$ as $\hat{\mu}$ without the use of the loss function or adjustments to $\hat{\sigma}$. But without a measure of relative goodness of this estimates, we have no stopping criteria for finding multiple periodicities. Instead, we repeat this procedure for $max\_iterations$ iterations.

Once we have initialized (hierarchically) a set of candidate prime periods $\hat{\mu}^{init}$ using this "fast" method, we compute a better estimate of the variance and loss using methods which will be elaborated in the following sections.


\subsection{Non-linear least squares fitting for $\hat{\sigma}$}
\label{sub:LS_sigma}
We can improve our guess of the variance $\hat{\sigma}$ by formulating a non-linear least squares curve fitting optimization problem, in which our set of parameters are those of a Gaussian PDF. That is, here we consider $D(\mu)$ which can be modeled as the sum of Gaussian PDFs, and a set of candidate means $\hat{\mu}$ for those Gaussian PDFs. For a fixed set $\hat{\mu}_p$ we initialize guesses for $\hat{\sigma}_p$, for $p=1,\dots,P$, and deploy Trust Region Reflective algorithm to obtain an update for the guesses of $\hat{\sigma}_p$. It is implemented with $curve\_fit()$ from Scipy's optimization package.
%
\subsection{Selecting true parameters: Loss Function}
\label{sec:loss_function}
%
The parameter estimates $\hat{\mu}^{init}$, $M$, $\hat{\sigma}$ are assessed with respect to $D(\mu)$ - the observed intervals between events in the time series - using a loss function. This loss function describes the proportion of intervals in the data which can be explained with (i) the parametrized "generative function" $G_M(\hat{\mu}, \hat{\sigma})$ implicated by the estimates, which is asymptotically the same as the expectation of $D(\mu)$ if $\hat{\mu} = \mu^*$ and $\hat{\sigma} = \sigma^*$ and  (ii) the noise approximation. 

Computing the loss function (\ref{eq:loss_master}) is expensive because in the case of the Random Walk Model $G_{RW}$ requires computation at increasingly large intervals since the variance terms grow linearly, and thus the area covered with some density by a single Gaussian distribution grows at the same rate. Thus, we only want to compute $G_M$ for a few very probable periodicities (the set $\hat{\mu}^{init}$ computed in the fast algorithm), using the optimal variance guesses $\hat{\sigma}$, and only across a limited range of intervals specified by $loss\_length$ chosen a priori.

We also adjust the scaling factor of the generative function to account for sections of the time series which may not have any events, for instance missing values, or large intervals of the time series with no observations).The concerns the scaling factor $c_p$, for $p=1,\dots,P$, of $G_M(\hat{\mu}, \hat{\sigma})$ for Clock Model and Random Walk Model in equations (\ref{eq:P_Mfull_Clock}) and (\ref{eq:P_Mfull}), respectively. The adjusted scaling factor is:
\begin{align}
    \label{eq:new_scalingfactor}
    c_p=\frac{N_T(\hat\mu_p)}{\hat\mu_p}-(m-1),
\end{align}
where $N_T(\hat{\mu}_p)$ is the sum of intervals which are smaller than $\hat\mu_p + (\hat{\sigma}_p\cdot 2)$. This correction ensures that we only count "possible appearances" in the time series on sections which actually have events. Without this scaling factor, missing values would bias our results towards higher frequencies and the scaling factor would be far too large for lower frequencies which may appear in the time series, but with intervals of no-events.

Our final loss function will therefore be constructed using $D(\mu)$ computed from the real data, $\mathbb E[\hat{\zeta}(\mu)]$, $G_M(\hat{\mu}_{init}, \hat{\sigma})$, and one additional parameter, $loss\_length$. This parameter manages high variance at high integer multiples and decreases the computational complexity. In the Random Walk Model, for high integer multiples of a periodicity, the implied Gaussian distributions of intervals begin to have large tails and the distributions density in mean decreases. Meanwhile for the Clock Model, estimating many integer multiples is not actually necessary to compute the true periods. Therefore, the loss we compute in the algorithm is:
\begin{equation}
    \label{eq:loss_empirical}
    \hat{\mathcal{L}} = \sum_{\mu = 0}^{loss\_length} |  D(\mu) - \hat{\zeta}(\mu) - G_M(\hat{\mu}), \hat{\sigma}|,
\end{equation}
Within the algorithm, we compute $G_M$ for all combinations of the set $\hat{\mu}^{init}$ up to order $max\_combi$, and select our true set of periodicities as that which minimizes (\ref{eq:loss_empirical}).

Please note, the number of true periodicities $max\_combi$ is not known a priori, the optimal value of $max\_combi$ will minimize the loss. However, in real applications we might have situations where there are weak peaks in $D(\mu)$ around very large $\mu$ due to noise or the influence of large/slow interactions intervals. Adding these to the set of prime periodicities will decrease the loss, but will not contribute to the identification of intrinsic periodicities.
To account for this, GMPDA provides the possibility to control the magnitude of the loss decrease by a parameter $loss\_decrease\_tol$, with the loss being typically of magnitude $1$ and lower, see Section~\ref{sec:real_application}. That is, setting this tolerance parameter to a very low number, e.g., $loss\_change\_tol=0.001$, will result in including more periodicities (that might be due to noise), while a larger number, e.g., $loss\_change\_tol=0.1$, will be more conservative.
%
%
\section{Performance}
\label{app:performance}
%
\subsection{$|\mu|=1$}
\label{app:|mu|=1}
Here the performance of GMPDA with respect to noise with curve fitting and without curve fitting is presented.
\begin{figure}[h]
    \centering
    \subfloat[No curve fit]{{\includegraphics[width=0.23\textwidth]{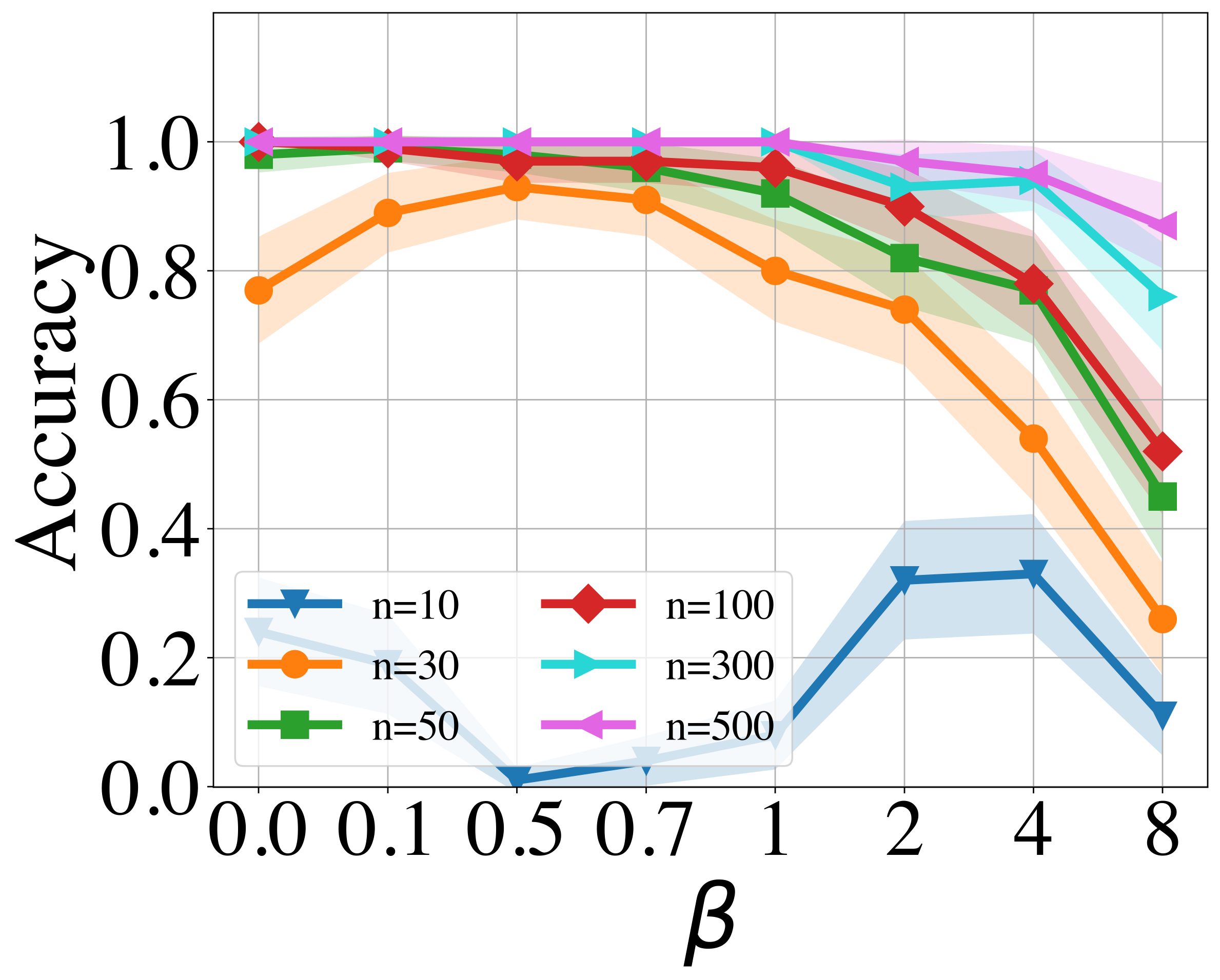}}}
    \subfloat[With curve fit]{{\includegraphics[width=0.23\textwidth]{pics/mu1_snkn_rw_acc.pdf}}}
    \caption{Random Walk Model Performance w.r.t. $\beta$, for $\sigma=log(\mu)$ and $|\mu|=1$.}
    \label{fig:perf_rw_sigma_cf_ncf_app}
\end{figure}
The Random Walk Model model exhibits a decay in performance with an increasing noise for $n>30$. For $n=30$, the performance increases  till $\beta=0.5$ as the noise, to a certain extent, is acceptable due to definition of the variance for the Random Walk Model\ref{eq:rw_model}.
\begin{figure}[h]
    \centering
    \subfloat[No curve fit]{{\includegraphics[width=0.23\textwidth]{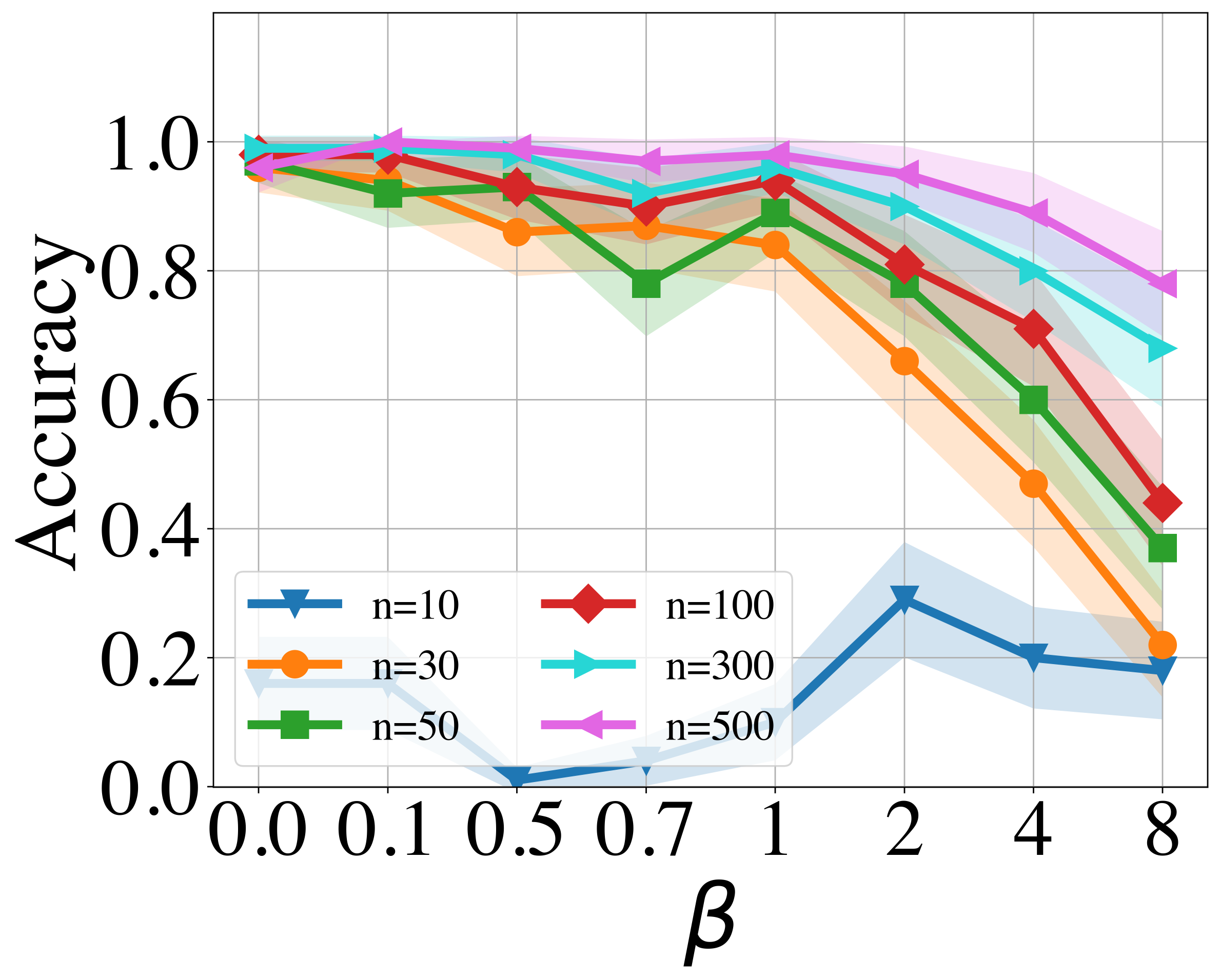}}}
    \subfloat[With curve fit]{{\includegraphics[width=0.23\textwidth]{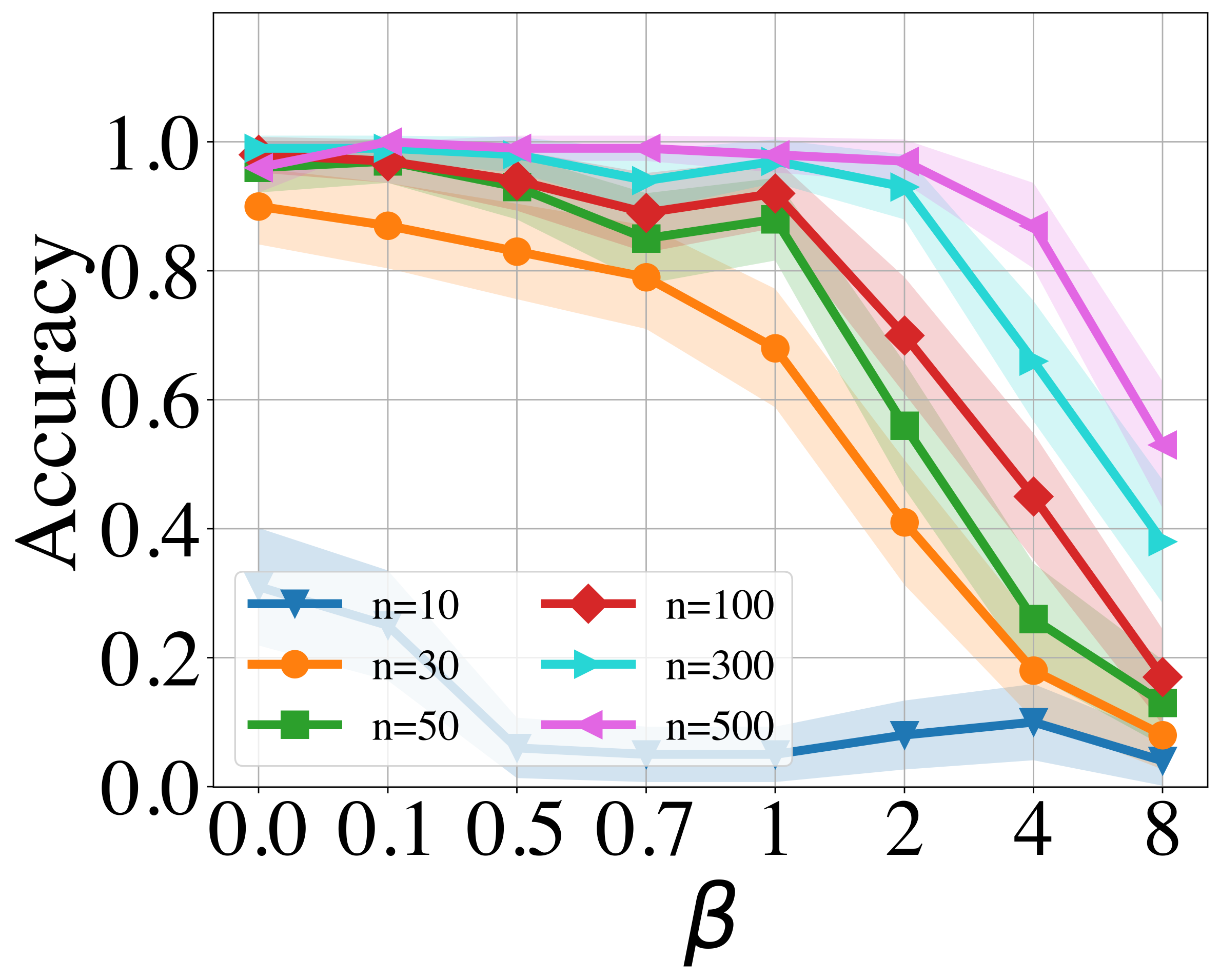}}}
    \caption{Clock Model Performance w.r.t. $\beta$, for $\sigma=log(\mu)$ and $|\mu|=1$.}
    \label{fig:perf_cl_sigma_cf_ncf_app}
\end{figure}
The Clock Model exhibits for $n>10$ a decay in performance with an increase in $\beta$.
For both models the case $n=10$ performs not sufficient, indicating that the number of events must be definitely higher then ten.
%
\subsection{Comparison to alternative Methods wrt. Noise $\beta$}
\label{app:noise_beta}
Here the performance of GMPDA and of the alternative methods is compared regarding an increase in noise.
In the following figures the accuracy of all the involved methods is plotted for $|\mu=1|$, different number of events $n$, while it was averaged over all considered values of variance $\sigma$.
\begin{figure}[h]
    \centering
    \subfloat[n=10]{{\includegraphics[width=0.23\textwidth]{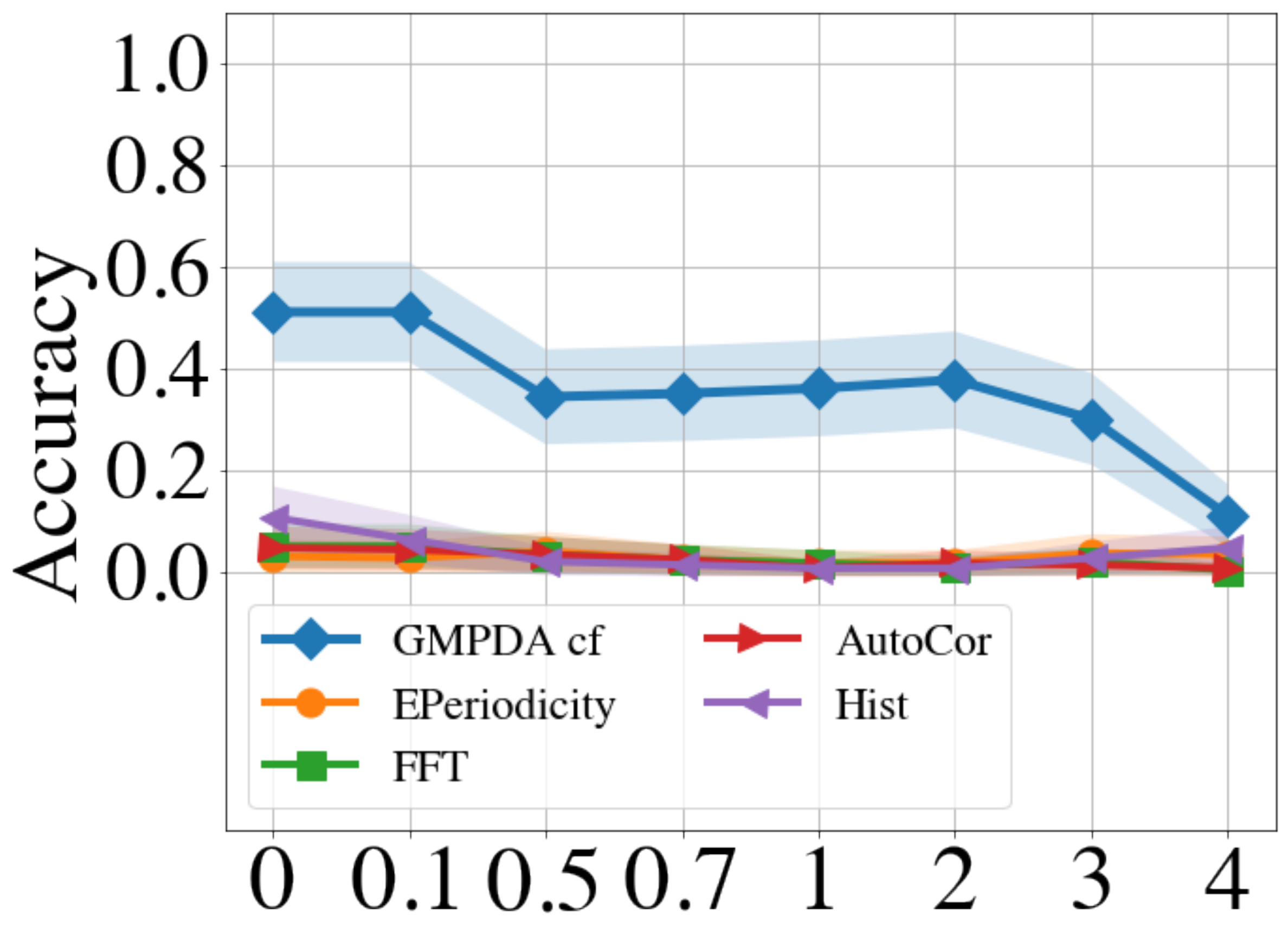}}}
    \subfloat[n=30]{{\includegraphics[width=0.23\textwidth]{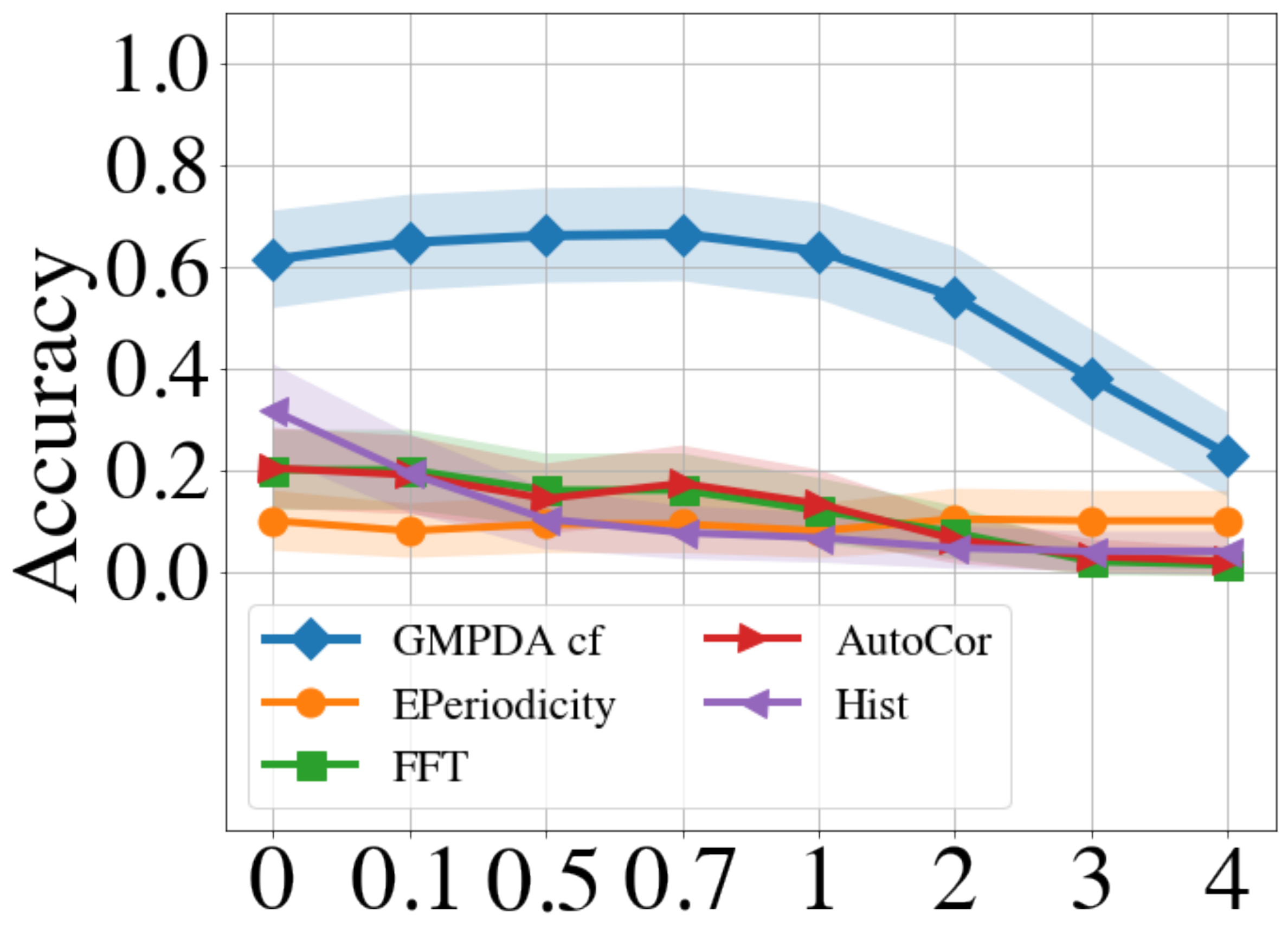}}}
    \newline
    \subfloat[n=50]{{\includegraphics[width=0.23\textwidth]{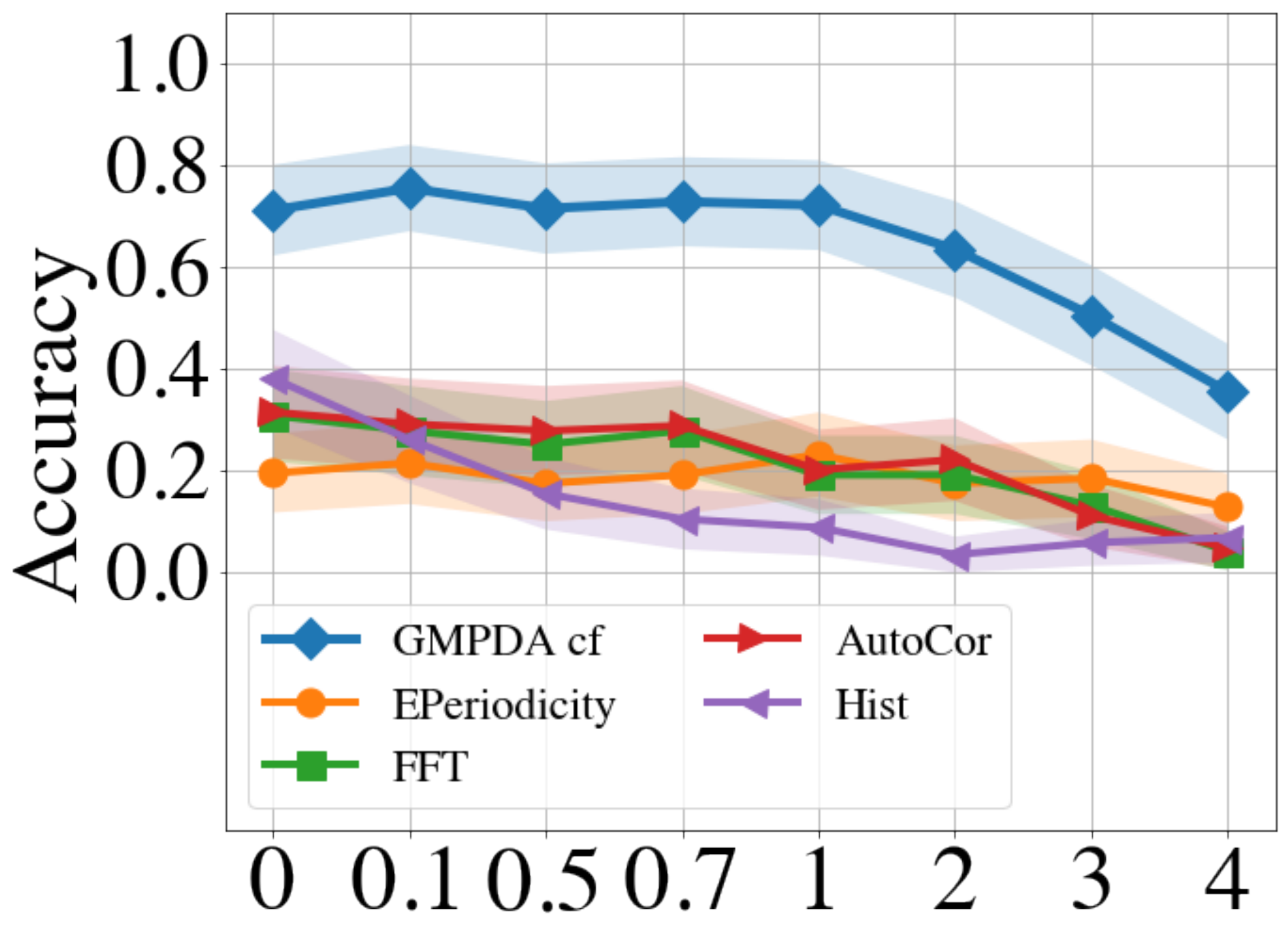}}}
    \subfloat[n=100]{{\includegraphics[width=0.23\textwidth]{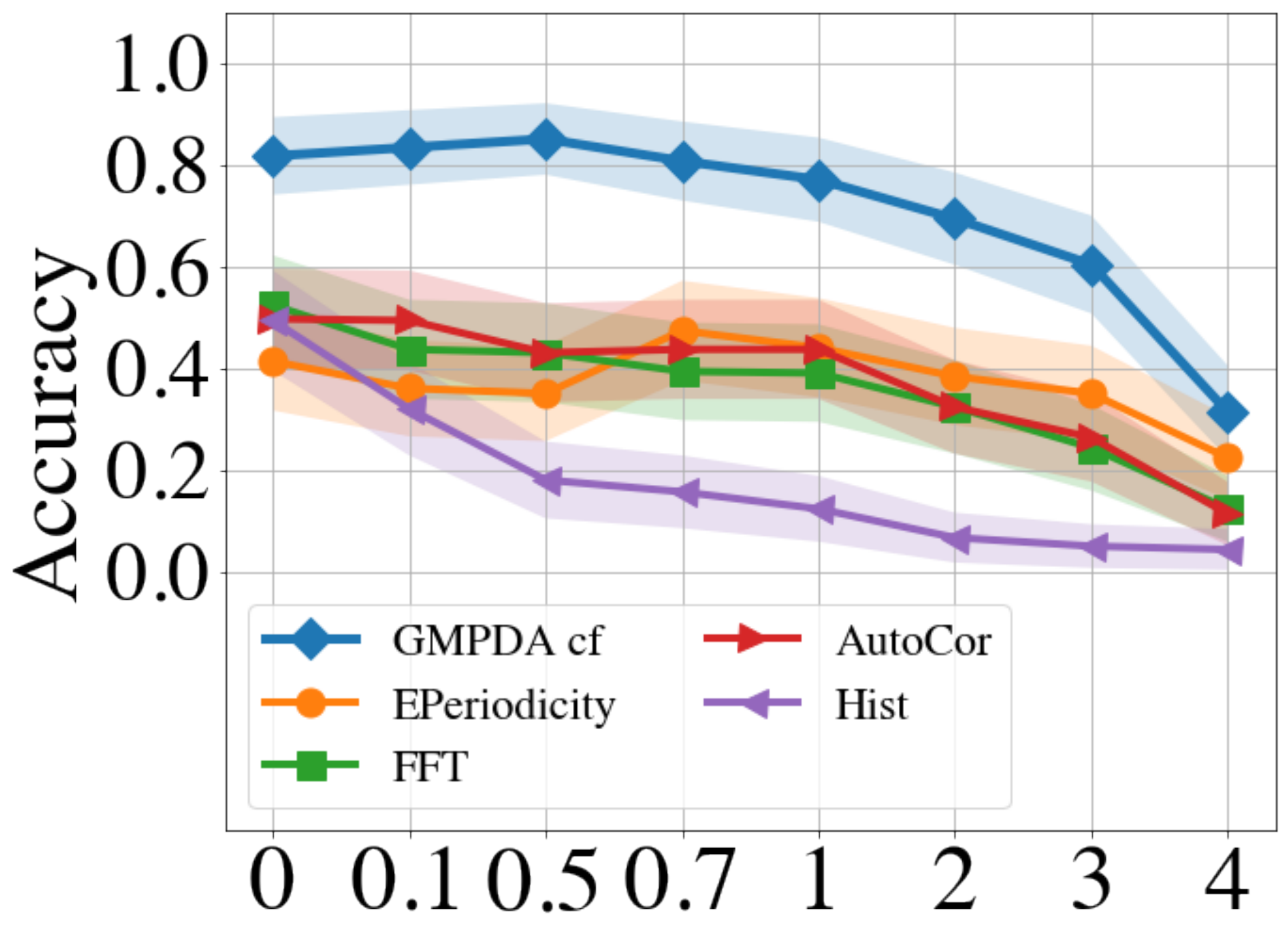}}}
    \newline
    \subfloat[n=300]{{\includegraphics[width=0.23\textwidth]{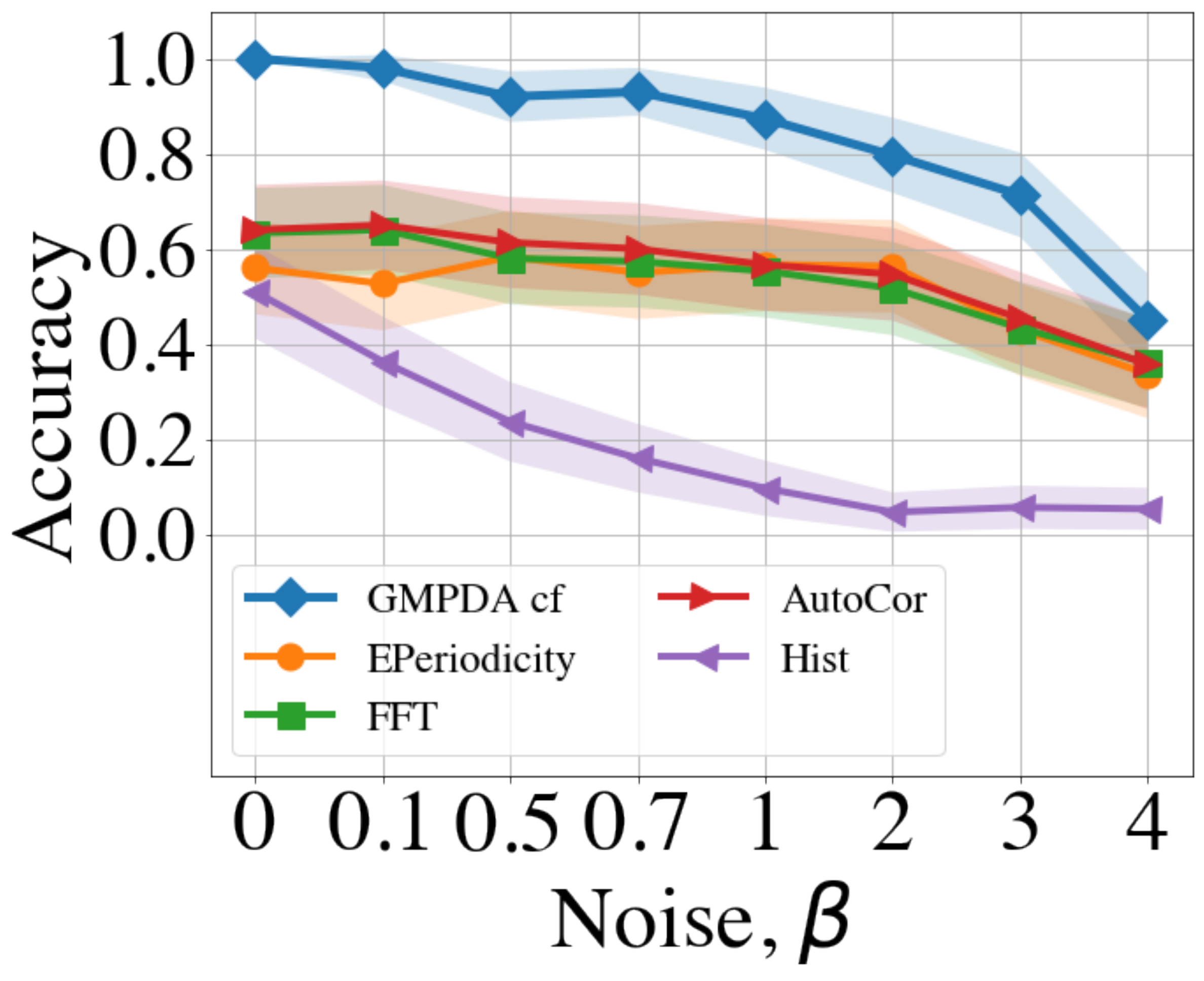}}}
    \subfloat[n=500]{{\includegraphics[width=0.23\textwidth]{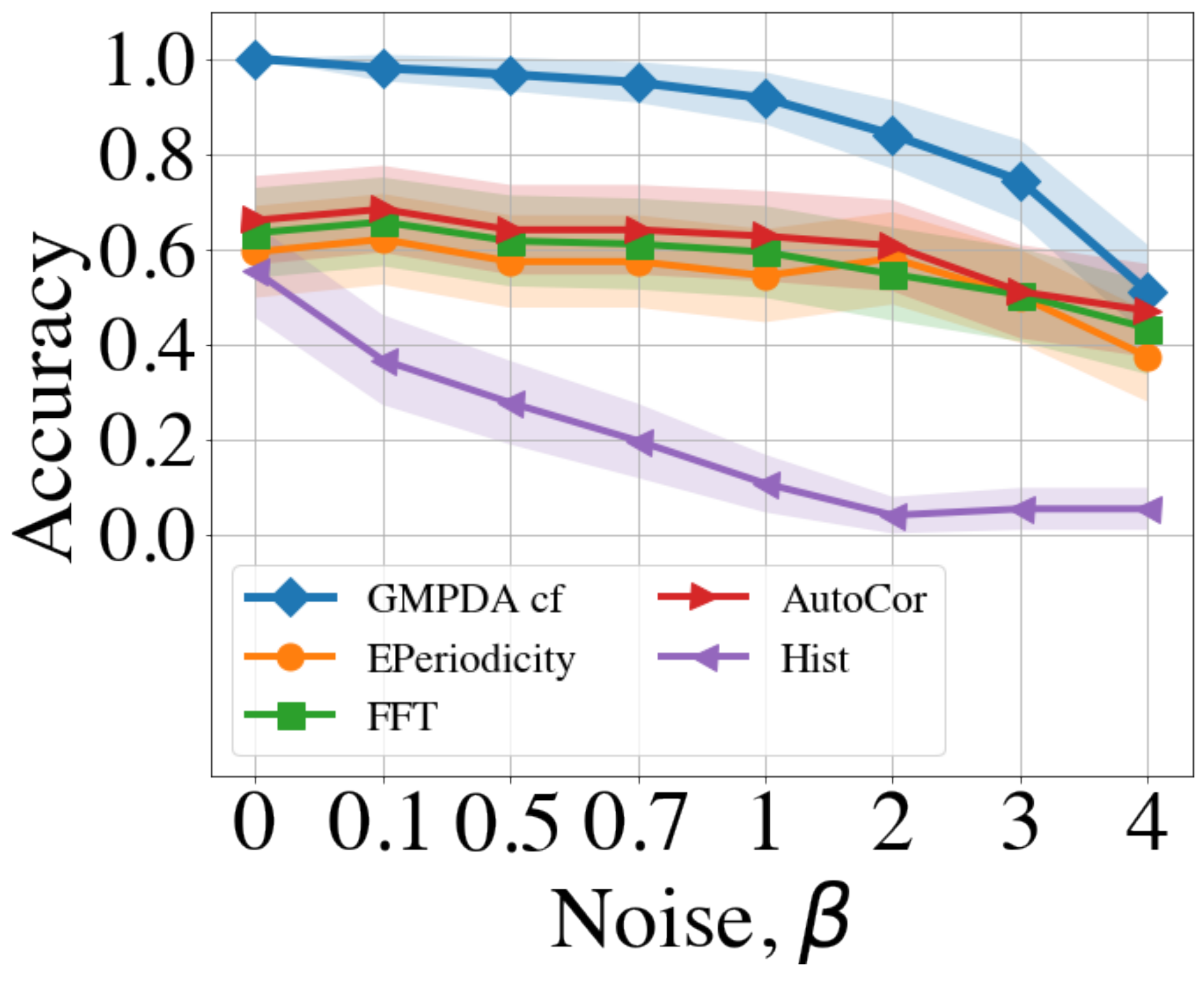}}}
    \caption{Random Walk Model Performance w.r.t. $\beta$, averaged over $\sigma$, $|\mu|=1$.}
    \label{fig:perf_cl_noise_cf_ncf_app_rw}
\end{figure}
\begin{figure}[h]
    \centering
    \subfloat[n=10]{{\includegraphics[width=0.23\textwidth]{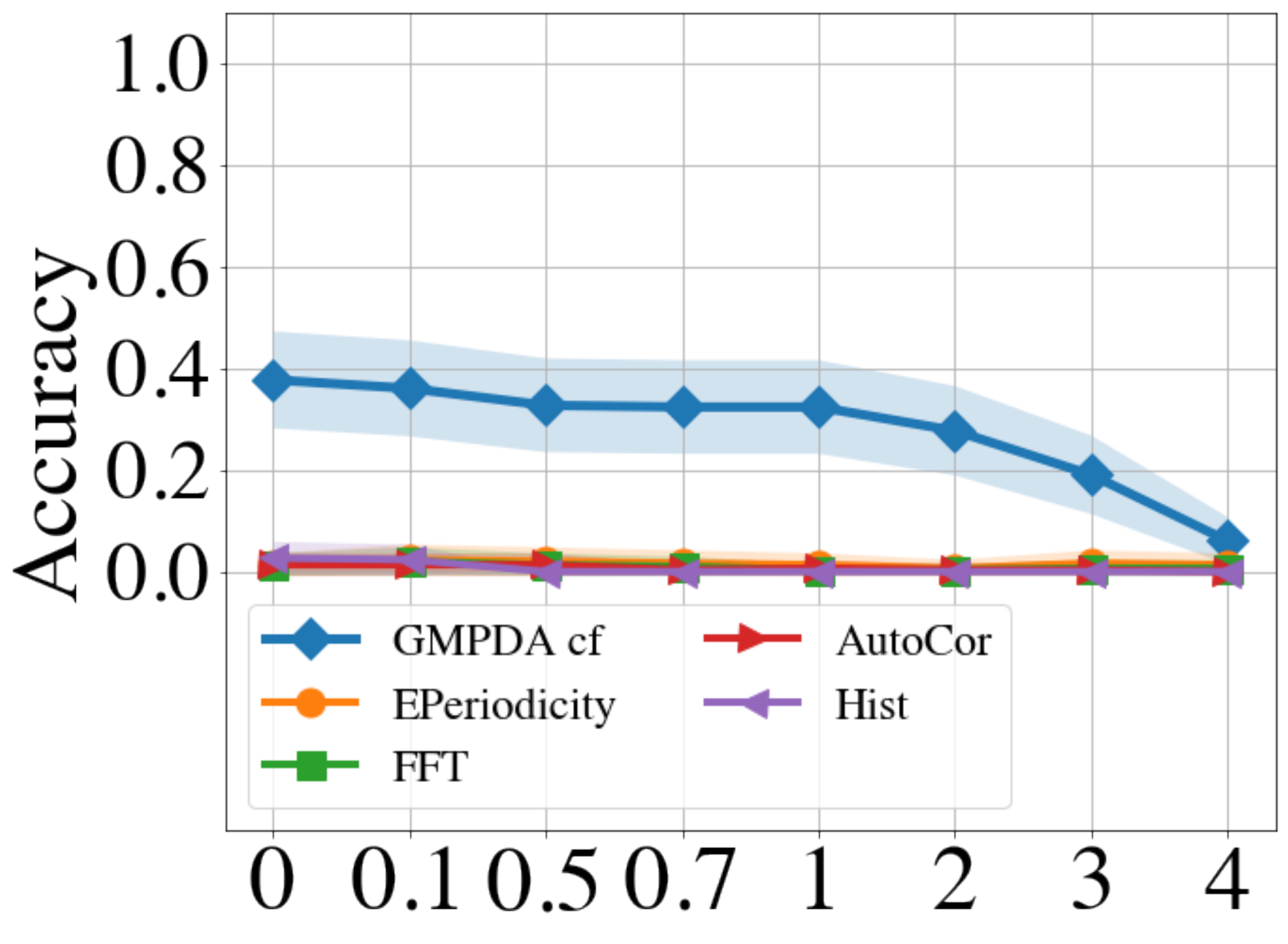}}}
    \subfloat[n=30]{{\includegraphics[width=0.23\textwidth]{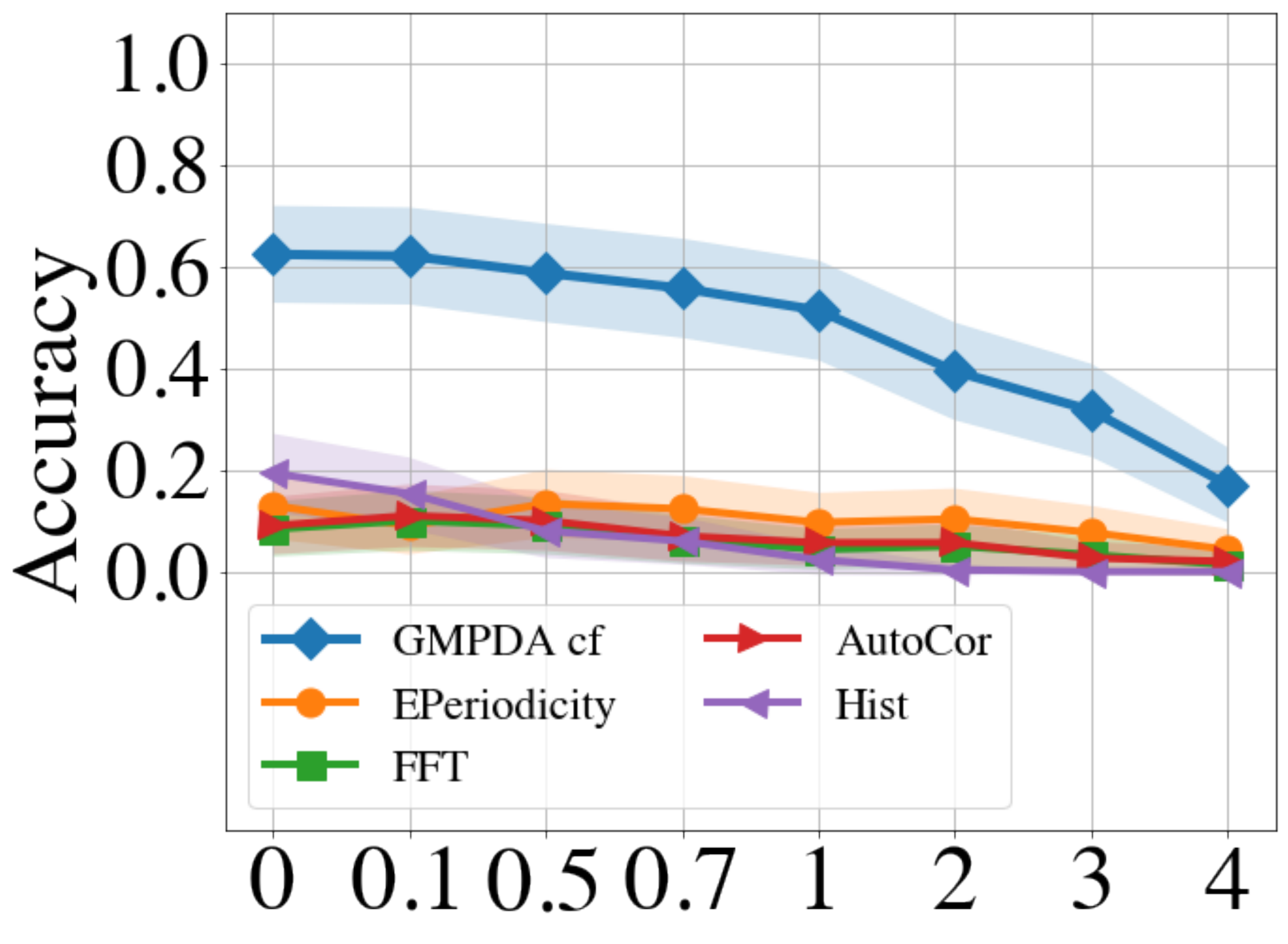}}}
    \newline
    \subfloat[n=50]{{\includegraphics[width=0.23\textwidth]{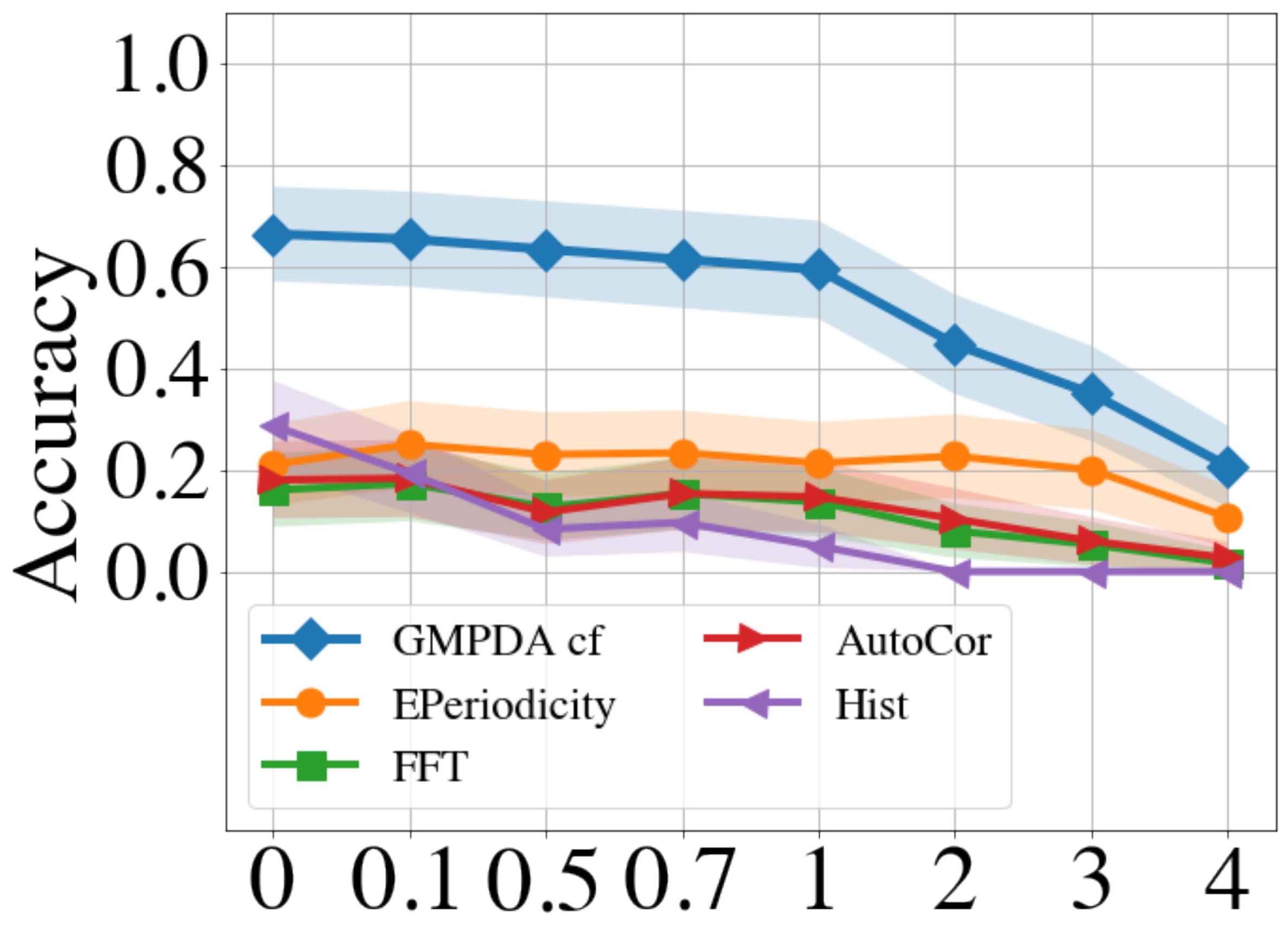}}}
    \subfloat[n=100]{{\includegraphics[width=0.23\textwidth]{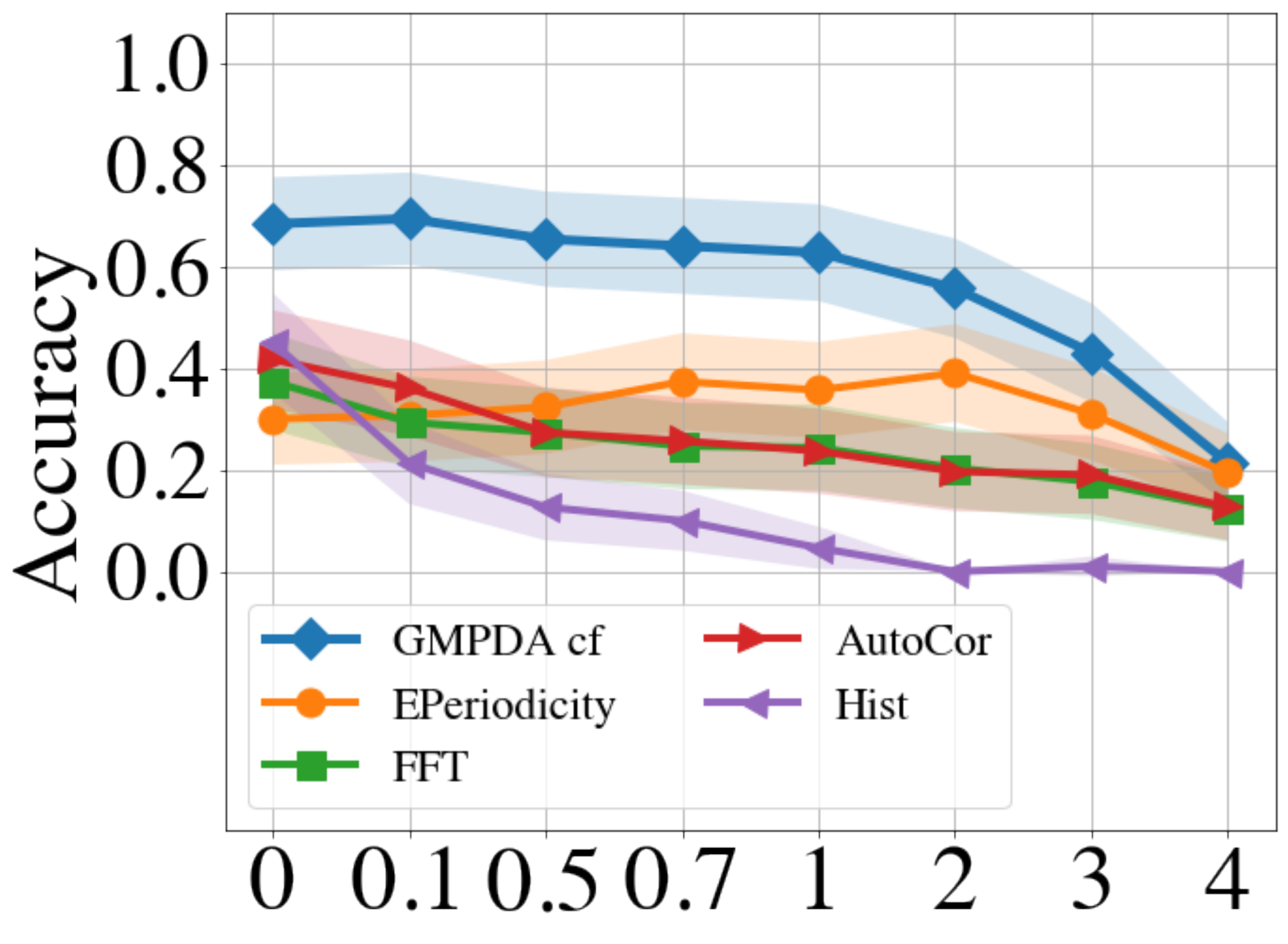}}}
    \newline
    \subfloat[n=300]{{\includegraphics[width=0.23\textwidth]{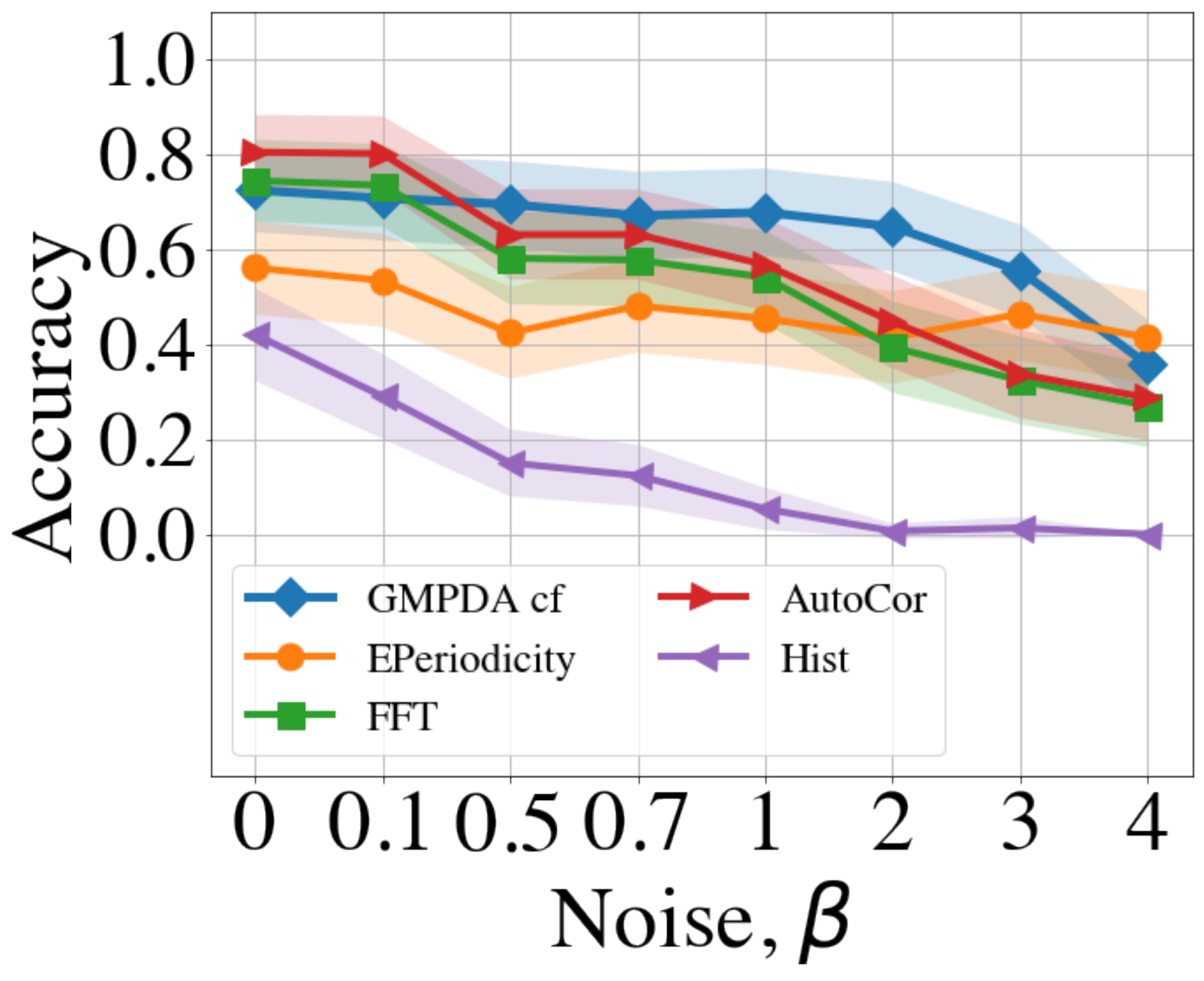}}}
    \subfloat[n=500]{{\includegraphics[width=0.23\textwidth]{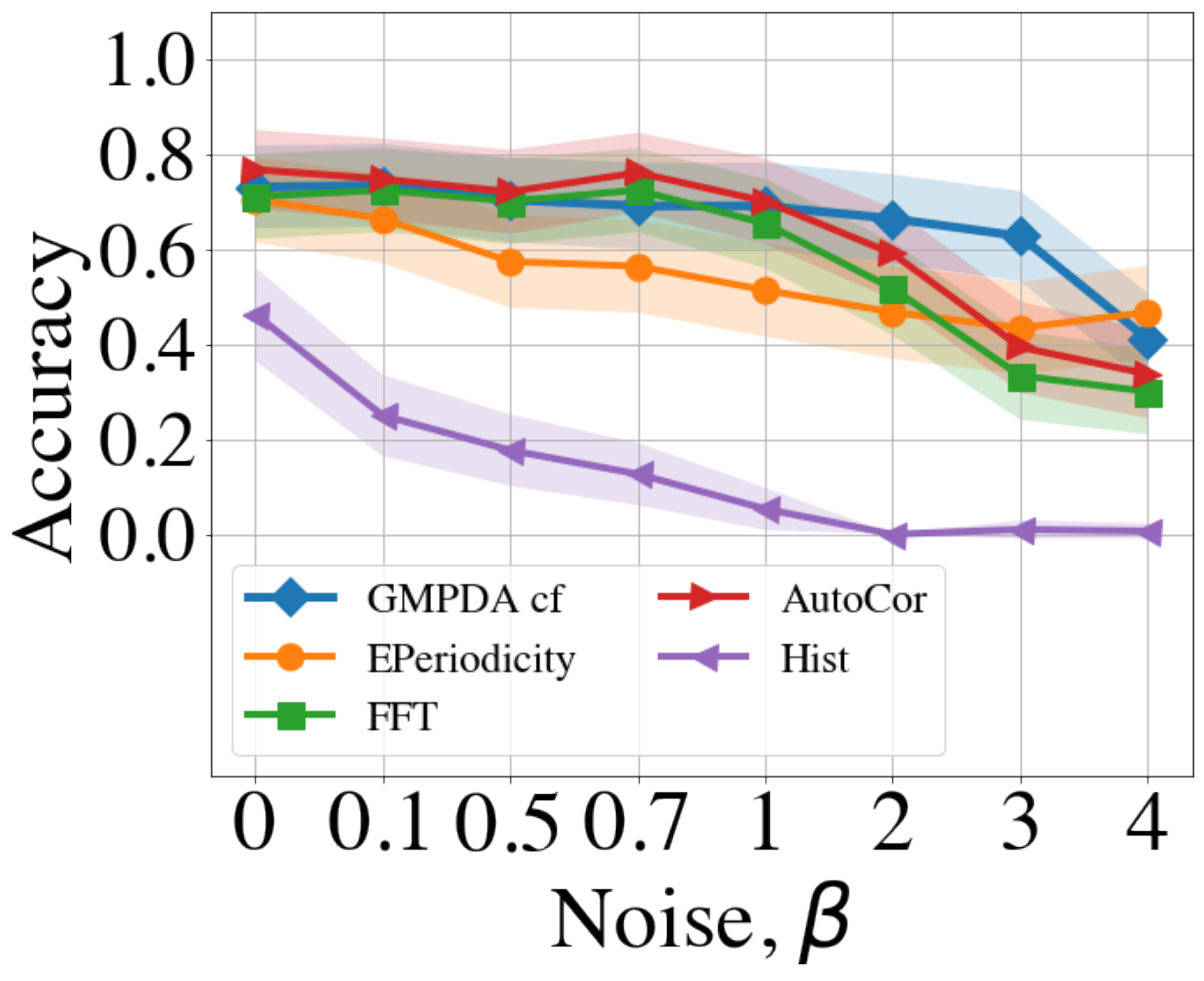}}}
    \caption{Clock Model Performance w.r.t. $\beta$, averaged over $\sigma$, $|\mu|=1$.}
    \label{fig:perf_cl_noise_cf_ncf_app_clock}
\end{figure}
\subsection{Comparison to alternative Methods wrt. Variance $\sigma$}
\label{app:variance_sigma}
Here the performance of GMPDA and of the alternative methods is compared regarding an increase in variance.
In the following figures the accuracy of all the involved methods is plotted for $|\mu=1|$, different number of events $n$, while it was averaged over all considered values of noise $\beta$.
\begin{figure}[h]
    \centering
    \subfloat[n=10]{{\includegraphics[width=0.23\textwidth]{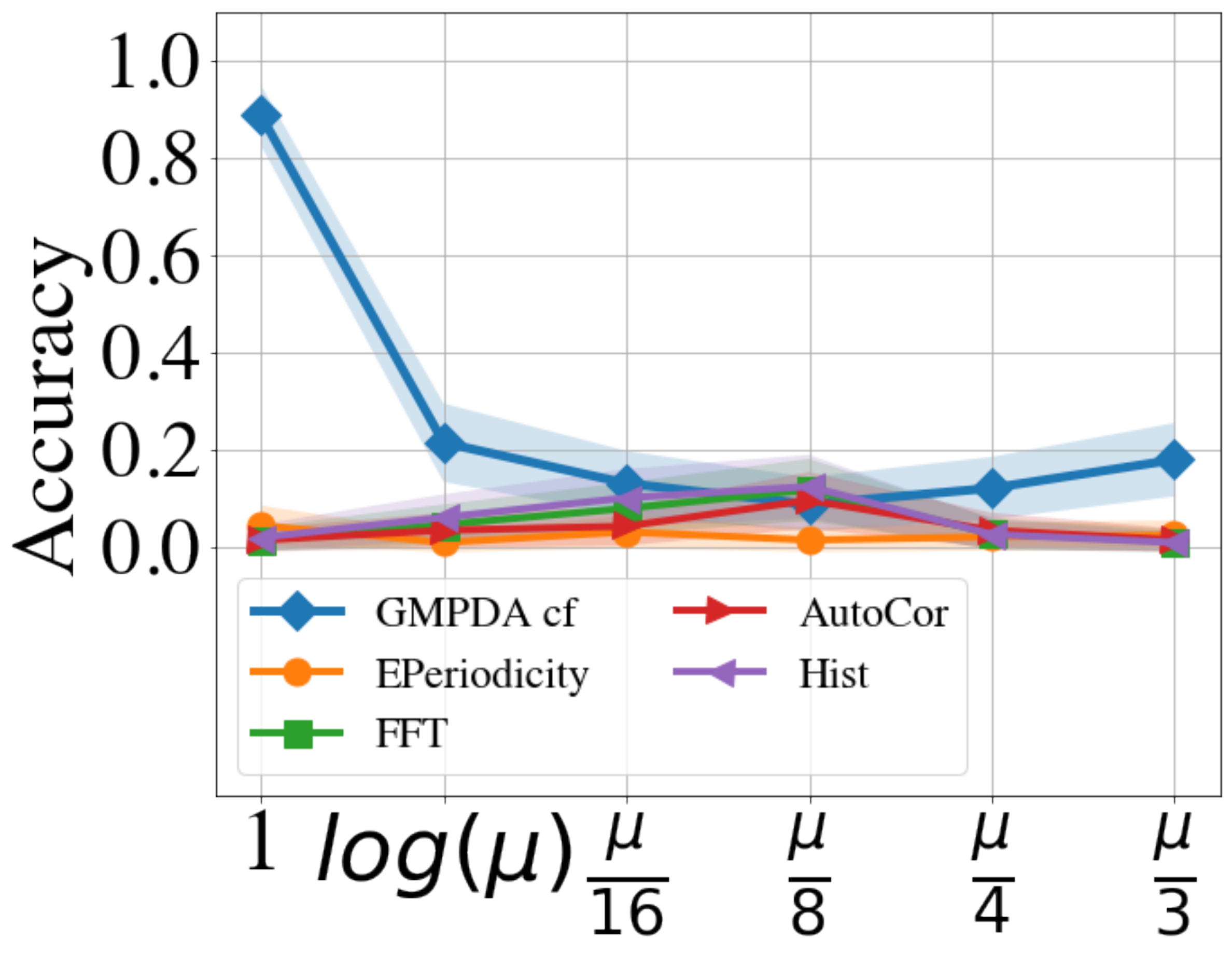}}}
    \subfloat[n=30]{{\includegraphics[width=0.23\textwidth]{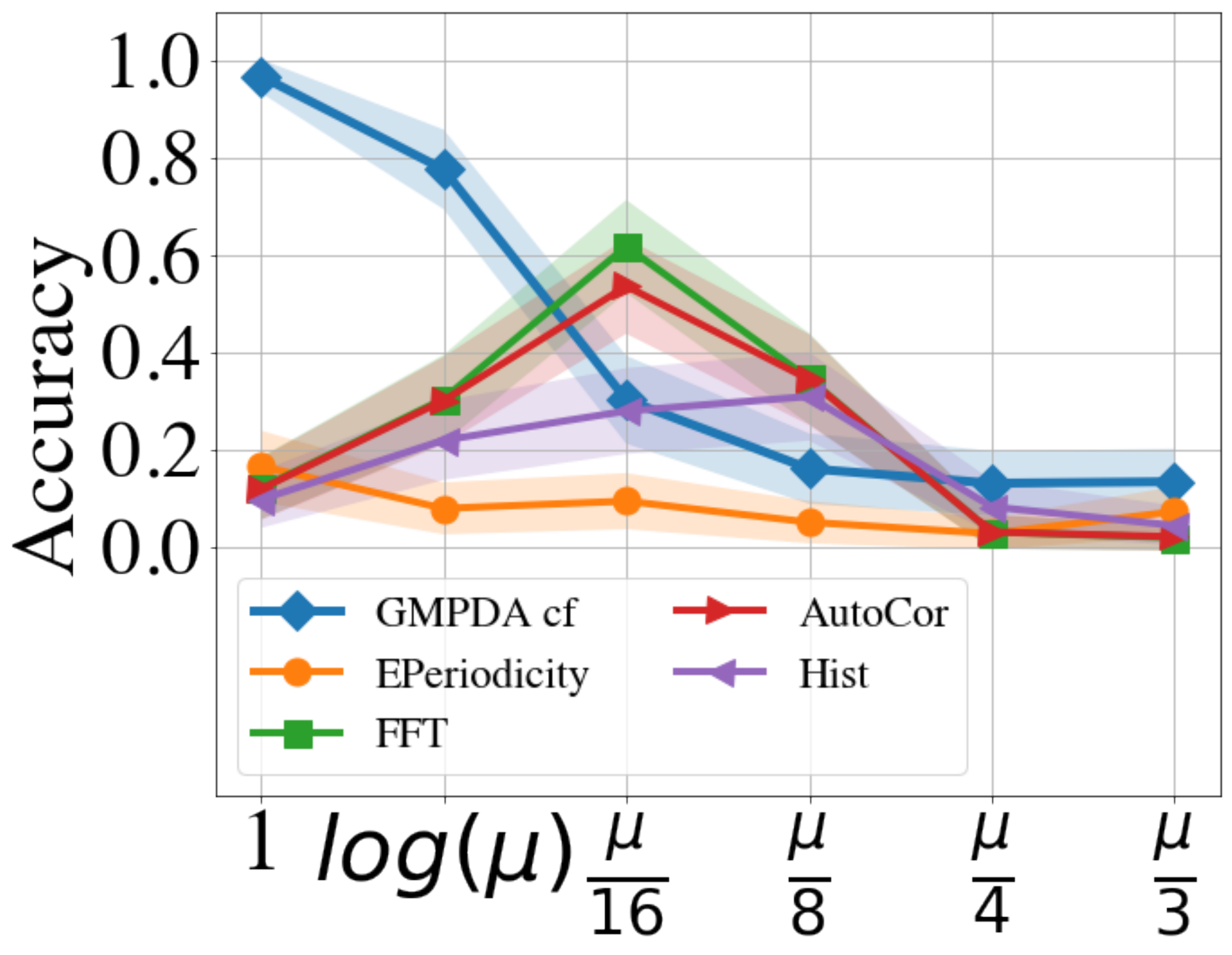}}}
    \newline
    \subfloat[n=50]{{\includegraphics[width=0.23\textwidth]{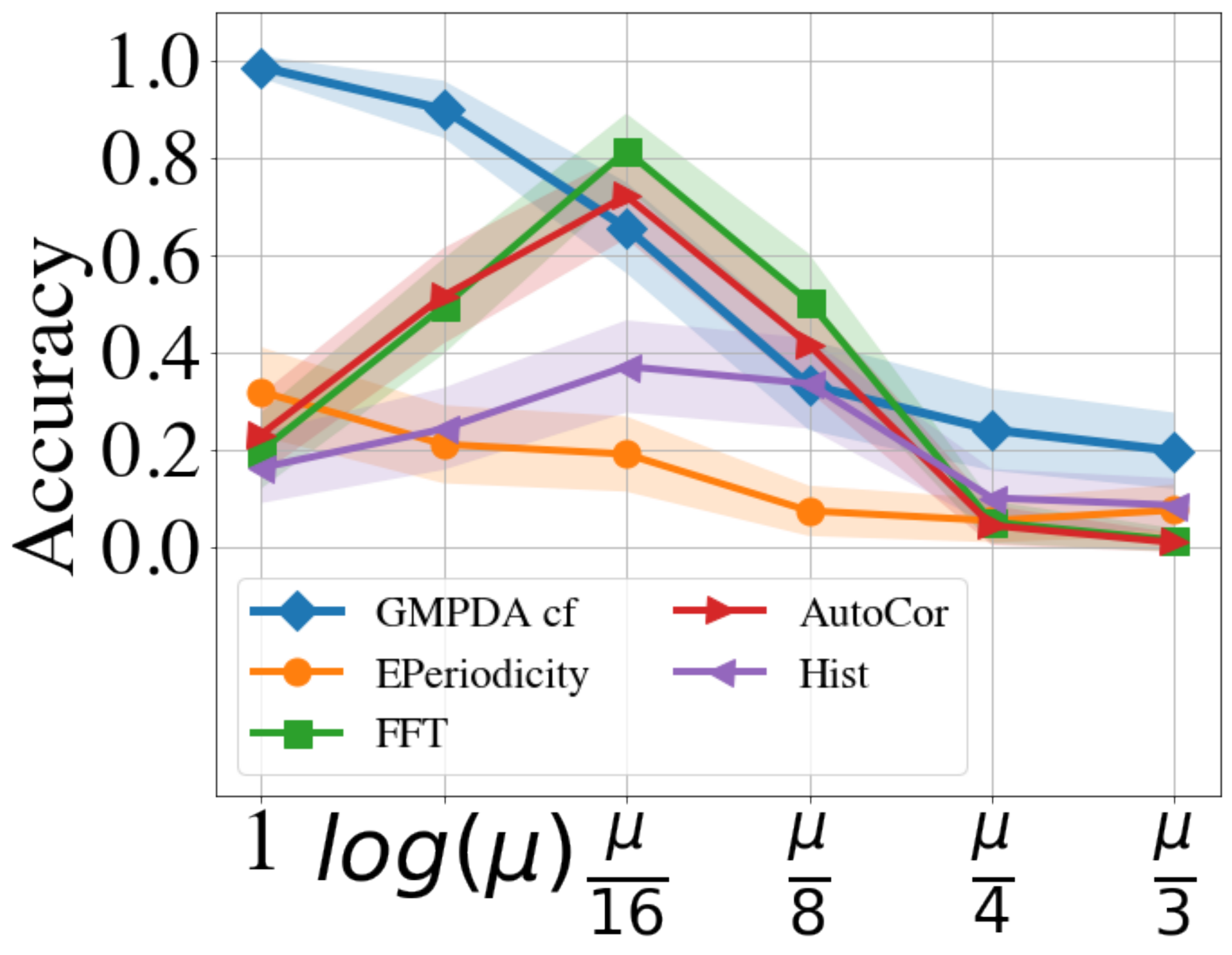}}}
    \subfloat[n=100]{{\includegraphics[width=0.23\textwidth]{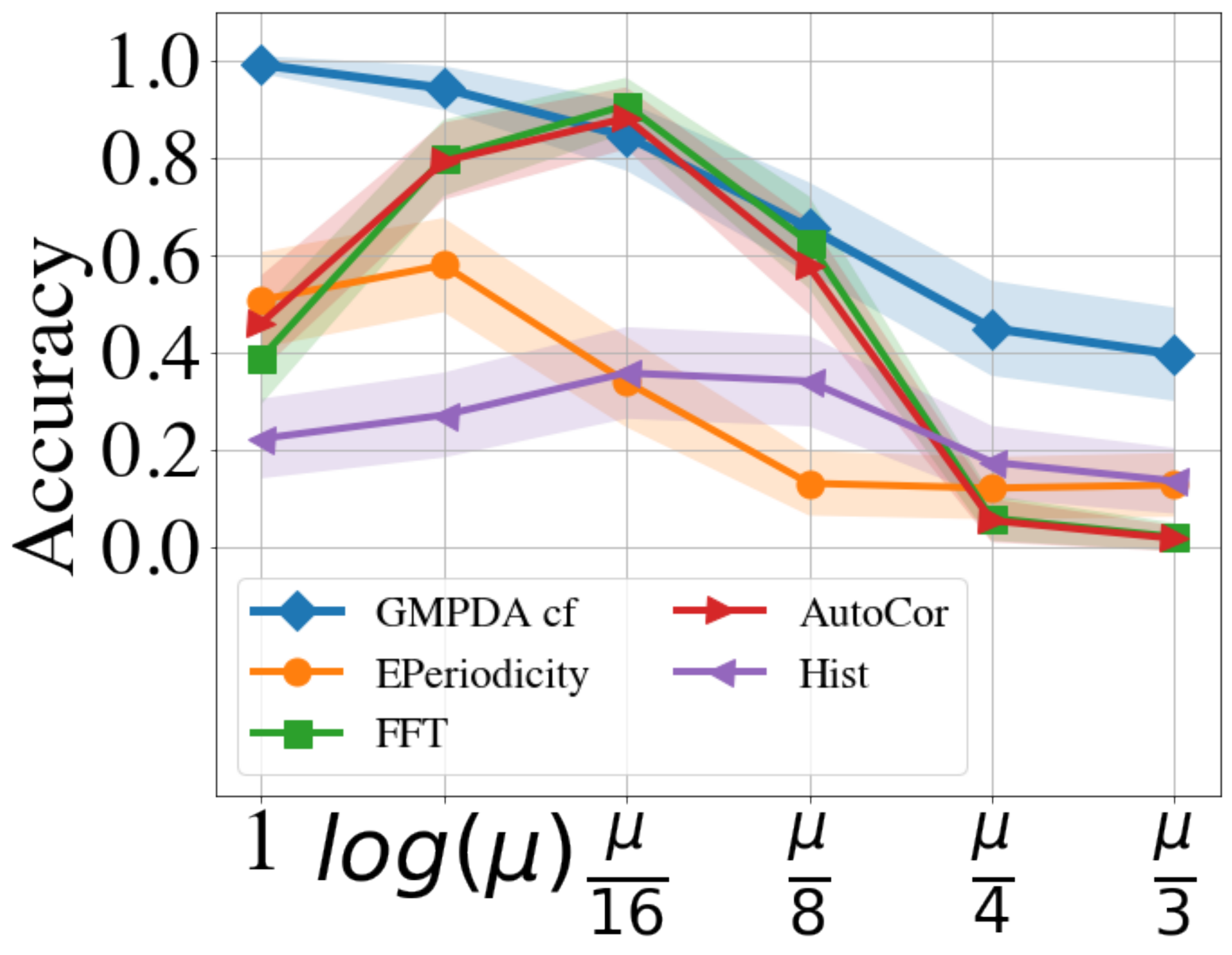}}}
    \newline
    \subfloat[n=300]{{\includegraphics[width=0.23\textwidth]{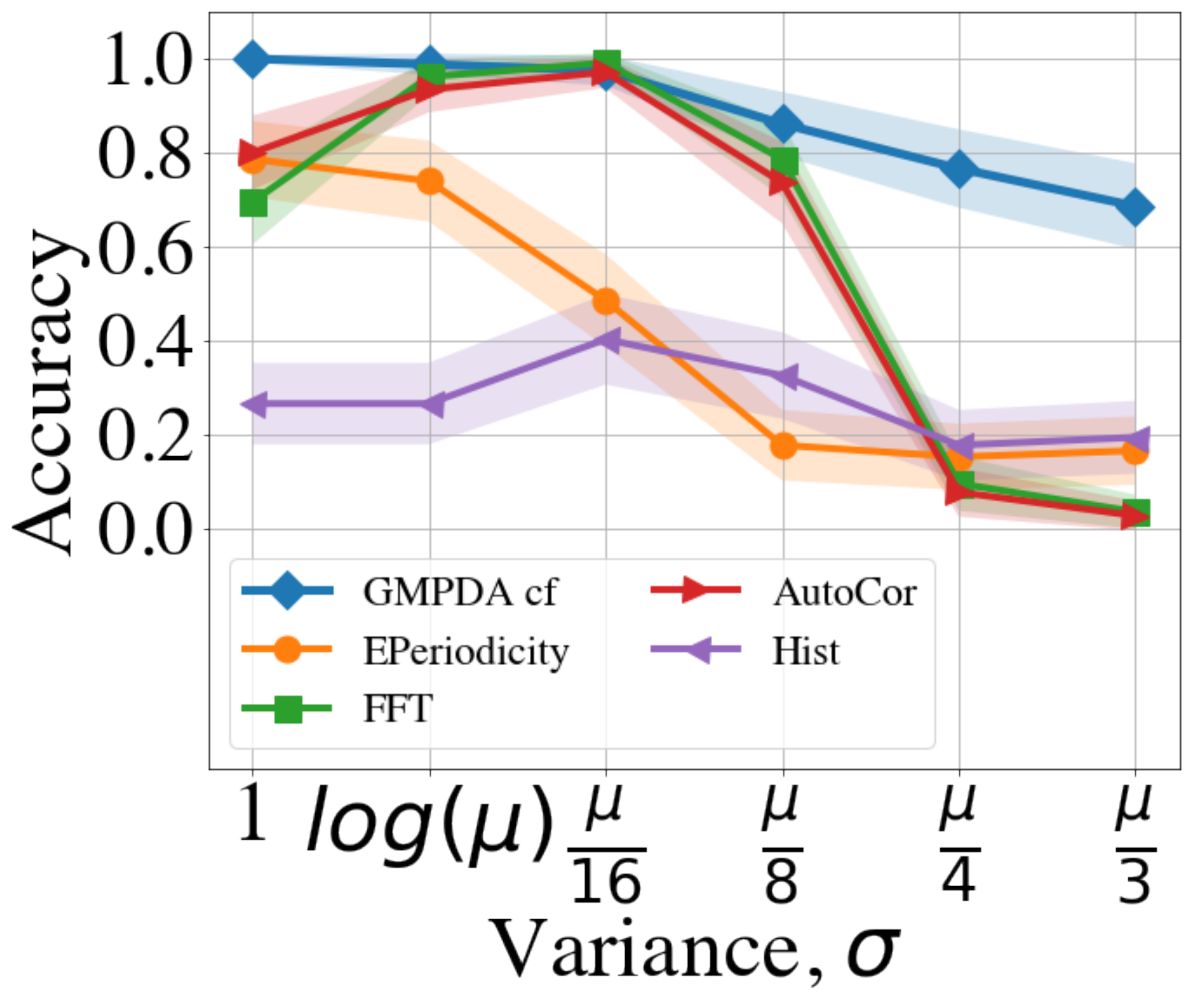}}}
    \subfloat[n=500]{{\includegraphics[width=0.23\textwidth]{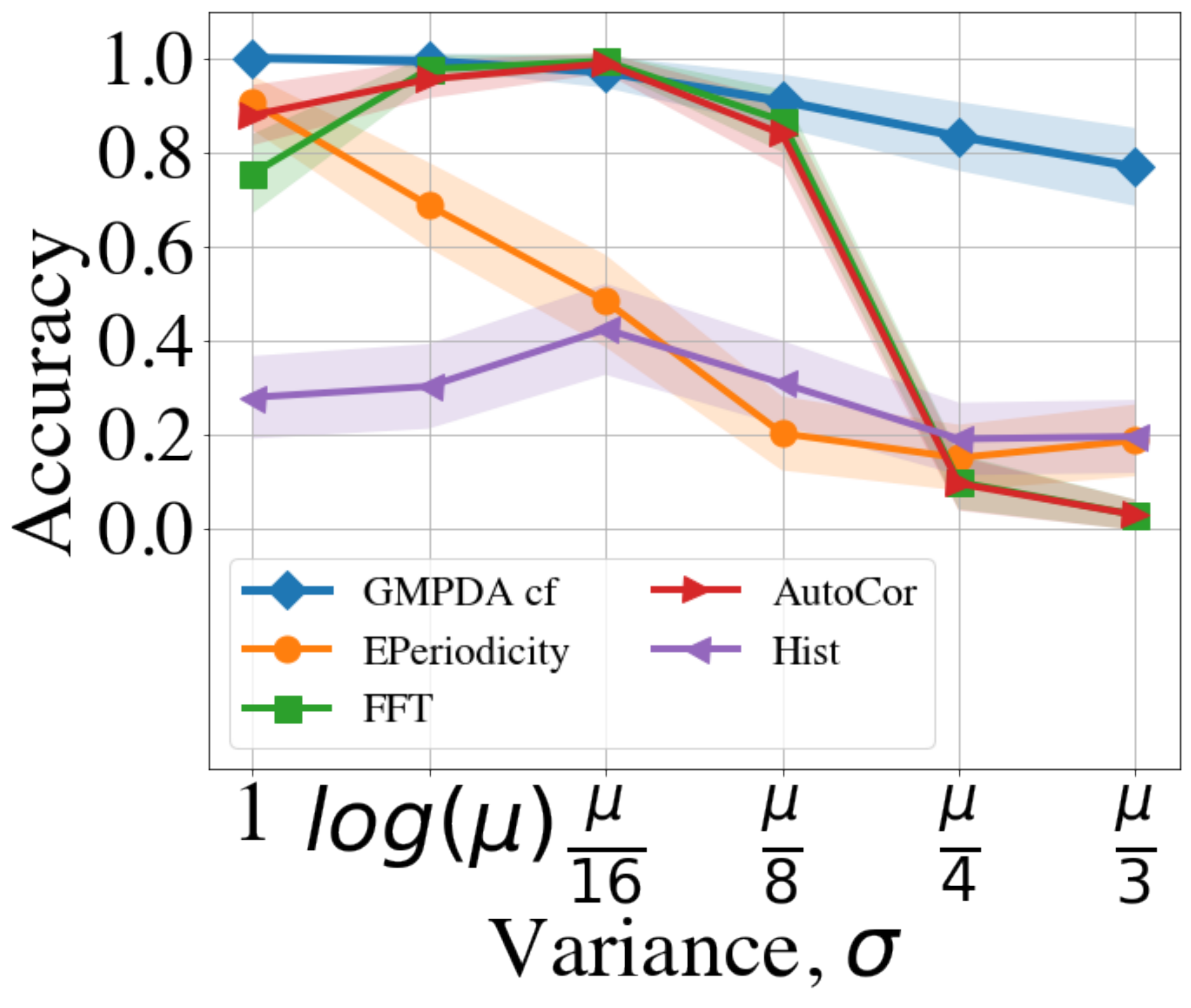}}}
    \caption{Random Walk Model Performance w.r.t. $\sigma$, averaged over noise, $|\mu|=1$.}
    \label{fig:perf_cl_sigma_cf_ncf_app_rw}
\end{figure}
\begin{figure}[ht]
    \centering
  \subfloat[n=10]{{\includegraphics[width=0.23\textwidth]{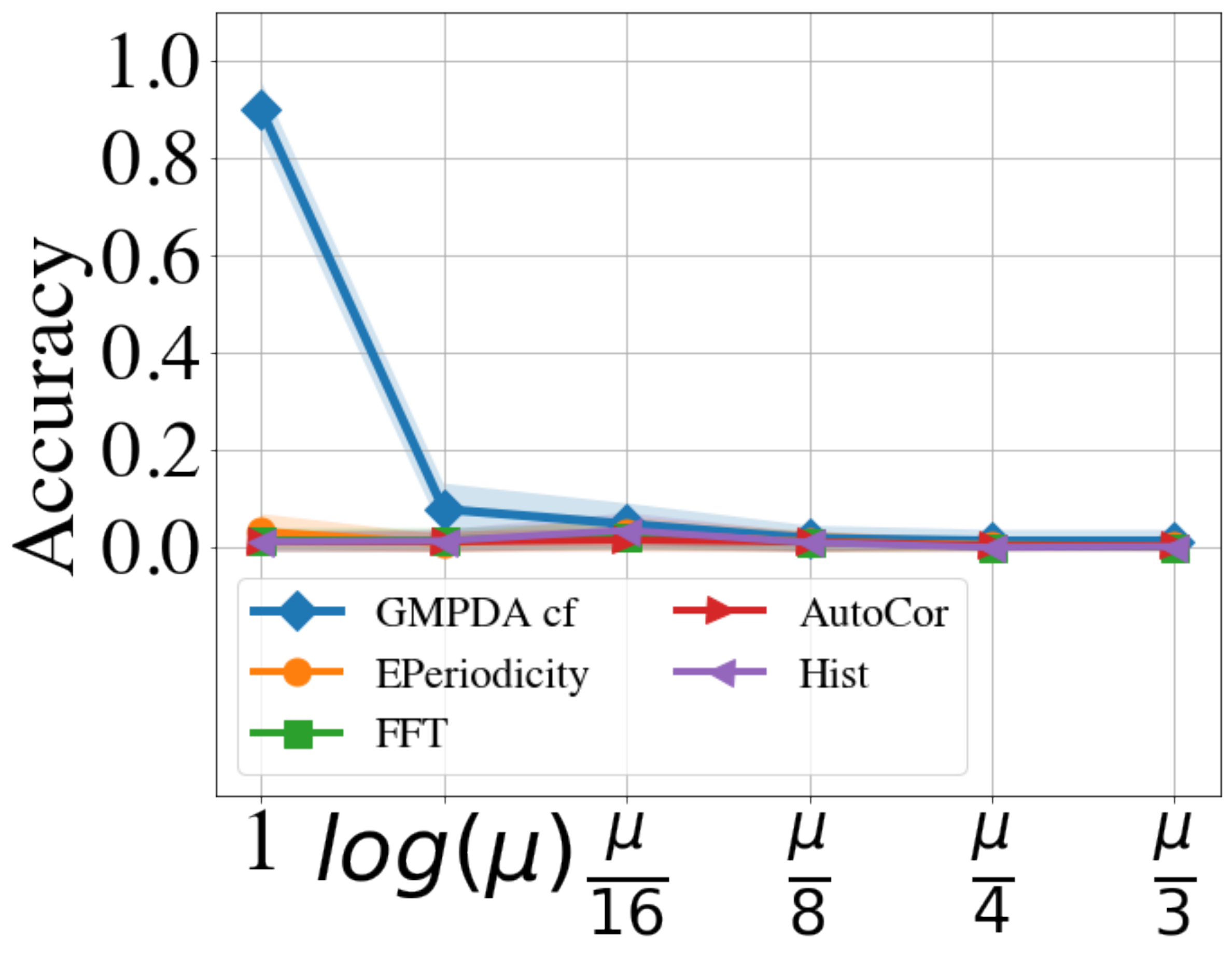}}}
    \subfloat[n=30]{{\includegraphics[width=0.23\textwidth]{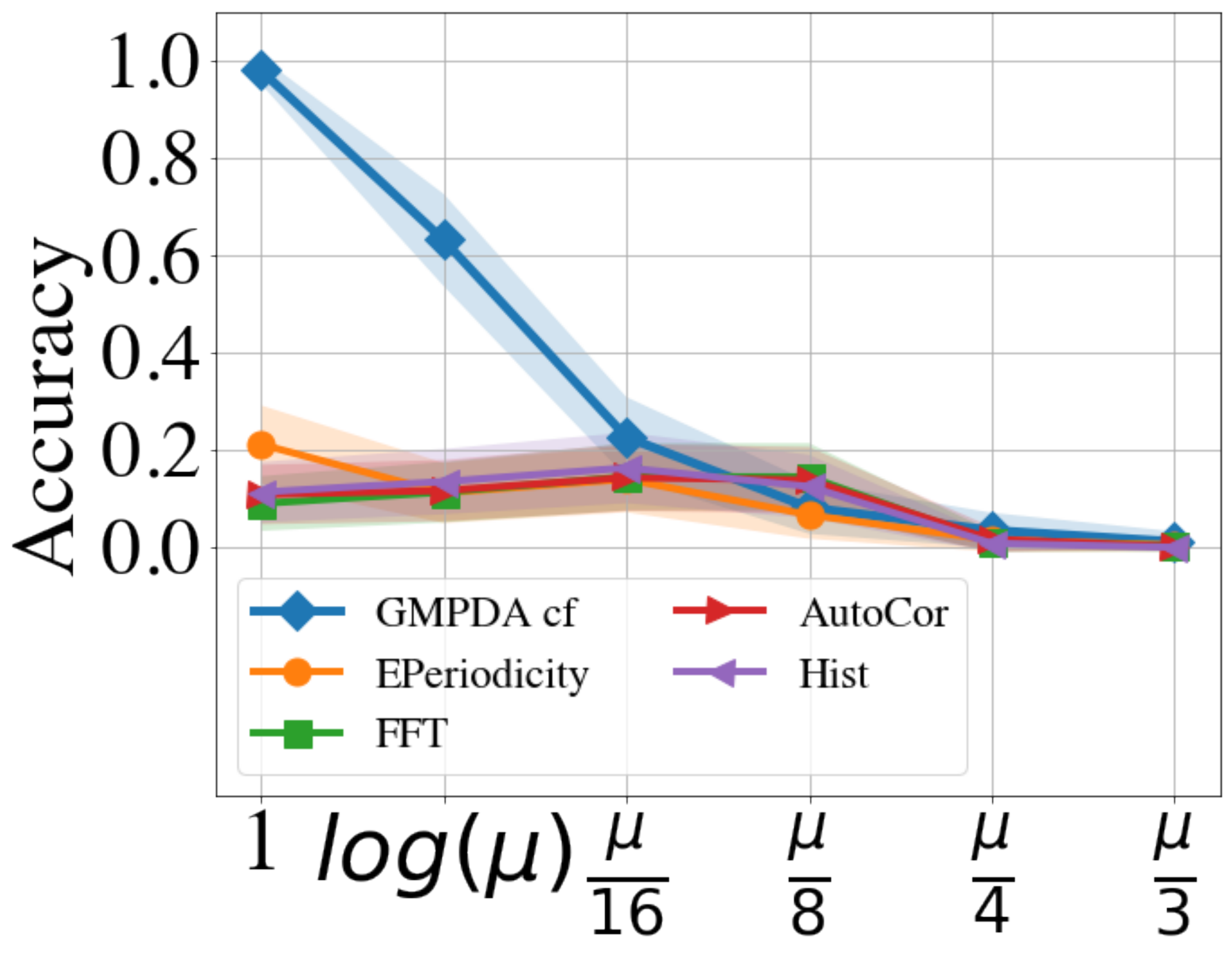}}}
    \newline
    \subfloat[n=50]{{\includegraphics[width=0.23\textwidth]{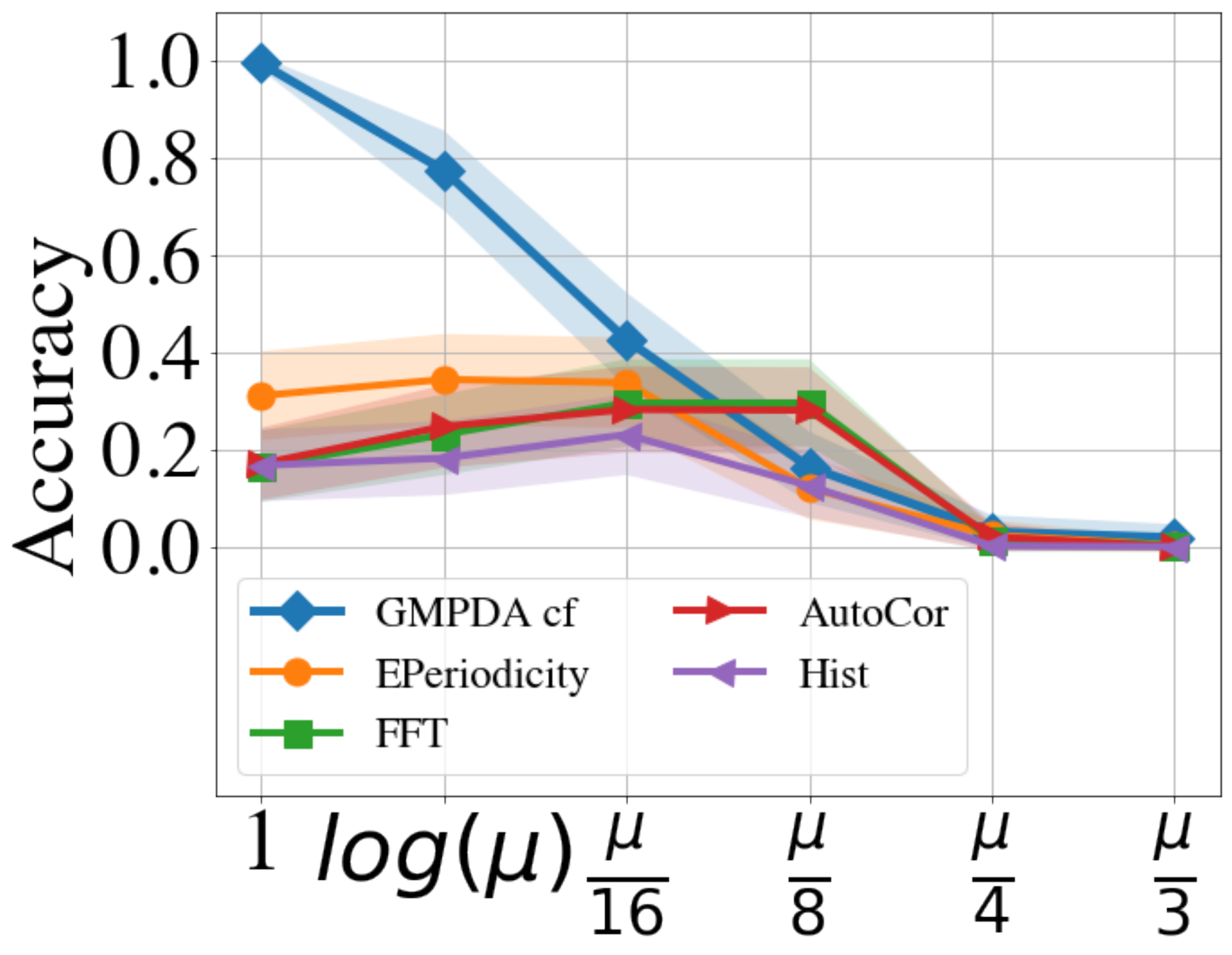}}}
    \subfloat[n=100]{{\includegraphics[width=0.23\textwidth]{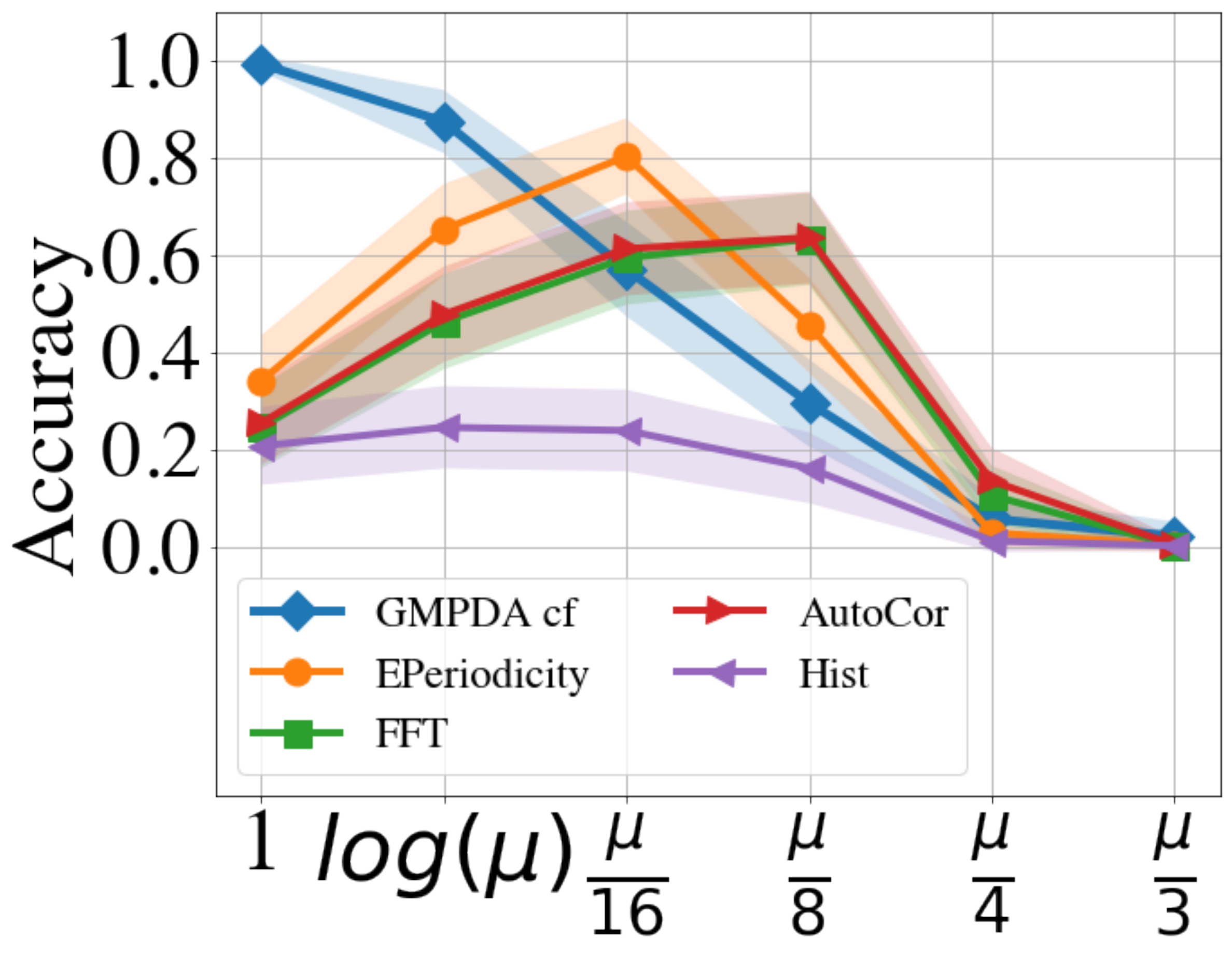}}}
    \newline
    \subfloat[n=300]{{\includegraphics[width=0.23\textwidth]{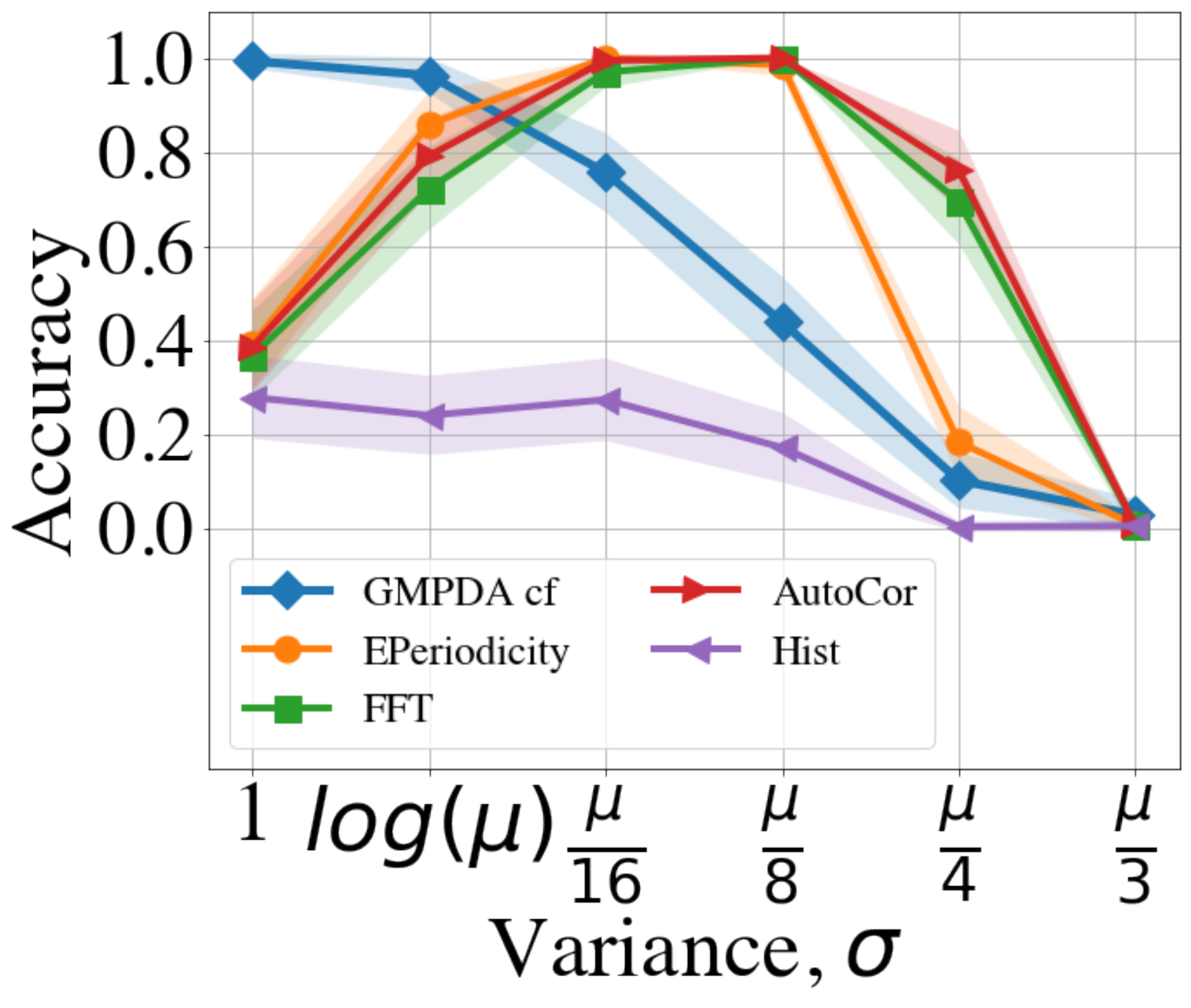}}}
    \subfloat[n=500]{{\includegraphics[width=0.23\textwidth]{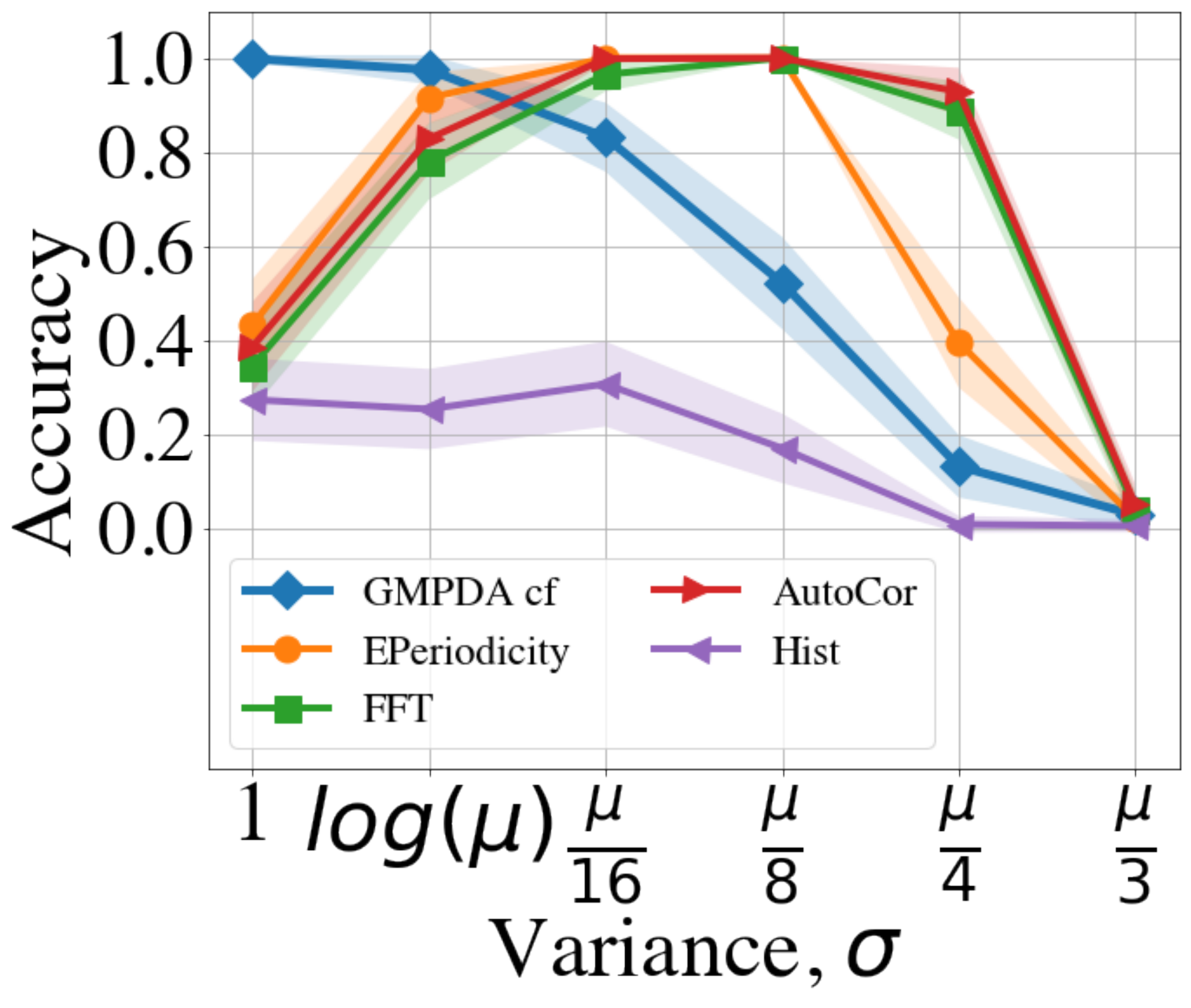}}}
    \caption{Clock Model Performance w.r.t. $\sigma$, averaged over noise, $|\mu|=1$.}
    \label{fig:perf_cl_sigma_cf_ncf_app_noise}
\end{figure}

\subsection{Sensitivity Analysis}
\label{app:sensitivity_ana}
We summarized differences in sensitivity to critical simulation parameters across the different algorithms in Table~\ref{tab:sensAna}. Based on the simulation results obtained in Section~\ref{sec:test_cases}, we used generalized linear mixed models with number of periods $|\mu|=1,2,3$ nested within trials ($n=38400$) with the response being the accurate detection of a single periodicity (coded as 1 if the estimate is within an intervals around the true value $\pm\sigma$) and the independent factors being
\begin{itemize}
    \item number of events $n=10, 30, 50, 100, 300,500$, 
    \item number of periods $|\mu|=1,2,3$,
    \item variance $\sigma=1, \log(\mu), \frac{\mu}{p}, \text{ with } p= 3, 8, 16$,
    \item noise $\beta=0, 0.1, 0.5, 0.7, 1, 2, 4, 8$.
\end{itemize}
Mixed logistic models were computed separately for the clock model and the random walk model and each algorithm. Table~\ref{tab:sensAna} lists the Anova type II sum of squares (SoS), i.e. the SoS of each main effect after the introduction of all other main effects. While the SoS are not directly comparable between models, their relative contribution is and suggests that for the Random Walk Model the number of events had a major effect in all algorithms except the FFT histogram algorithm. The number of periods had a small to moderate effect except for the E-Periodicity where it did not play a role. Both E-Periodicity and GMPDA with curve-fitting were very sensitive to the noise level and across algorithms variations in sigma had one of the strongest effects on accuracy, again with the exception of the E-Periodicity algorithm.

The results for the Clock Model were largely similar with some notable exceptions. Compared to the Random Walk Models, the E-Periodicity algorithm was considerably less sensitive to variations in noise but more sensitive to variations in variance. Overall, the three algorithms E-Periodicity, FFT, and FFT auto-correlation showed a similar patter, with the number of events having the strongest influence, with sigma being the second strongest, and noise and number of periods having only relatively minor effects. For the two GMPDA algorithms the strongest effect was seen for sigma.

Across all models and algorithms, both sigma and the number events emerged as the strongest determinants of periodicity detection accuracy.
\begin{table}[ht]
\resizebox{0.5\textwidth}{!}{
\begin{tabular}{lrrrr}
\hline
                            & \multicolumn{1}{c}{\begin{tabular}[c]{@{}c@{}}Number of\\ Events\end{tabular}} & \multicolumn{1}{c}{\begin{tabular}[c]{@{}c@{}}Number of\\ Periods\end{tabular}} & \multicolumn{1}{c}{Noise} & \multicolumn{1}{c}{\begin{tabular}[c]{@{}c@{}}Sigma\\ ratio\end{tabular}} \\
\multicolumn{1}{r}{Df}      & \multicolumn{1}{c}{5}                                                          & \multicolumn{1}{c}{2}                                                           & \multicolumn{1}{c}{7}     & \multicolumn{1}{c}{5}                                                     \\ \hline
\textbf{Random Walk Models} & \multicolumn{1}{l}{}                                                           & \multicolumn{1}{l}{}                                                            & \multicolumn{1}{l}{}      & \multicolumn{1}{l}{}                                                      \\
GMPDA with curve-fitting     & 5110                                                                           & 3746                                                                            & 5096                      & 4756                                                                      \\
GMPDA w/o curve-fitting      & 3349                                                                           & 1821                                                                            & 1291                      & 7830                                                                      \\
E-periodicity               & 3218                                                                           & 10                                                                              & 5221                      & 438                                                                       \\
FFT                         & 5519                                                                           & 1420                                                                            & 1359                      & 6687                                                                      \\
FFT Autocorrelation         & 5506                                                                           & 1252                                                                            & 3058                      & 6013                                                                      \\
FFT Histogram               & 619                                                                            & 1159                                                                            & 958                       & 2994                                                                      \\
                            & \multicolumn{1}{l}{}                                                           & \multicolumn{1}{l}{}                                                            & \multicolumn{1}{l}{}      & \multicolumn{1}{l}{}                                                      \\
\textbf{Clock Models}       & \multicolumn{1}{l}{}                                                           & \multicolumn{1}{l}{}                                                            & \multicolumn{1}{l}{}      & \multicolumn{1}{l}{}                                                      \\
GMPDA with curve-fitting     & 2714                                                                           & 2377                                                                            & 3186                      & 7399                                                                      \\
GMPDA w/o curve-fitting      & 2089                                                                           & 454                                                                             & 1210                      & 6806                                                                      \\
E-periodicity               & 7659                                                                           & 280                                                                             & 651                       & 4707                                                                      \\
FFT                         & 8520                                                                           & 157                                                                             & 1522                      & 5216                                                                      \\
FFT Autocorrelation         & 8480                                                                           & 87                                                                              & 1283                      & 5689                                                                      \\
FFT Histogram               & 438                                                                            & 211                                                                             & 2829                      & 1035                                                                      \\ \hline
\end{tabular}
}
\caption{Differences in sensitivity to critical simulation parameters across the different algorithms.}
\label{tab:sensAna}
\end{table}
\subsection{Real Application: Loss}
\label{app:MrOS_loss}
This section shows the GMPDA loss obtained with and without curve fitting for MROS dataset, Figure~\ref{fig:MROS_loss_cf_vs_ncf.pdf} shows the loss for 100 recordings, while Figure~\ref{fig:MROS_loss_cf_vs_ncf_bouts.pdf} shows the loss comparison for the bouts.%
\begin{figure}[h]
    \centering
    \includegraphics[width=0.23\textwidth]{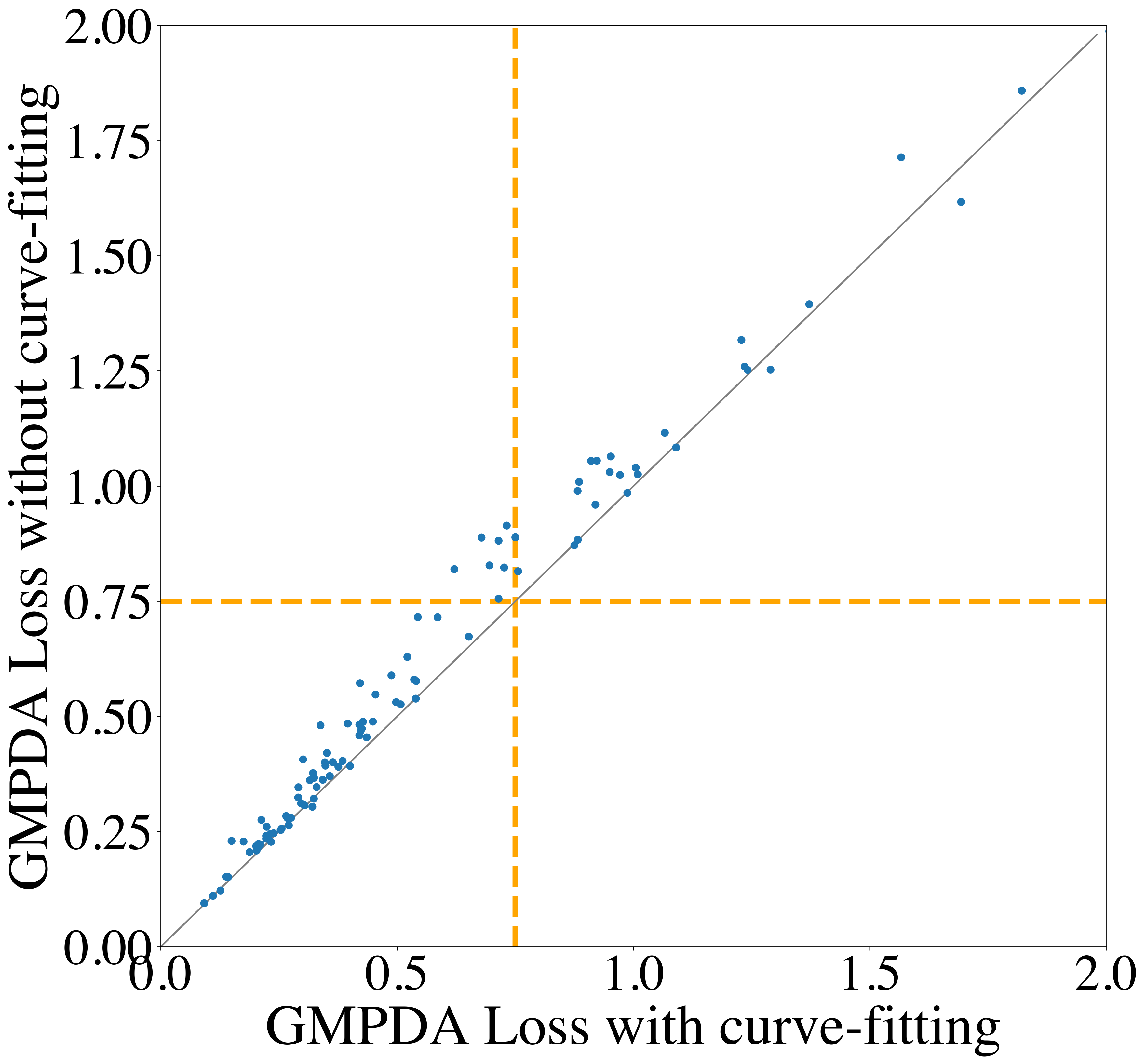}
    \caption{Comparison of the GMPD loss with and without curve fitting for individuals.}
    \label{fig:MROS_loss_cf_vs_ncf.pdf}
\end{figure}
\begin{figure}[h]
    \centering
    \includegraphics[width=0.23\textwidth]{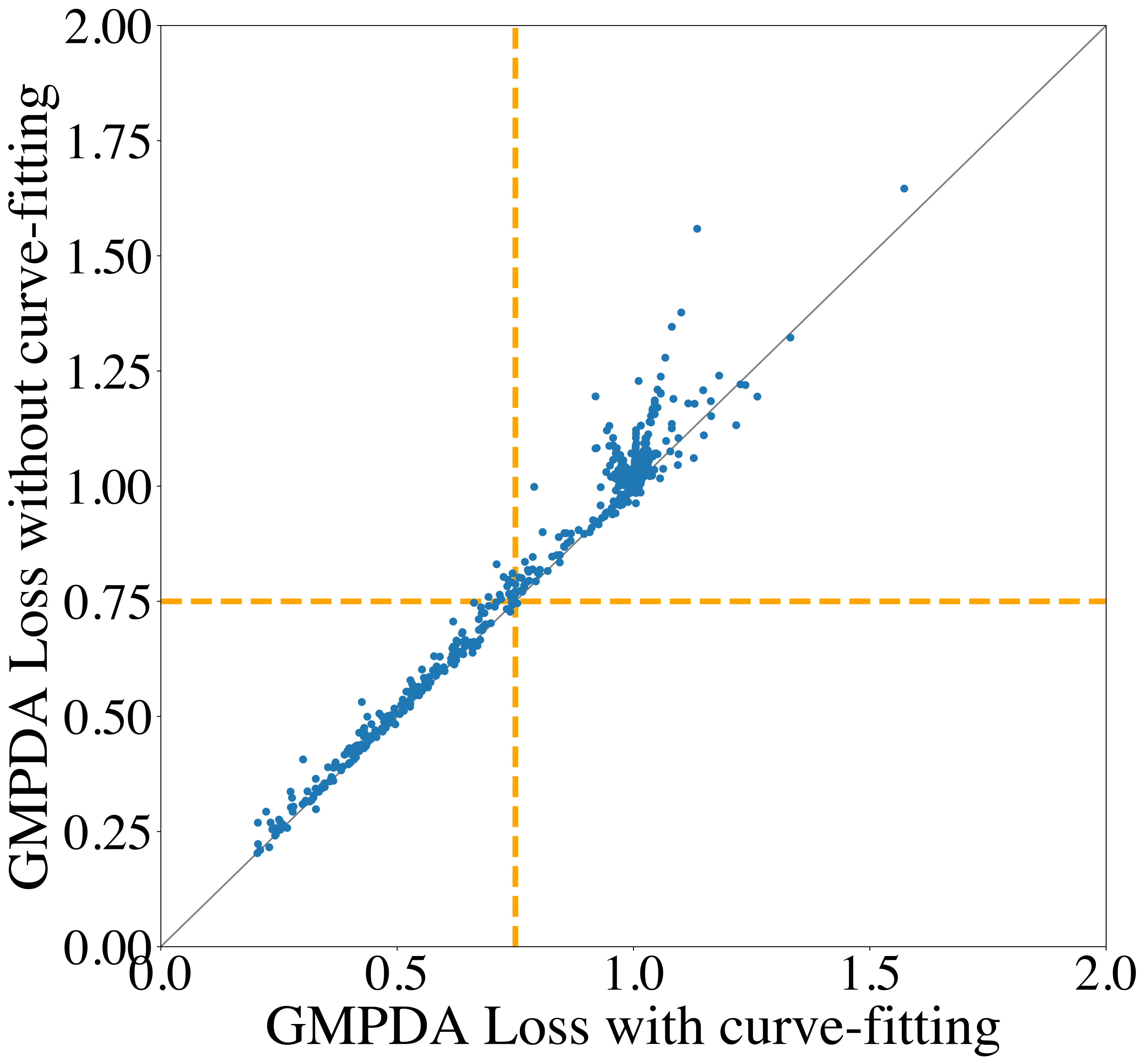}
    \caption{Comparison of the GMPD loss with and without curve fitting for bouts.}
    \label{fig:MROS_loss_cf_vs_ncf_bouts.pdf}
\end{figure}

\ifCLASSOPTIONcompsoc
  \section*{Acknowledgments}
\else
  \section*{Acknowledgment}
\fi

Dr. Fulda is supported by Swiss National Science Foundation (SNSF) grants No. $320030\_160009$ and $320030\_179194$.

\ifCLASSOPTIONcaptionsoff
  \newpage
\fi



\bibliographystyle{IEEEtran}
\newpage
\bibliography{references.bib}

\end{document}